\newcommand{\Ac}{\mathcal{A}}
\newcommand{\Fc}{\mathcal{F}}
\newcommand{\Gc}{\mathcal{G}}
\newcommand{\Hc}{\mathcal{H}}
\newcommand{\Ic}{\mathcal{I}}
\newcommand{\Oc}{\mathcal{O}}
\newcommand{\Pc}{\mathcal{P}}
\newcommand{\Rc}{\mathcal{R}}
\newcommand{\Tc}{\mathcal{T}}
\newcommand{\Vc}{\mathcal{V}}
\newcommand{\Pv}{{\bf P}}
\newcommand{\kh}{{\hat{k}}}
\def\a{\alpha}
\def\b{\beta}
\def\e{\epsilon}
\def\eps{\epsilon}
\def\l{\lambda}
\newcommand{\Norm}{\mathcal{N}}
\def\textiid{i.i.d.\@\xspace}
\newcommand\iid{\ifmmode\text{ i.i.d. } \else \textiid \fi}
\newcommand{\ind}{\boldsymbol{1}}
\newtheorem{theorem}{Theorem}
\newtheorem{lemma}{Lemma}
\newtheorem{assumption}{Assumption}
\newtheorem{corollary}{Corollary}
\newtheorem{remark}{Remark}
\newtheorem{propo}{Proposition}
\begin{document}

\title{On Large-Scale Multiple Testing Over Networks: An Asymptotic Approach}

\author{\IEEEauthorblockN{Mehrdad Pournaderi,~\IEEEmembership{Student Member,~IEEE,} and Yu Xiang,~\IEEEmembership{Member,~IEEE}}\\
%\IEEEauthorblockA{University of Utah\\
                    %  50 Central Campus Dr 2110, Salt Lake City, USA\\
                    %  Email: \{mehrdad.pournaderi, yu.xiang\}@utah.edu}
%         % <-this % stops a space
\thanks{M.~Pournaderi and Y.~Xiang are with the Electrical and Computer Engineering Department, University of Utah, Salt Lake City, 
UT, 84112, USA (e-mail: \{m.pournaderi, yu.xiang\}@utah.edu)}% <-this % stops a space
% \thanks{Manuscript received April 19, 2021; revised August 16, 2021.}
}
% The paper headers
% \markboth{Journal of \LaTeX\ Class Files,~Vol.~14, No.~8, August~2021}%
% {Shell \MakeLowercase{\textit{et al.}}: A Sample Article Using IEEEtran.cls for IEEE Journals}

% \IEEEpubid{0000--0000/00\$00.00~\copyright~2021 IEEE}
% Remember, if you use this you must call \IEEEpubidadjcol in the second
% column for its text to clear the IEEEpubid mark.
\maketitle
\begin{abstract}
This work concerns developing communication- and computation-efficient methods for large-scale multiple testing over networks, which is of interest to many practical applications. We take an asymptotic approach and propose two methods, proportion-matching and greedy aggregation, tailored to distributed settings. The proportion-matching method achieves the global BH performance yet only requires a one-shot communication of the (estimated) proportion of true null hypotheses as well as the number of p-values at each node. By focusing on the asymptotic optimal power, we go beyond the BH procedure by providing an explicit characterization of the asymptotic optimal solution. This leads to the greedy aggregation method that effectively approximates the optimal rejection regions at each node, while computation efficiency comes from the greedy-type approach naturally. {Moreover, for both methods, we provide the rate of convergence for both the FDR and power.} Extensive numerical results over a variety of challenging settings are provided to support our theoretical findings.

\end{abstract}

\begin{IEEEkeywords}
Distributed multiple testing, asymptotic FDR control, communication-efficient methods, heterogeneity, greedy algorithm, asymptotic optimality.
\end{IEEEkeywords}

\section{Introduction}

Controlling the false discovery rate (FDR) has been a widely used approach for multiple testing problems, since the seminal work by Benjamini and Hochberg~\cite{benjamini1995}. 
The Benjamini-Hochberg (BH) procedure was initially proved to control the FDR under the assumption of independent test statistics; since then, there has been extensive research on the theoretical extensions of the BH procedure~\cite{benjamini2001control,storey2002direct,sarkar2002some,genovese2002operating,blanchard2008,ramdas2019unified, efron2012large} and its numerous applications~\cite{genovese2002thresholding,abramovich2006adapting,efron2012large,golz2022multiple}.

 Different from the classical distributed detection formulations~\cite{tenney1981,tsitsiklis1984,viswanathan1997,blum1997} for a \emph{global} null hypothesis, this work concerns the distributed inference problem under the \emph{FDR control}, where each node processes a large number of local hypotheses. For sensor network applications, by assuming a \emph{broadcast model} where each sensor can broadcast its decision to the entire network, several (iterative) distributed BH procedures have first been studied in~\cite{ermis2005adaptive,ray2007novel,ermis2009distributed,ray2011false}. Specifically, given that each sensor has only one p-value (as in~\cite{ermis2005adaptive,ray2007novel,ray2011false}) or needs to transform its local p-values to a scalar (as in~\cite{ermis2009distributed}), the proposed procedures only need each sensor to broadcast at most $1$-bit information. This broadcast model, however, is mainly suitable for small-scale networks (e.g., see~\cite[Section VIII]{ermis2009distributed} for discussions). More recently, the authors in~\cite{Ramdas2017b} have developed the QuTE algorithm, a distributed BH procedure for multi-hop networks with general graph structures, which requires each node to transmit its local p-values to all of its neighbor nodes. 
 A quantized version of QuTE has been developed in~\cite{xiang2019distributed}, and~\cite{Ramdas2022} shows that $m$ levels of quantization for {a total of $m$ p-values in the network} is sufficient for making the same rejections as the {QuTE procedure~\cite{Ramdas2017b}}.
The sample-and-forward method in~\cite{pournaderi2022sample} shows that a sampling operation at each node can reduce the communication cost of the QuTE algorithm to $\Oc(\log m)$ bits per node while preserving the provable FDR control and competitive statistical power. 

In this work, we take an \emph{asymptotic} perspective, as the total number of p-values $m$ approaches infinity, by focusing on star networks with a fixed number of nodes, namely, one center node and $N$ local nodes. 
To shed light on the finite-sample regime, we provide a detailed analysis of the rate of convergence for both FDR and power. The asymptotic regime has been used to develop methods beyond the traditional threshold procedures (including the BH procedure), which rejects the p-values ($\Pv = (P_1,..., P_m)$) that are less than a certain threshold $\tau(\Pv)$. It has been shown that a more general class of procedures called the interval procedures, which rejects the p-values that lie in one or multiple intervals of the form $[\tau_l(\Pv), \tau_u(\Pv)]$, can outperform the power of the BH procedure (such as the multiple reference points method~\cite{chi2007performance} and the scan procedure~\cite{ARIASCASTRO202142}). In the distributed setting and under fairly general conditions, all of these methods control the global FDR asymptotically when they are {performed locally in a network}. On the other hand, these methods can be applied to the pooled p-values (as in the centralized setting) in the network and result in a different decision rule and asymptotic performance. We refer to this centralized performance as the \emph{pooled performance}. One natural question is whether one can achieve the pooled performance of these methods in a communication-efficient manner. To address this question, we simplify the setting by assuming that the local distribution of the alternatives is fixed throughout the nodes and attempt to compensate for the mismatch between the global and local proportions of the true null hypotheses. This formulation leads to our \emph{proportion-matching} method,
directly motivated by a seminal approach via mixture models~\cite{genovese2002operating}, which provides a characterization of the cutt-off threshold as the number of p-values approaches infinity. The proportion-matching method only requires a one-shot communication of $2 \lceil\log_2 m^{(i)} \rceil$ bits from each node with $m^{(i)}$ denoting the number of p-values at node $i$,
and achieves the pooled performance of the BH procedure.
Even though our asymptotic analysis requires some technical assumptions, the proportion matching method proves stable and robust performance  
both theoretically and in various challenging numerical settings.

Allowing for more sophistication, we provide an explicit characterization of the asymptotically optimal solution for the distributed inference problem and show that it can be approximated by our \emph{greedy aggregation} method which is a communication- and computation-efficient iterative algorithm. 
This method assigns a unique rejection region to each node in the network, and
 has asymptotically optimal power for a class of procedures designed for the distributed setting (see Section~\ref{sec:optimal}). The method relies on one key idea: The center node greedily picks the intervals of the highest p-value ``density" among all the nodes in multiple rounds until the FDR constraint is met; the communication cost is $\mathcal{O}(\log m)$ bits (see Section~\ref{sec:optimal}-B).

 In this work, the p-values are modeled as \iid random variables following a mixture distribution for the sake of simplicity. However, we remark that we only rely on the fundamental convergence results that hold for a much broader class of sequences than \iid, {namely, ergodic sequences, which include stationary Gaussian processes with vanishing correlations, stationary $\mathsf{ARMA}$ models, 
 $\mathsf{m}$-dependent stationary sequences, and ergodic Markov chains.} Thus, we expect our proposed methods to be applicable to a wide range of practical problems including massive data collection in wireless sensor networks (WSNs)~\cite{djedouboum2018big}. For instance, in the case of large-scale WSNs, it is reasonable and useful to adopt a hierarchical architecture that partitions the sensors into clusters~\cite{ari2016power,bajaber2014efficient}. In each cluster, a leader called cluster head (CH) coordinates the local sensors and interfaces with the rest of the network (fusion center and/or other CHs depending on the between-cluster topology). In such a setting, CHs should handle a relatively large number of observations. Considering CHs as local nodes in our algorithms, the aforementioned architecture is compatible with our proposed methods regardless of the topology adopted within and between clusters, while the star topology in our work is assumed for simplicity of presentation. Applications of such networks include environmental monitoring~\cite{fascista2022toward} and smart cities~\cite{khalifeh2021wireless}. Multimedia WSNs are another example of massive data collection in networks where communication efficiency is a natural concern as nodes monitor sound, images, and videos~\cite{misra2008survey}.
 It is also noteworthy that our greedy aggregation method can be applied to centralized settings where side information including spatial or temporal information can be leveraged to partition the p-values. For instance, for the region of interest (ROI) analysis~\cite{rosenblatt2018all}, one can partition the image into segments and treat each region as a node with multiple p-values (corresponding to voxel-wise p-values), and our algorithm can potentially lead to more efficient and more accurate localization.

The main contributions of this work is fourfold. First, we present the proportion-matching algorithm which requires transmitting $2 \lceil\log_2 m^{(i)} \rceil$ bits from each node $1\leq i \leq N$, yet it achieves the global BH performance asymptotically; {we characterize the rate of convergence for both the FDR and power}. Second, we analyze its robustness against heterogeneity in alternative distributions as well as the inconsistency of estimators of the proportion of nulls at different nodes. {Third, we provide an explicit characterization of the asymptotically optimal rejection regions, which serves as a baseline given known probability density functions and can lead to an efficient algorithm when coupled with density estimation methods. Fourth, we develop the greedy aggregation algorithm and show that it is asymptotically optimal in power (in a particular sense suitable for distributed settings), {along with the analysis of the rate of convergence for FDR and power}; its empirical performance is superior over a wide range of simulation settings.} Preliminary results on the proportion-matching method have been reported in~\cite{pournaderi2022communication} by focusing on homogeneous alternatives and consistent estimators of the ratio of nulls (see Assumption~\ref{assume: consistent}), and we have extended this to allow for heterogeneous alternatives and estimators that converge to an upper bound of the proportion of nulls, which is more in line with the existing estimators listed in Section~\ref{ssec:prop_est}.

The rest of the paper is organized as follows. We present the background and problem formulation in Section~\ref{sec:background}. In Section~\ref{sec:comm-eff BH}, we show the asymptotic analysis of the proportion-matching method and analyze its robustness in heterogeneous settings in Section~\ref{sec:robust}. In Section~\ref{sec:optimal}, we develop the greedy aggregation method along with a characterization of the asymptotic optimal rejection regions. Numerical simulations are presented in Section~\ref{sec:sim} to support our theoretical findings. 

\vspace{-.5em}
\section{Background}
\label{sec:background}

\subsection{Multiple Testing and False Discovery Rate Control}
\label{sec:FDR}

Let $(X_1,...,X_m)$ denote $m$ statistics computed according to some observed data and consider testing the null hypotheses $\mathsf{H}_{0,k}:X_k\sim\mathsf{F_{0,k}}, 1 \leq k\leq m$ where $\mathsf{F_{0,k}}$ denotes the (hypothetical) CDF of $X_k$ under $\mathsf{H}_{0,k}$. When $\mathsf{F_{0,k}}$ is continuous, by the probability integral transform, the tests can equivalently be written as $\mathsf{H}_{0,k}:P_k\sim\mathsf{Unif}(0,1), 1 \leq k\leq m$ with $P_k = \mathsf{Q}_{{0,k}}(X_k)$ denoting the p-values and $\mathsf{Q}_{{0,k}}=1-\mathsf{F}_{{0,k}}$. Let $m_0$ denote the number of statistics for which the null hypotheses are true, i.e., their corresponding p-values have a uniform distribution on $[0,1]$. We refer to these $m_0$ p-values as \emph{null p-values} and the remaining $m_1 = m-m_0$ as \emph{non-nulls}. 

The purpose of multiple testing is to test the $m$ hypotheses while controlling a simultaneous measure of type I error. A rejection procedure controls a measure of error at some (prefixed) level $\alpha$ if it guarantees to keep the error less than or equal to $\alpha$. Two major approaches to a simultaneous measure of error are the family-wise error rate (FWER) control and the false discovery rate (FDR) control. FWER is concerned with controlling the probability of making at least one false rejection, while FDR is a less stringent measure of error that aims to control the average \emph{proportion} of false rejections among all rejections. Let $R$ and $V$ denote the number of rejections and false rejections by some procedure, respectively. Then, FWER is defined as $\mathbb{P}(V>0)$ and it can be controlled at some target level $\alpha$ by rejecting $\Rc_{\text{Bonf}}=\{k: P_k \leq \alpha/m\}$. The threshold $\alpha/m$ is so-called Bonferroni-corrected~\cite{bonferroni1936teoria} test size. It is known that using the Bonferroni correction results in too conservative rejection rules when $m$ is large. On the other hand, FDR is defined as the expected value of the false discovery proportion (FDP), i.e.,  $\mathsf{FDR}=\mathbb{E}(\mathsf{FDP})$ with $\mathsf{FDP} = \frac{V}{R\vee 1}$,
($a\vee b:=\max\{a,b\}$) and allows for more rejection. The celebrated Benjamini-Hochberg (BH) procedure~\cite{benjamini1995} controls the $\mathsf{FDR}$ at level $\a$. Let $(P_{(1)},P_{(2)},\ldots, P_{(m)})$ denote the ascending-ordered p-values. The BH procedure rejects $\hat{k}$ smallest p-values, i.e., $\Rc_{\text{BH}}=\{k: P_k \leq P_{(\hat{k})}\}$, where 
\begin{equation*}
    \kh = \max\big\{1\leq k\leq m: P_{(k)}\le \tau_k \big\},
\end{equation*}
with $\tau_k=\a k/m$ 
and $\hat{k}=0$ if $P_{(k)} > \tau_k$ for all $1\leq k\leq m$. We adopt notations $D_{\text{BH}}:=\hat k$ and $\tau_{\text{BH}}:= \tau_{\hat{k}}=\alpha D_{\text{BH}}/m$ for the deciding index and rejection threshold respectively. 

The other performance measure of a multiple testing procedure is the power of detection, defined as the expectation of the true discovery proportion (TDP), i.e., $\mathsf{power}=\mathbb{E}(\mathsf{TDP})$ with $\mathsf{TDP}={\frac{R-V}{m_1\vee 1}}$,
where we follow the definition from~\cite{benjamini1995}.

\vspace{-1em}
\subsection{Distributed False Discovery Control: Problem Setting}
\label{sec:distributed}
In the distributed multiple testing problem setting, we aim to develop statistically powerful algorithms that are communication- and
computation-efficient under the asymptotic FDR control constraint. 

Consider a network consisting of one center node and $N$ other nodes (and $N$ is \emph{fixed} throughout the work), where the center node collects (or broadcasts) information from (or to) all the other $N$ nodes. {We follow the widely used random effect (or hierarchical) model in the large-scale inference literature (e.g.,~\cite{efron2001empirical,storey2002direct,genovese2004stochastic}) and assume} the p-values in the network are generated $\iid$ according to the (mixture) distribution function
$\Gc(t)=\sum_{i=1}^N{q^{(i)}\,G_i(t)}$, where a p-value is generated at node $i$ with probability $q^{(i)}>0$ and according to the distribution function {$G_i(t) = r_0^{(i)}\,U(t) + r_1^{(i)} F_i(t),\ 0 < r_0^{(i)} < 1,\ r_1^{(i)}:=1-r_0^{(i)}$},
with $U$ denoting the CDF of p-values given their corresponding null hypothesis is true, (which is $\mathsf{Unif}(0, 1)$,) and $F_i$ denoting the \emph{unknown} alternative distribution function of the p-values at node $i$. {Both methods presented in this paper are fully data-driven and no distributional parameters are assumed to be known at the center node.} Let $m$ denote the total number of p-values in the network where each node $i$, $i\in \{1,...,N\}$, owns $m^{(i)}$ 
p-values $\Pv^{(i)}=(P_1^{(i)},..., P_{m^{(i)}}^{(i)})$, with $m^{(i)}_0$ (or $m^{(i)}_1$) of them corresponding to the true null (or alternative) hypotheses ($m^{(i)}_0 + m^{(i)}_1 = m^{(i)}$); let $m_0=\sum_i m^{(i)}_0$ and $m_1=m-m_0$. Denote the marginal probability of true null in the network (or asymptotic global proportion of true nulls) as $r_0^*=\sum_{i=1}^N q^{(i)}r_0^{(i)}$, and accordingly $r_1^*=1-r_0^*$. {Throughout the paper, we assume that $q^{(i)},\ r_0^{(i)}$ and $F_i$ are fixed as $m\to\infty$ and we drop the dependency of FDR and power on $m$ for simplicity of presentation.}

\vspace{-.5em}
\section{Communication-Efficient Algorithms}
\label{sec:comm-eff BH}
 
\subsection{{Local Multiple Testing (No Communication)}}
In this section, we study the global performance of multiple testing procedures when they are applied locally in a network and no communication is permitted. A general statement is made in Proposition~\ref{lem:no-com}, and Proposition~\ref{thm:no-com-BH} particularly focuses on the local BH procedures under the setting described in \ref{sec:distributed}.
\begin{propo}\label{lem:no-com}
{Let $R_m^{(i)}$, $V_m^{(i)}$, and $\mathsf{FDP}^{(i)}$ denote the number of rejections, false rejections, and FDP of an arbitrary procedure at note $i$, respectively.} If $\underset{m\rightarrow\infty}{\overline{\lim}}\mathsf{FDP}^{(i)}\leq \alpha\ 
a.s.$, $1\leq i \leq N$, then $\underset{m\rightarrow\infty}{\overline{\lim}}\mathsf{FDR}\leq \alpha$.
\end{propo}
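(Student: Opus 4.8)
The plan is to express the global $\mathsf{FDP}$ in terms of the per-node quantities and then pass to the limit using Fatou-type reasoning together with the dominated convergence theorem, since $\mathsf{FDP}$ is bounded by $1$. First I would write the global false discovery proportion as
\[
\mathsf{FDP} = \frac{\sum_{i=1}^N V_m^{(i)}}{\left(\sum_{i=1}^N R_m^{(i)}\right)\vee 1}.
\]
The key algebraic observation is the elementary bound $\mathsf{FDP} \le \max_{1\le i\le N}\mathsf{FDP}^{(i)}$ whenever at least one node makes a rejection, and $\mathsf{FDP}=0$ otherwise; this follows because $\frac{\sum a_i}{\sum b_i}\le \max_i \frac{a_i}{b_i}$ for nonnegative $a_i$ and positive $b_i$ (a mediant/weighted-average inequality), applied to the nodes with $R_m^{(i)}\ge 1$, while nodes with $R_m^{(i)}=0$ contribute nothing to the numerator. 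Hence in all cases $\mathsf{FDP}\le \max_i \mathsf{FDP}^{(i)}$ (interpreting $\mathsf{FDP}^{(i)}=0$ when $R_m^{(i)}=0$, consistent with the convention $V/(R\vee 1)$).

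Next I would take $\overline{\lim}_{m\to\infty}$ on both sides. Since $N$ is fixed and finite, $\overline{\lim}_m \max_i \mathsf{FDP}^{(i)} = \max_i \overline{\lim}_m \mathsf{FDP}^{(i)} \le \alpha$ almost surely, using the hypothesis that each $\overline{\lim}_m \mathsf{FDP}^{(i)}\le\alpha$ a.s.\ and that the maximum of finitely many sequences has limsup equal to the max of the limsups. Therefore $\overline{\lim}_m \mathsf{FDP}\le\alpha$ a.s. To conclude the statement about $\mathsf{FDR}=\E[\mathsf{FDP}]$, I would invoke the reverse Fatou lemma (which applies because $0\le\mathsf{FDP}\le 1$ is uniformly bounded, so the dominating integrable function is the constant $1$):
\[
\overline{\lim}_{m\to\infty}\mathsf{FDR} = \overline{\lim}_{m\to\infty}\E[\mathsf{FDP}] \le \E\!\left[\overline{\lim}_{m\to\infty}\mathsf{FDP}\right] \le \E[\alpha] = \alpha.
\]

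The main obstacle, such as it is, is not depth but care with conventions and edge cases: one must handle the event $\{\sum_i R_m^{(i)} = 0\}$ where the global $\mathsf{FDP}$ is defined to be $0$ and ensure the mediant inequality is applied only over the (random) set of nodes with positive rejection count, and one must make sure the per-node $\mathsf{FDP}^{(i)}$ convention matches (namely $V_m^{(i)}/(R_m^{(i)}\vee 1)$) so that the bound $\mathsf{FDP}\le\max_i\mathsf{FDP}^{(i)}$ is valid even when some nodes are silent. A secondary point requiring a line of justification is the interchange $\overline{\lim}_m\max_i = \max_i\overline{\lim}_m$, which holds precisely because the index set $\{1,\dots,N\}$ is finite and does not grow with $m$ — this is exactly where the assumption that $N$ is fixed enters. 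Everything else is routine.
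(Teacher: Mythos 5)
Your proof is correct and follows essentially the same route as the paper: bound the global $\mathsf{FDP}$ by $\max_{1\le i\le N}\mathsf{FDP}^{(i)}$ via the mediant inequality, exchange $\overline{\lim}_m$ with the maximum over the finite index set, and pass to expectations with the reverse Fatou lemma (the paper cites ``Fatou's lemma'' for the same step, justified by the uniform bound $0\le\mathsf{FDP}\le 1$). Your explicit handling of nodes with $R_m^{(i)}=0$ and of the event where no rejections are made is a welcome bit of extra care that the paper leaves implicit, but it is not a substantive difference in the argument.
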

\begin{proof}
Note that the global FDR is $\mathbb{E}\left(\frac{\sum_{i=1}^N V_m^{(i)}}{1\vee \sum_{i=1}^N R_m^{(i)}}\right)$. Thus,
\begin{align*}
\underset{m\rightarrow\infty}{\overline{\lim}}\mathsf{FDR}
    &\leq \underset{m\rightarrow\infty}{\overline{\lim}}\mathbb{E}\biggl(\underset{1\leq i\leq N}{\max}\frac{ V_m^{(i)}}{1\vee  R_m^{(i)}}\biggr)\\
    &\overset{(a)}{\leq} \mathbb{E}\biggl(\underset{m\rightarrow\infty}{\overline{\lim}}\underset{1\leq i\leq N}{\max}\frac{ V_m^{(i)}}{1\vee  R_m^{(i)}}\biggr)\\
    &= \mathbb{E}\biggl(\underset{1\leq i\leq N}{\max}\underset{m\rightarrow\infty}{\overline{\lim}}\mathsf{FDP}^{(i)}\biggr)\leq \alpha,
\end{align*}
where $(a)$ follows from Fatou's lemma.
\end{proof}

{We now wish to focus on the BH procedure specifically.} Let $\tau^{(i)}_{\text{BH}}(\alpha)$ denote the local BH rejection thresholds at node $i$ for test size $\alpha$.
Under suitable assumptions, \cite[Theorem~$1$]{genovese2002operating} argues the following {(see Appendix~\ref{techlem} for a stronger version)},
\begin{align}
    \tau^{(i)}_{\text{BH}}(\alpha)&\xrightarrow{\mathcal{P}} %\tau(\alpha^{(i)};r_0^{(i)}),
    \sup\big\{t:G_i(t)=t/\alpha\big\}=:\theta^{(i)}, \label{eq:asy-loc-th}
\end{align} 
{where the existence of $\theta^{(i)}$ is guaranteed by the Knaster–Tarski theorem~\cite{tarski1955lattice}. The following proposition concerns the asymptotic FDR analysis of the BH procedure when applied locally to the distributed setting.}
\begin{propo}\label{thm:no-com-BH}
Assume $G_i(t)$ is continuously differentiable at $t=\theta^{(i)}$ and $G_i'\big(\theta^{(i)}\big)\neq  1/\alpha$ for all $1\leq i\leq N$. {Performing local BH procedures controls the global FDR asymptotically with $\mathsf{FDR}\leq \alpha + o(\log m/\sqrt{m})$, as long as $G_i(\theta^{(i)})>0$ for at least one node $1\leq i\leq N$.}
\end{propo}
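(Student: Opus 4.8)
The plan is to show that the global FDR of the locally-applied BH procedure is, up to an $o(\log m/\sqrt m)$ remainder, controlled at level $\alpha$, by combining a per-node FDP analysis with the aggregation argument of Proposition~\ref{lem:no-com}, and then extracting the rate from the fluctuation of the empirical distribution around its mean. First I would fix a node $i$ and recall that the local BH rejection set has size $D^{(i)}_{\text{BH}} = m^{(i)} \Gc^{(i)}_{m^{(i)}}(\tau^{(i)}_{\text{BH}})$ where $\Gc^{(i)}_{m^{(i)}}$ is the empirical CDF of the $m^{(i)}$ local p-values, and the number of false rejections is $V^{(i)}_{m} = m^{(i)}_0 \,\widehat U^{(i)}_{m^{(i)}}(\tau^{(i)}_{\text{BH}})$ where $\widehat U^{(i)}$ is the empirical CDF of the null p-values at node $i$. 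The local BH threshold satisfies the exact self-consistency identity $\tau^{(i)}_{\text{BH}} = \alpha\, D^{(i)}_{\text{BH}}/m^{(i)} = \alpha\, \Gc^{(i)}_{m^{(i)}}(\tau^{(i)}_{\text{BH}})$, so $\mathsf{FDP}^{(i)} = V^{(i)}_m/(R^{(i)}_m\vee 1) = \widehat U^{(i)}_{m^{(i)}}(\tau^{(i)}_{\text{BH}}) \cdot (m^{(i)}_0/m^{(i)}) / \Gc^{(i)}_{m^{(i)}}(\tau^{(i)}_{\text{BH}})$, and the key point is that $\widehat U^{(i)}_{m^{(i)}}(t)/\Gc^{(i)}_{m^{(i)}}(t) \approx U(t)/G_i(t) = r_0^{(i)} t / G_i(t)$, which at $t=\theta^{(i)}$ equals $\alpha r_0^{(i)}$ by the defining fixed-point equation $G_i(\theta^{(i)}) = \theta^{(i)}/\alpha$ (using $G_i(\theta^{(i)})>0$). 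Summing the false rejections and rejections across nodes as in Proposition~\ref{lem:no-com}'s Fatou-type bound, the leading term is $\sum_i r_0^{(i)} \theta^{(i)} / \sum_i q^{(i)} G_i(\theta^{(i)}) \cdot (\text{weights}) \le \alpha$, with the $\max_i$ bound giving $\mathsf{FDP}^{(i)} \to \alpha r_0^{(i)} \le \alpha$ almost surely, which re-derives the asymptotic control.

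The substance is the rate. Here I would replace the qualitative convergences $\tau^{(i)}_{\text{BH}} \xrightarrow{\mathcal P} \theta^{(i)}$ and $\mathsf{FDP}^{(i)} \to \alpha r_0^{(i)}$ by quantitative versions. The stated $o(\log m/\sqrt m)$ rate strongly suggests a DKW-inequality route: $\sup_t |\Gc^{(i)}_{m^{(i)}}(t) - G_i(t)| = O_{\P}(\sqrt{\log m / m})$ (the $\log m$ absorbing a union bound over the $N$ nodes and turning a high-probability bound into an expectation bound by integrating the Gaussian-type tail). The hypothesis that $G_i$ is continuously differentiable at $\theta^{(i)}$ with $G_i'(\theta^{(i)}) \ne 1/\alpha$ is exactly the transversality condition that lets one invert: if the empirical CDF is within $\delta$ of $G_i$ uniformly, then the empirical fixed point $\tau^{(i)}_{\text{BH}}$ (solving $\Gc^{(i)}_{m^{(i)}}(\tau) = \tau/\alpha$, taking the largest root) is within $O(\delta)$ of $\theta^{(i)}$, because near $\theta^{(i)}$ the map $t \mapsto G_i(t) - t/\alpha$ has nonzero slope $G_i'(\theta^{(i)}) - 1/\alpha$ and hence is locally bi-Lipschitz. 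Propagating this, $\mathsf{FDP}^{(i)} - \alpha r_0^{(i)} = O_{\P}(\sqrt{\log m/m})$, and then $\E[\max_i V^{(i)}_m / (1 \vee \sum_i R^{(i)}_m)] \le \E[\max_i \mathsf{FDP}^{(i)}] \le \alpha + o(\log m/\sqrt m)$ after integrating the tail bound; the $\log m/\sqrt m$ (rather than $\sqrt{\log m/m}$) in the statement is presumably a slightly loose but clean bound, or comes from a reference to Appendix~\ref{techlem} where a sharper concentration of $\tau^{(i)}_{\text{BH}}$ with an explicit $\log m$ factor is established.

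I would structure the write-up as: (i) the exact FDP identity at each node; (ii) a lemma (cited from Appendix~\ref{techlem}) giving $|\tau^{(i)}_{\text{BH}} - \theta^{(i)}| = O_{\P}(\log m/\sqrt m)$ under the stated smoothness/transversality assumptions; (iii) a first-order expansion of $\widehat U^{(i)}_{m^{(i)}}(\tau^{(i)}_{\text{BH}})/\Gc^{(i)}_{m^{(i)}}(\tau^{(i)}_{\text{BH}})$ around $(\theta^{(i)}, r_0^{(i)})$ controlling the numerator fluctuation by DKW and the denominator being bounded away from zero since $G_i(\theta^{(i)})>0$; (iv) assembling via the Fatou/max argument of Proposition~\ref{lem:no-com} but keeping the rate. \emph{The main obstacle} I anticipate is step (ii): making the inversion of the self-consistency equation quantitative and uniform requires care that $\tau^{(i)}_{\text{BH}}$ does not drift to a spurious small root — one needs that $\theta^{(i)}$ is the \emph{largest} crossing and that the transversality condition rules out tangential crossings, so that with high probability the largest empirical root stays in a shrinking neighborhood of $\theta^{(i)}$; the case $G_i(\theta^{(i)}) = 0$ is excluded in the hypothesis precisely because then $\theta^{(i)} = 0$, the denominator degenerates, and no rejections occur at that node, so such nodes contribute nothing and can be dropped — but one must check at least one node has $G_i(\theta^{(i)}) > 0$ so the global procedure makes rejections and the FDP denominator is eventually positive, which is exactly the extra assumption in the statement.
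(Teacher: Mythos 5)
Your steps (i)--(iii) are essentially the paper's proof: the exact per-node FDP identity, the quantitative threshold concentration $|\tau^{(i)}_{\text{BH}}-\theta^{(i)}|=o_p(\log m/\sqrt m)$ obtained from the transversality condition $G_i'(\theta^{(i)})\neq 1/\alpha$ via Lemma~\ref{het} in Appendix~\ref{techlem}, and the $\Oc_p(m^{-1/2})$ empirical-process fluctuation of the rejection and false-rejection counts. Your reading of where the $\log m$ factor comes from is also correct. The gap is in your assembly step (iv). You propose to aggregate via $\mathsf{FDR}\le \E\bigl[\max_i \mathsf{FDP}^{(i)}\bigr]$ and to argue that each $\mathsf{FDP}^{(i)}\to\alpha r_0^{(i)}\le\alpha$. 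But the proposition only assumes $G_i(\theta^{(i)})>0$ for \emph{at least one} node, so there may be degenerate nodes $j$ with $\theta^{(j)}=0$. At such a node the local FDP does not concentrate below $\alpha$: the node still makes at least one rejection with probability bounded away from zero for all $m$ (e.g.\ on the event $P^{(j)}_{(1)}\le \alpha/m^{(j)}$, which has limiting probability at least $1-e^{-r_0^{(j)}\alpha}$), and on the event that all of its few rejections are true nulls its FDP equals $1$. Consequently $\E[\max_i\mathsf{FDP}^{(i)}]$ can exceed $\alpha$ by a constant, and your claim that degenerate nodes ``contribute nothing and can be dropped'' is false for the max — they contribute nothing only to the \emph{normalized sums}, not to the maximum of the per-node ratios.

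The paper's proof avoids this by never passing to per-node FDPs. It writes the global FDP as the single ratio $\bigl(\frac1m\sum_i V^{(i)}_m\bigr)/\bigl(\frac1m(1\vee\sum_i R^{(i)}_m)\bigr)$, shows that numerator and denominator converge to $\sum_i q^{(i)}r_0^{(i)}\theta^{(i)}$ and $\sum_i q^{(i)}G_i(\theta^{(i)})$ respectively at the stated rate (degenerate nodes contribute $0$ to both limits, while the denominator limit is strictly positive precisely by the ``at least one node'' hypothesis), and only then applies the mediant inequality to the \emph{deterministic} limiting ratio, bounding it by $\max_{i:G_i(\theta^{(i)})>0}\, r_0^{(i)}\theta^{(i)}/G_i(\theta^{(i)})=\max_i \alpha r_0^{(i)}\le\alpha$; boundedness of the FDP then transfers the rate to the FDR. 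You actually wrote down this ratio-of-sums limit in passing in your first paragraph — that is the object to keep. The fix is small: apply your concentration estimates to the aggregated numerator and denominator rather than to each node's FDP, and invoke the max bound only at the level of the limit.
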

\begin{proof}
Denote the number of rejections by $R_m^{(i)}(\tau^{(i)}_{\text{BH}})=\sum_{k=1}^{m^{(i)}}{\ind\{P_k^{(i)}\leq \tau^{(i)}_{\text{BH}}\}}$ and the number of false rejections by $V_m^{(i)}(\tau^{(i)}_{\text{BH}})=\sum_{k=1}^{m^{(i)}}{\ind\{\mathsf{H}_{0,k}^{(i)}=1,P_k^{(i)}\leq \tau^{(i)}_{\text{BH}}\}}$,
where $\mathsf{H}_{0,k}^{(i)}=1$ denotes that the $k$-th null hypothesis at node $i$ is true, and similarly for $\mathsf{H}_{0,k}^{(i)}=0$.
The weak convergence of empirical processes yields
\begin{equation*}
    \left|\frac{1}{m} \sum_{i=1}^N V_m^{(i)}(\tau^{(i)}_{\text{BH}})-\sum_{i=1}^{N}q^{(i)}r_0^{(i)}\tau^{(i)}_{\text{BH}}\right|= \Oc_p\left(m^{-1/2}\right),
\end{equation*}
\begin{equation*}
    \left|\frac{1}{m} \sum_{i=1}^N R_m^{(i)}(\tau^{(i)}_{\text{BH}})-\sum_{i=1}^{N}q^{(i)}G_i\left(\tau^{(i)}_{\text{BH}}\right)\right|= \Oc_p\left(m^{-1/2}\right),
\end{equation*}
Also, according to Lemma~\ref{het} (with $\gamma=1/2$) and Taylor's theorem we have
\begin{equation*}
    \left|\sum_{i=1}^{N}q^{(i)}r_0^{(i)}\tau^{(i)}_{\text{BH}}-\sum_{i=1}^{N}q^{(i)}r_0^{(i)}\theta^{(i)}\right|= o_p\left(\log m/\sqrt{m}\right),
\end{equation*}
\begin{equation*}
    \left|\sum_{i=1}^{N}q^{(i)}G_i\left(\tau^{(i)}_{\text{BH}}\right)-\sum_{i=1}^{N}q^{(i)}G_i\left(\theta^{(i)}\right)\right|= o_p\left(\log m/\sqrt{m}\right),
\end{equation*}
Therefore, 
\begin{align*}
    \mathsf{FDP}=\frac{\frac{1}{m}\sum_{i=1}^N V_m^{(i)}(\tau^{(i)}_{\text{BH}})}{\frac{1}{m}\left(1\vee \sum_{i=1}^N R_m^{(i)}(\tau^{(i)}_{\text{BH}})\right)}=&\frac{\sum_{i=1}^{N}q^{(i)}r_0^{(i)}\theta^{(i)}}{\sum_{i=1}^{N}q^{(i)}G_i\left(\theta^{(i)}\right)}\\
    & + o_p\left(\log m/\sqrt{m}\right),
\end{align*} 
if $G_i(\theta^{(i)})>0$ for at least one node. Hence,
\begin{align*}
\mathsf{FDP}&\xrightarrow{\Pc}\frac{\sum_{i=1}^{N}q^{(i)}r_0^{(i)}\theta^{(i)}}{\sum_{i=1}^{N}q^{(i)}G_i\left(\theta^{(i)}\right)}\\
&\qquad    \leq \underset{i:G_i(\theta^{(i)})>0}{\max}\left\{r_0^{(i)}\frac{\theta^{(i)}}{G_i\left(\theta^{(i)}\right)}\right\}\leq \alpha,
\end{align*}
which implies $\mathsf{FDR}\leq \alpha + o(\log m/\sqrt{m})$ according to the boundedness of FDP.
\end{proof}
\begin{remark}
{Similar to  Proposition~\ref{thm:no-com-BH}, one can show that 
\begin{align*}
    \mathsf{power} &= \frac{1}{r_1^*}\sum_{i=1}^Nq^{(i)}r_1^{(i)} F_i\left(\theta^{(i)}\right)+o(\log m/\sqrt{m})\\
    &= \frac{1}{r_1^*}\sum_{i=1}^Nq^{(i)}r_1^{(i)} \mathsf{power}^{(i)}+o(\log m/\sqrt{m}),
\end{align*}
which is the weighted average of local powers.}
\end{remark}

\vspace{-1em}
\subsection{Proportion-Matching Algorithm}\label{sec:prp-m}
In this section, we present an algorithm to correct for the mismatch between the local proportion of nulls and global one, to achieve the global BH performance when the alternative distribution is fixed throughout the network.
\begin{assumption}
We assume the following (and we will relax this assumption in Section~\ref{sec:robust})
\begin{equation*}
    G_i(t)=G\big(t;r_0^{(i)}\big) := r_0^{(i)}\,U(t) + (1-r_0^{(i)}) F(t),
\end{equation*}
i.e., the local (conditional) distributions differ only in proportion of the nulls and $F_i=F$ is fixed for all $1\leq i\leq N$.\label{ass:fixed_dist}
\end{assumption}
Let $\tau^*_{\text{BH}}(\alpha)$ and $\tau^{(i)}_{\text{BH}}(\alpha^{(i)})$ denote the global and local ($i$-th node) rejection thresholds for the test sizes $\alpha$ and $\alpha^{(i)}$, respectively. Similar to \eqref{eq:asy-loc-th}, we have
\begin{align*}
    \tau^*_{\text{BH}}(\alpha)&\xrightarrow{\mathcal{P}}%\sup\big\{t:G(t;r_0^*)=(1/\alpha)\,t\big\}
    \tau(\alpha;r_0^*),\\
    \tau^{(i)}_{\text{BH}}(\alpha^{(i)})&\xrightarrow{\mathcal{P}} \tau(\alpha^{(i)};r_0^{(i)}),
    %\sup\big\{t:G\big(t;r_0^{(i)}\big)=\big(1/\alpha^{(i)}\big)\,t\big\}.
\end{align*} 
where $\tau(\alpha;r_0):=\sup\big\{t:G(t;r_0)=(1/\alpha)\,t\big\}$.
On the other hand, we have
\begin{equation*}
    \sup\{t:G(t;r_0)=(1/\alpha)\,t\}=\sup\{t:F(t)=\beta(\alpha;r_0)\,t\}
\end{equation*}
for $0 \leq r_0<1$, where $\beta(\alpha;r_0) := \frac{(1/\a)-{r}_0}{1-{r}_0}$. 
Therefore, $\tau^*_{\text{BH}}$ admits a characterization via $F(t)$ and $\beta(\alpha;r_0^*)$ in the limit (and same for $\tau^{(i)}_{\text{BH}}$)~\cite{genovese2002operating}.
This (asymptotic) representation leads to a key observation: Each node can leverage the global proportion of true nulls $r_0^*$ (provided to them) to achieve the global performance by calibrating its (local) test size $\alpha^{(i)}$ such that $\beta(\alpha^{(i)};r_0^{(i)})=\beta(\alpha;r_0^*)=:\beta^*$ to reach the global threshold. Specifically, it is straightforward to observe that setting $\a^{(i)} = \big((1-{r}_0^{(i)})\beta(\alpha;r_0^*) + {r}_0^{(i)}\big)^{-1}$ at node $i$ will result in the global performance via the limiting threshold $\tau(\alpha;r_0^*)$. We now formally present our proportion-matching algorithm.

Let $\hat{r}^{(i)}_0$ denote an estimator of $r^{(i)}_0$ (see the next subsection for estimation procedures). 
For an overall targeted FDR level $\alpha$, our \emph{proportion-matching  method} consists of three steps. 
\begin{itemize}\setlength\itemsep{0.5em}
	\item[(1)] {\bf Collect p-values counts}: Each node~$i$ estimates $\hat{r}^{(i)}_0$ and then sends $\big(m^{(i)}, \hat{r}^{(i)}_0\big)$ to the center node.
	\item[(2)] {\bf Estimate global slope ($\b^*$)}: Based on $\big(m^{(i)}, \hat{r}^{(i)}_0\big)$, $i\in\{1,...,N\}$, the center node computes $\hat\beta^*$ and broadcasts it to all the nodes, where
	\begin{align*}
		\hat\beta^* = \frac{(1/\a)-\hat{r}^*_0}{1-\hat{r}^*_0}\geq 1,
	\end{align*}
	 with $\hat{r}^*_0= \frac{1}{m}\sum_{i=1}^N \hat{r}_0^{(i)}m^{(i)}$ and $m=\sum_{i=1}^N m^{(i)}$.
	\item[(3)] {\bf Perform BH locally}: Upon receiving $\hat\b^*$, each node computes its own $\hat\a^{(i)}$ and performs the BH procedure according to $\hat\a^{(i)}$, where
	\begin{align*}
		\hat\a^{(i)} = \frac{1}{\big(1-\hat{r}_0^{(i)}\big)\hat\b^* + \hat{r}_0^{(i)}}\leq 1.
	\end{align*}
\end{itemize}

\begin{remark}
We observe that in the case of having only one node in the network ($N=1$), we get $\hat{r}^*_0= \hat{r}_0^{(1)}$ and the algorithm reduces to the BH procedure, i.e., $\hat\a^{(1)}=\alpha$.
\end{remark}
 
We note that in step (1) each node can send $\hat{m}^{(i)}_0=\lfloor \hat{r}_0^{(i)} m^{(i)} + 1/2\rfloor$ instead of $\hat{r}_0^{(i)}$ and in step (2) the center node can broadcast $(m,\sum_{i=1}^N \hat{m}^{(i)}_0)$ instead of $\hat\beta^*$. In this case, {assuming all nodes are aware of the global target FDR $\alpha$}, each node can estimate $\hat\beta^*$ using $\hat{\pi}^*_0=\frac{1}{m}\sum_{i=1}^N \hat{m}^{(i)}_0$ as an estimator of $r_0^*$.
Thus, each node only needs to transmit $2\lceil\log_2 m^{(i)} \rceil$ bits, and the center node broadcasts $2\lceil\log_2 m \rceil$ bits. This is in contrast with communicating quantized p-values which requires $\Oc(m\log(m))$ bits of communication~\cite{Ramdas2022} to achieve the global BH performance. 

\vspace{-1em} 
\subsection{Estimation of $r_0^{(i)}$}
\label{ssec:prop_est}
In this section, we briefly review two existing estimators of $r_0^{(i)}$. Let $P_{(1)}<...< P_{(m)}$ denote the ordered p-values at some node, generated \iid from the distribution function $G(t)$.
The problem of estimating the ratio of alternatives $r_1$ has been studied extensively~\cite{hochberg1990more,hengartner1995finite,swanepoel1999limiting,benjamini2000adaptive,efron2001empirical,storey2002direct}. Two estimators of $r_0$ are given below: 

\smallskip
\noindent{{\bf Spacing estimator}}~\cite{swanepoel1999limiting}: 
Take a sequence $\{s_m\}$ such that $s_m/m\to 0$ and $s_m/\log(m)\to \infty$. Let
\begin{equation*}
		\hat{r}^{\text{spacing}}_0 = \min\biggl\{\frac{2 s_m}{m Z_m},1\biggr\},
	\end{equation*}
where $Z_m = \max_{s_m+1\le j\le m-s_m} (P_{(j+s_m)}-P_{(j-s_m)})$.
 
Assume that $G(t)$ is differentiable with density function $g(t)=G'(t)$; note that $g(t)=r_0 + r_1 f(t)$ where $f(t)$ denotes the probability density function of the alternative distribution $F(t)$. Under mild assumptions on $g(t)$, \cite{swanepoel1999limiting} proves that $\hat{r}^{\text{spacing}}_0\xrightarrow{a.s.} \inf_{0<t<1} g(t)\geq r_0$. Although in general the limiting value of $\hat{r}^{\text{spacing}}_0$ is an upper bound of the parameter, it should be noted that this estimator is (strongly) consistent for one-sided p-values computed according to the continuous exponential families, i.e., $\inf_{0<t<1} f(t)=0$ and hence $\hat{r}^{\text{spacing}}_0\xrightarrow{a.s.} r_0$ for these distributions. Also, the gap between the limiting value and $r_0$ will be quite small in practical cases where the estimator is not consistent, e.g., two-sided p-values for testing the location of normal and t-distribution \cite{genovese2004stochastic}. The following lemma is a corollary to Theorem 1.1 in \cite{swanepoel1999limiting}.
\begin{lemma}
   % Let $s_m \ge u_m$ for all $m \geq 1$ and $s_m/m\to 0$.
   Let $\overline{r}_0^{\text{spacing}} := \inf_{0<t<1} g(t)$. Assume that $g$ attains its infimum at some unique {$\theta\in(0,1)$} and  $g'(\theta)<\infty$ 
   %(with $g$ and $g'$ denoting the right or left derivative for boundary cases)
   , then
\begin{align*}
    \Big|\hat{r}^{\text{spacing}}_0 &-\overline{r}_0^{\text{spacing}}\Big| \\=
    & \Oc\left\{\max\left(s_m^{-1/2}\log\left(\frac{m}{s_m}\right),\frac{s_m}{m}+\sqrt{\frac{\log\log m}{m}}\right)\right\}\\
    = & \Oc\left\{\max\left(s_m^{-1/2}\log\left(\frac{m}{s_m}\right),\frac{s_m}{m}\right)\right\}\quad a.s.
\end{align*}   
\end{lemma}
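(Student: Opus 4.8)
The plan is to derive the stated rate as a direct consequence of Theorem~1.1 in~\cite{swanepoel1999limiting}, which gives an almost-sure convergence rate for the spacing-type statistic $Z_m$ toward $(2 s_m/m)\cdot \big(\inf_{0<t<1} g(t)\big)^{-1}$, and then to transfer that rate through the map $z \mapsto \min\{2 s_m/(m z), 1\}$. First I would recall that under the assumption that $g$ attains its infimum at a unique interior point $\theta$ with $g'(\theta)$ finite, the cited theorem yields
\begin{equation*}
    \left| \frac{m Z_m}{2 s_m} - \frac{1}{\overline{r}_0^{\text{spacing}}} \right| = \Oc\!\left\{ \max\!\left( s_m^{-1/2}\log\!\left(\tfrac{m}{s_m}\right),\ \frac{s_m}{m} + \sqrt{\frac{\log\log m}{m}} \right) \right\}\quad a.s.,
\end{equation*}
where the first term inside the max is the stochastic (oscillation-of-the-empirical-process) contribution on a window of $2 s_m$ order statistics, and the second term collects the deterministic bias from approximating $g$ near its minimizer by a parabola (the $s_m/m$ piece, controlled by $g'(\theta)<\infty$) together with the strong-law fluctuation $\sqrt{\log\log m / m}$ from the law of the iterated logarithm for the empirical CDF.

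Next I would push this through the reciprocal and the truncation at $1$. Since $\overline{r}_0^{\text{spacing}} = \inf g \in (0,1]$ is a fixed positive constant and $m Z_m/(2 s_m) \to 1/\overline{r}_0^{\text{spacing}}$ a.s., for all large $m$ the quantity $m Z_m/(2 s_m)$ is bounded away from $0$, so $x \mapsto 1/x$ is Lipschitz on the relevant range and
\begin{equation*}
    \left| \frac{2 s_m}{m Z_m} - \overline{r}_0^{\text{spacing}} \right| = \Oc\!\left( \left| \frac{m Z_m}{2 s_m} - \frac{1}{\overline{r}_0^{\text{spacing}}} \right| \right)\quad a.s.
\end{equation*}
Truncation only helps: $|\min\{x,1\} - \min\{y,1\}| \le |x - y|$, and since $\overline{r}_0^{\text{spacing}} \le 1$ with $\overline{r}_0^{\text{spacing}} = \min\{\overline{r}_0^{\text{spacing}},1\}$, the same bound holds for $\hat{r}^{\text{spacing}}_0 = \min\{2 s_m/(m Z_m), 1\}$ against $\overline{r}_0^{\text{spacing}}$. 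This gives the first displayed bound in the lemma. For the second line, I would simply absorb the $\sqrt{\log\log m / m}$ term: because $s_m/\log m \to \infty$, we have $s_m^{-1/2}\log(m/s_m)$ compared with $\sqrt{\log\log m/m}$ — the latter is $o(\sqrt{1/m}\cdot\sqrt{\log\log m})$ while $s_m/m$ already dominates $\sqrt{\log\log m/m}$ whenever $s_m \gg \sqrt{m \log\log m}$, and when $s_m$ is smaller the stochastic term $s_m^{-1/2}\log(m/s_m)$ is at least of order $m^{-1/4}\sqrt{\log m}$, which swamps $\sqrt{\log\log m/m}$; either way the $\log\log$ term is negligible relative to the max of the other two, so it can be dropped.

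The main obstacle is not any of the elementary manipulations above but making sure the invocation of~\cite[Theorem~1.1]{swanepoel1999limiting} is legitimate: that paper's result is stated for its own regularity hypotheses on $g$, and I need to check that ``$g$ attains its infimum at a unique $\theta\in(0,1)$ with $g'(\theta)<\infty$'' is exactly (or implies) the hypothesis under which the quoted two-term rate holds, and in particular that the constant in the $\Oc\{\cdot\}$ depends only on $g$ (hence is uniform in $m$) and that the $\max$ structure — stochastic window term versus deterministic-bias-plus-LIL term — is what that theorem delivers. If the cited theorem is stated with a slightly different normalization of $Z_m$ or a different side condition (e.g. $g$ twice differentiable near $\theta$), I would either note that our hypotheses are a special case or add the missing smoothness as part of the lemma's assumptions; this is a bookkeeping matter of matching notation, and once it is settled the rest is the three short Lipschitz/truncation/absorption steps sketched above.
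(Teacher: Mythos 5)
Your proposal is correct and takes essentially the same route as the paper: the paper's proof is a one-line citation to Equations~(2.1)--(2.3) inside the proof of Theorem~1.1 of Swanepoel (1999), plus Taylor's theorem (your $s_m/m$ bias term, controlled by $g'(\theta)<\infty$) and Equation~(4) of Bahadur (1966) (your LIL term $\sqrt{\log\log m/m}$), followed by exactly the reciprocal/truncation/absorption steps you describe. The only caveat is the one you already flag yourself: since Swanepoel's theorem as stated gives only almost-sure consistency without a rate, the two-term bound must indeed be read off from the intermediate equations of its proof rather than from the theorem statement, which is precisely what the paper does.
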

\begin{proof}
    The proof follows from Equations~(2.1) to (2.3) in the proof of Theorem~1.1 in \cite{swanepoel1999limiting} (with $r_n=s_n$), {Taylor's theorem,} and Equation~(4) from \cite{bahadur1966note}.
\end{proof}
For example if $s_m = m^{3/4}$, then $\hat{r}^{\text{spacing}}_0=\overline{r}_0^{\text{spacing}}+\Oc\left(m^{-1/4}\right)$ almost surely.

\smallskip
\noindent{{\bf Storey's estimator}}~\cite{storey2002direct}: Let $\widehat{G}(t)$ denote the empirical CDF of $(P_1,..., P_m)$. For any $\l\in (0,1)$, 
	\begin{align*}
		\hat{r}^{\text{Storey}}_0 = \min\biggl\{\frac{1-\widehat{G}(\l)}{1-\l}, \,1\biggr\};
	\end{align*} 
This estimator converges almost surely to an upper bound of $r_0$. 
This follows from some elementary algebra combined with $\widehat{G}(\l)\xrightarrow{a.s.} G(\lambda)$ by the strong law of large numbers and the convergence rate is  
$\Oc_p\left(m^{-1/2}\right)$.

\vspace{-1em}
\subsection{Asymptotic Equivalence to the Global BH}

To facilitate the presentation, we list some key notions.
\begin{center}
\begin{tabular}{ c|c|c|c|c } 
%\hline
  & ratio of nulls &level & slope \\
\hline
\multirow{3}{4em}{parameter \\ estimate \\ limit} & $r_0^*=1-r_1^*$ &$\a^{(i)}$  &$\b^*=\beta(\alpha;r_0^*)$\\ 
& $\hat{r}^*_0$ & $\hat{\a}^{(i)}$& $\hat{\b}^*$  \\ 
& $\overline{r}^*_0$ (upper) & $\tilde{\a}^{(i)}$ & $\overline{\b}^*$ (upper)\\  
\hline
\end{tabular}
%\caption{Table for notation}
%\label{table:1}
\end{center}

\begin{center}
\begin{tabular}{ c|c|c|c|c } 
%\hline
  & rejection threshold  \\
\hline
\multirow{2}{2em}{global \\ [1.5ex]local } & $\tau_{\text{BH}}^*=\tau_{\text{BH}}(\alpha)$, $\tau^*=\tau(\alpha,;r_0^*)$ (asymptotic) \\ [1ex]
& $\tau_{\text{BH}}^{(i)}(\alpha^{(i)})$, $\tau_{\text{BH}}^{(i)}(\hat{\alpha}^{(i)})$, $\tau_{\text{BH}}^{(i)}=\tau_{\text{BH}}^{(i)}(\alpha)$ (no comm.)  \\
\hline
\end{tabular}
%\caption{Table for notation}
%\label{table:1}
\end{center}

\begin{lemma}\label{lem:lln}
If $\hat{r}_0^{(i)}\xrightarrow{a.s.}\overline{r}_0^{\,(i)}$ as $m^{(i)}\rightarrow\infty$ for all $i$, then
\begin{equation}
    \hat{r}^*_0\xrightarrow{a.s.}\overline{r}^*_0\ ,\quad
    \hat\beta^*\xrightarrow{a.s.}\overline{\beta}^*\ ,\quad
    \hat\a^{(i)}\xrightarrow{a.s.}\widetilde{\alpha}^{(i)}\ ,\label{converg}
\end{equation}
as $m\to \infty$, where
\begin{subequations}
\begin{align}
   &\overline{r}^*_0 =\sum_{i=1}^{N}{\overline{r}_0^{\,(i)}q^{(i)}}\ ,\;\;\;\;\;\;
    \overline{\beta}^* = \frac{({1}/{\alpha})-\overline{r}^*_0}{1-\overline{r}^*_0} \ ,\nonumber\\
    &\widetilde{\alpha}^{(i)} = \Big({\big(1-\overline{r}_0^{\,(i)}\big)\overline{\beta}^*+\overline{r}_0^{\,(i)}}\Big)^{-1}\ .\nonumber
\end{align}
\end{subequations}
Furthermore, assuming $\left|\hat{r}_0^{(i)}-\overline{r}_0^{\,(i)}\right|=\Oc_p\left(u^{(i)}(m)\right)$,  
the convergence {rates in~\eqref{converg} are} $\Oc_p\left\{\max\left(\underset{i}{\max}\ u^{(i)}(m),m^{-1/2}\right)\right\}$.
\end{lemma}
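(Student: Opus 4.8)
The plan is to propagate the almost-sure convergence (and the stated rates) through the explicit algebraic maps that define $\hat{r}^*_0$, $\hat\beta^*$, and $\hat\alpha^{(i)}$, using the strong law of large numbers for the node-count weights and the continuity/smoothness of these maps. First I would handle $\hat{r}^*_0 = \frac{1}{m}\sum_{i=1}^N \hat{r}_0^{(i)} m^{(i)}$. Write it as $\sum_{i=1}^N \hat{r}_0^{(i)} \,\frac{m^{(i)}}{m}$. Since the p-values in the network are generated i.i.d.\ with a p-value landing at node $i$ with probability $q^{(i)}$, the vector $(m^{(1)},\dots,m^{(N)})$ is multinomial$(m;q^{(1)},\dots,q^{(N)})$, so by the SLLN $m^{(i)}/m \xrightarrow{a.s.} q^{(i)}$, and moreover $m^{(i)}\to\infty$ a.s.\ for each $i$. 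Combining with the hypothesis $\hat{r}_0^{(i)}\xrightarrow{a.s.}\overline{r}_0^{\,(i)}$ (valid as $m^{(i)}\to\infty$), each summand converges a.s.\ to $\overline{r}_0^{\,(i)}q^{(i)}$, and since the sum is finite we get $\hat{r}^*_0 \xrightarrow{a.s.} \sum_i \overline{r}_0^{\,(i)}q^{(i)} = \overline{r}^*_0$.

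Next, $\hat\beta^* = \frac{(1/\alpha)-\hat{r}^*_0}{1-\hat{r}^*_0}$ is a continuous function of $\hat{r}^*_0$ on the relevant domain (the map $x\mapsto \frac{(1/\alpha)-x}{1-x}$ is continuous and in fact smooth away from $x=1$, and $\overline{r}^*_0<1$ since each $\overline{r}_0^{\,(i)}$ is a limit of estimators bounded by $1$ and, being an infimum of a density $g_i$ that integrates to $1$ on $(0,1)$, is strictly less than $1$), so the continuous mapping theorem gives $\hat\beta^*\xrightarrow{a.s.}\overline{\beta}^*$. Similarly $\hat\alpha^{(i)} = \big((1-\hat{r}_0^{(i)})\hat\beta^* + \hat{r}_0^{(i)}\big)^{-1}$ is a continuous function of the a.s.-convergent pair $(\hat{r}_0^{(i)},\hat\beta^*)$, with denominator converging to $(1-\overline{r}_0^{\,(i)})\overline{\beta}^* + \overline{r}_0^{\,(i)} \ge 1 > 0$ (using $\overline{\beta}^*\ge 1$), so $\hat\alpha^{(i)}\xrightarrow{a.s.}\widetilde{\alpha}^{(i)}$, establishing~\eqref{converg}.

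For the rates, I would use the first-order (delta-method / Lipschitz) behavior of these smooth maps on a neighborhood of the limit. From $|\hat{r}_0^{(i)}-\overline{r}_0^{\,(i)}| = \Oc_p(u^{(i)}(m))$ and $|m^{(i)}/m - q^{(i)}| = \Oc_p(m^{-1/2})$ (the latter by the CLT / Hoeffding for the multinomial), the triangle inequality applied to $\hat{r}^*_0 - \overline{r}^*_0 = \sum_i\big(\hat{r}_0^{(i)} - \overline{r}_0^{\,(i)}\big)\frac{m^{(i)}}{m} + \sum_i \overline{r}_0^{\,(i)}\big(\frac{m^{(i)}}{m} - q^{(i)}\big)$ gives $|\hat{r}^*_0 - \overline{r}^*_0| = \Oc_p\big(\max(\max_i u^{(i)}(m), m^{-1/2})\big)$; note that $\frac{m^{(i)}}{m}$ is bounded by $1$ so no issue there. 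Since $\hat\beta^*$ and $\hat\alpha^{(i)}$ are locally Lipschitz functions of $\hat{r}^*_0$ (resp.\ of $(\hat{r}_0^{(i)},\hat\beta^*)$) with bounded derivative on a fixed compact neighborhood of the limit on which the estimators eventually lie with high probability, the same rate transfers. The only mildly delicate point, which I would treat carefully, is justifying that the relevant quantities stay in such a compact set away from the singularity $\{x=1\}$ (for $\hat\beta^*$) with probability tending to one—this follows from the a.s.\ convergence already proved plus $\overline{r}^*_0<1$—so that the $\Oc_p$ bound can be multiplied through by a uniformly bounded Lipschitz constant; I do not expect this to be a genuine obstacle, just the step requiring the most care in writing.
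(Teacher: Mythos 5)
Your proposal is correct and follows essentially the same route as the paper's proof: the paper likewise establishes $m^{(i)}\to\infty$ a.s.\ and $m^{(i)}/m\xrightarrow{a.s.}q^{(i)}$ via the strong law of large numbers, and then propagates the convergence and rates through the algebraic maps by Taylor's theorem, which is exactly your local-Lipschitz/delta-method step. You simply spell out in more detail the decomposition of $\hat{r}^*_0-\overline{r}^*_0$ and the need to keep the denominators away from the singularity, both of which the paper leaves implicit.
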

\begin{proof}
{Since $q^{(i)}_m=q^{(i)}>0$ is fixed for each node, we get $\sum_{m=1}^\infty{q^{(i)}_m}=\infty$ and hence $m^{(i)}\rightarrow\infty$ with probability $1$ for all $i$ as $m\rightarrow\infty$ according to the Borel-Cantelli lemma. We note that $m^{(i)}/m\xrightarrow{a.s.}q^{(i)}$ by Kolmogorov’s strong law of large numbers. The results follow from Taylor's theorem.}
\end{proof}
{Note that if the Storey's estimator is adopted at all local nodes, the convergence rate {of $\hat\a^{(i)}$ would be} $\Oc_p\left(m^{-1/2}\right)$. However, if an estimator with a slower rate is adopted at one node, the convergence rate of {$\hat\a^{(i)}$} at all the other nodes will be impacted.}

\begin{assumption}
\label{assume: consistent}
Estimators of $r_0^{(i)}$ are consistent, i.e., $\overline{r}_0^{\,(i)}=r_0^{\,(i)}$ for all $1\leq i\leq N$.  \label{ass:consis}
\end{assumption}
We note that under Assumption \ref{ass:consis}, we have $\overline{r}_0^*=r_0^*$, $\overline{\beta}^*=\beta^*=\beta(\alpha;r_0^*)$, and $\widetilde{\alpha}^{(i)}={\alpha}^{(i)}$.
Recall that $F$ denotes the (common) alternative CDF of the p-values and define
\begin{equation*}
    \tau^*:=\tau(\alpha;r_0^*)=\sup\{t:F(t)={\beta}^*\, t\}.
\end{equation*}

\begin{assumption}
$F(t)$ is continuously differentiable {in a neighborhood of} $t={\tau}^*$ and $F'({\tau}^*)\neq {\beta}^*$. 
\label{concav}
\end{assumption}
Recall that $\beta(\alpha^{(i)};r_0^{(i)})=\beta^*$ and $\tau(\alpha^{(i)};r_0^{(i)})=\tau^*$ for $\a^{(i)} = \big((1-{r}_0^{(i)})\beta(\alpha;r_0^*) + {r}_0^{(i)}\big)^{-1}$ as discussed in Section~\ref{sec:prp-m}. Therefore, according to Lemma \ref{het} (in Appendix \ref{techlem}), if $\alpha^{(i)}$'s are known, then Assumption~\ref{concav} is sufficient to imply $\tau_{\text{BH}}^{(i)}({\alpha}^{(i)})\xrightarrow{\Pc} \tau^*$ as $m\to\infty$, where $\tau_{\text{BH}}^{(i)}({\alpha}^{(i)})$ denotes the BH rejection threshold at node $i$ with the target FDR ${\alpha}^{(i)}$.  
The following theorem concerns the asymptotic validity of this argument for the BH procedure based on the plug-in estimator $\hat{\alpha}^{(i)}$, i.e., we wish to show that $\tau_{\text{BH}}^{(i)}(\hat{\alpha}^{(i)})\xrightarrow{\Pc} {\tau}^*$ holds as well. 
\begin{theorem}
Assume $m$ p-values generated according to the mixture model $r_0 U(t)+(1-r_0) F(t),\ 0\leq r_0 <1$, under Assumptions~\ref{concav} with $\tau^*=\tau(\widetilde\alpha;r_0)$. Let
\begin{equation}
    \tau_{\text{BH}}(\hat\alpha_m) = \frac{\hat\alpha_m}{m}\max\big\{0\leq k\leq m:P_{(k)}\leq (k/m)\hat\alpha_m\big\}\  \nonumber
\end{equation}
(with convention $P_{(0)} = 0$) denote the rejection threshold for the BH procedure with estimated target FDR $\hat\alpha_m$.
If $\left|\hat{\alpha}_m-\widetilde\alpha\right|=\Oc(t_m)\ a.s.$ with $t_m\to 0$ as $m\to\infty$, then $ \tau_{\text{BH}}(\hat\alpha_m)\xrightarrow{a.s.}\tau(\widetilde\alpha;r_0)=:\tau({\widetilde\alpha})$ and $\tau_{\text{BH}}(\hat\alpha_m)=\tau({\widetilde\alpha})+\Oc(t_m\vee m^{-1/2}(\log m)^{3/2})$ almost surely.
\label{thm:plugin}
\end{theorem}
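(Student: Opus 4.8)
The plan is to reduce the statement to a deterministic comparison between the empirical and population BH thresholds, uniformly over a small neighborhood of $\widetilde\alpha$, and then to exploit monotonicity to absorb the plug-in error. The starting point is the empirical-process representation of the BH threshold,
\begin{equation*}
\tau_{\text{BH}}(\hat\alpha_m)=\sup\big\{t\in[0,1]:\widehat{G}_m(t)\ge t/\hat\alpha_m\big\},
\end{equation*}
where $\widehat{G}_m$ is the empirical CDF of $P_1,\dots,P_m$ (this is the same argument that identifies the deciding index of the BH procedure, using the convention $P_{(0)}=0$). This representation immediately shows that $\alpha\mapsto\tau_{\text{BH}}(\alpha)$ is nondecreasing, since $t/\alpha$ is decreasing in $\alpha$. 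Its population analogue is $\tau(\alpha;r_0)=\sup\{t:G(t)=t/\alpha\}=\sup\{t:G(t)\ge t/\alpha\}$, the second equality holding because $G$ is continuous. Since $|\hat\alpha_m-\widetilde\alpha|=\Oc(t_m)$ a.s., for a suitable constant $C$ we eventually have $\widetilde\alpha-Ct_m\le\hat\alpha_m\le\widetilde\alpha+Ct_m$ a.s., so by monotonicity it suffices to estimate $\tau_{\text{BH}}(\alpha)-\tau(\alpha;r_0)$ uniformly over $\alpha$ in a neighborhood $\Nc$ of $\widetilde\alpha$ and to control how $\tau(\alpha;r_0)$ itself moves over $\Nc$.

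First I would pin down the local geometry of the population threshold. Writing $G'(\tau^*)=r_0+(1-r_0)F'(\tau^*)$ and $1/\widetilde\alpha=r_0+(1-r_0)\beta^*$, Assumption~\ref{concav} ($F'(\tau^*)\neq\beta^*$) is exactly $G'(\tau^*)\neq 1/\widetilde\alpha$; and since $\tau^*$ is the \emph{largest} solution of $G(t)=t/\widetilde\alpha$ while $G(1)=1<1/\widetilde\alpha$ (we may take $\widetilde\alpha<1$, the case $\widetilde\alpha=1$ being degenerate), the crossing at $\tau^*$ must be a down-crossing, i.e. $G'(\tau^*)<1/\widetilde\alpha$. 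A compactness argument on $[0,1]\times\Nc$ — using continuity of $(t,\alpha)\mapsto G(t)-t/\alpha$, the fact that $G(t)-t/\widetilde\alpha<0$ throughout $(\tau^*,1]$, and strict monotonicity of $G(t)-t/\alpha$ in $t$ near $\tau^*$ — then yields constants $\rho,c>0$ and (shrinking $\Nc$ if needed) the following, uniformly over $\alpha\in\Nc$: $\tau(\alpha;r_0)$ is the unique crossing in $(\tau^*-\rho,\tau^*+\rho)$ and is the largest crossing overall; $G(t)-t/\alpha\ge c\,(\tau(\alpha;r_0)-t)$ on $[\tau(\alpha;r_0)-\rho,\tau(\alpha;r_0)]$ and $\le -c\,(t-\tau(\alpha;r_0))$ on $[\tau(\alpha;r_0),\tau(\alpha;r_0)+\rho]$; $t/\alpha-G(t)\ge c$ on $[\tau(\alpha;r_0)+\rho,1]$; and, via the implicit function theorem applied to $G(\tau)-\tau/\alpha=0$ (whose $\tau$-derivative is nonzero by the down-crossing property), $\alpha\mapsto\tau(\alpha;r_0)$ is Lipschitz on $\Nc$, so $\tau(\widetilde\alpha\pm Ct_m;r_0)=\tau^*+\Oc(t_m)$.

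Next I would run the empirical comparison. By the DKW inequality together with Borel–Cantelli, $\delta_m:=\|\widehat{G}_m-G\|_\infty=\Oc\big(m^{-1/2}\sqrt{\log m}\big)$ a.s. Fix $\alpha\in\Nc$, write $\tau_\alpha:=\tau(\alpha;r_0)$ and $\eta:=(2/c)\delta_m$. At $t=\tau_\alpha-\eta$ the slope bound gives $\widehat{G}_m(t)-t/\alpha\ge G(t)-\delta_m-t/\alpha\ge c\eta-\delta_m\ge 0$, hence $\tau_{\text{BH}}(\alpha)\ge\tau_\alpha-\eta$; while any $t\ge\tau_\alpha+\eta$ satisfies $\widehat{G}_m(t)-t/\alpha\le G(t)+\delta_m-t/\alpha<0$ (using the second slope bound when $t\le\tau_\alpha+\rho$, and $t/\alpha-G(t)\ge c>\delta_m$ for $m$ large when $t>\tau_\alpha+\rho$), hence $\tau_{\text{BH}}(\alpha)\le\tau_\alpha+\eta$; so $\tau_{\text{BH}}(\alpha)=\tau(\alpha;r_0)+\Oc(\delta_m)$ with constants uniform over $\Nc$. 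Applying this at $\alpha=\widetilde\alpha\pm Ct_m\in\Nc$ (valid eventually a.s.), the monotonicity sandwich $\tau_{\text{BH}}(\widetilde\alpha-Ct_m)\le\tau_{\text{BH}}(\hat\alpha_m)\le\tau_{\text{BH}}(\widetilde\alpha+Ct_m)$ combined with $\tau(\widetilde\alpha\pm Ct_m;r_0)=\tau^*+\Oc(t_m)$ gives $\tau_{\text{BH}}(\hat\alpha_m)=\tau^*+\Oc(t_m\vee\delta_m)=\tau^*+\Oc\big(t_m\vee m^{-1/2}(\log m)^{3/2}\big)$ a.s. (absorbing $\sqrt{\log m}\le(\log m)^{3/2}$), and in particular $\tau_{\text{BH}}(\hat\alpha_m)\xrightarrow{a.s.}\tau^*=\tau(\widetilde\alpha)$ since $t_m\to 0$.

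The routine ingredients — the empirical-vs-population comparison and the a.s.\ DKW bound — are standard; I expect the main obstacle to be the second step, namely showing that $\tau(\alpha;r_0)$ remains the \emph{largest} crossing with a \emph{uniform} slope gap as $\alpha$ ranges over $\Nc$, since this is a global feature of $G$ that is only accessed through the single non-tangency hypothesis at $\tau^*$ plus the last-crossing property; the compactness/continuity argument above is what makes this go through. Secondary care is needed for the boundary cases $\tau^*=0$ (the lower bound is then trivial and only the estimates near $0$ are used) and $\widetilde\alpha=1$, but these are degenerate and do not affect the conclusion.
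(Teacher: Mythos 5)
Your proof is correct, and its skeleton matches the paper's: both exploit monotonicity of $\alpha\mapsto\tau_{\text{BH}}(\alpha)$ to sandwich $\tau_{\text{BH}}(\hat\alpha_m)$ between $\tau_{\text{BH}}(\widetilde\alpha\pm Ct_m)$, then convert each endpoint to its population threshold, then use local smoothness of $\alpha\mapsto\tau(\alpha;r_0)$ (via the inverse/implicit function theorem) to collapse $\tau(\widetilde\alpha\pm Ct_m;r_0)$ onto $\tau(\widetilde\alpha)$. Where you diverge is in the middle ingredient: the paper delegates the empirical-to-population step to its Lemma~\ref{het}, which is proved by Hoeffding bounds on the deciding index $D_{\text{BH}}$ compared against $a_m=\frac{m\tau}{\alpha}(1-\eps_m)$ and $b_m=\frac{m\tau}{\alpha}(1+\eps_m)$, yielding the $(\log m)^{3/2}$ rate; you instead reprove it from the representation $\tau_{\text{BH}}(\alpha)=\sup\{t:\widehat{G}_m(t)\ge t/\alpha\}$ via DKW plus Borel--Cantelli and an explicit uniform slope-gap argument, which gives the slightly sharper $\sqrt{\log m}$ factor (harmlessly absorbed into the stated rate). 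Your approach buys one thing the paper is silent about: Lemma~\ref{het} is stated for a fixed target level, yet the paper applies it to the moving levels $\widetilde\alpha\pm t'_m$, which implicitly requires exactly the uniformity over a neighborhood $\Nc$ of $\widetilde\alpha$ that you establish explicitly through the compactness argument (uniform last-crossing property and uniform slope gap). Your identification of the down-crossing $G'(\tau^*)<1/\widetilde\alpha$ from non-tangency plus the last-crossing property mirrors the paper's own deduction of $F'(\tau)<\beta$ inside the proof of Lemma~\ref{het}. In short: same architecture, a self-contained and somewhat more careful substitute for the key lemma.
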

\begin{proof}
According to Assumption~\ref{concav}, there exist a $\delta$-neighborhood of $\widetilde\alpha$, $\mathcal{B}_{\delta}(\widetilde\alpha)$, such that $F'(\tau_{\alpha'})\neq \beta_{\alpha'}$ for all $\alpha'\in\mathcal{B}_{\delta}(\widetilde\alpha)$. 
{Take $t'_m = c\, t_m$ such that $|\hat\alpha_m-\widetilde\alpha|\leq t'_m\ a.s.$ for large $m$.
In this case, for large $m$ we get 
\begin{equation*}
    \tau_{\text{BH}}(\widetilde\alpha-t'_m)\leq \tau_{\text{BH}}(\hat\alpha_m)\leq \tau_{\text{BH}}(\widetilde\alpha+t_m') \quad \ a.s.\ .
\end{equation*}
Take the sequence $\varepsilon_m = m^{-1/2}(\log m)^{3/2}$. Then according to Lemma~\ref{het} (with $\kappa=1/2$), we get 
\begin{equation*}
    (1-\varepsilon_m)\tau({\widetilde\alpha-t_m'})\leq \tau_{\text{BH}}(\hat\alpha_m)\leq (1+\varepsilon_m)\tau({\widetilde\alpha+t_m'}) \ a.s.
\end{equation*}
for large $m$. 
According to Assumption~\ref{concav} and the inverse function theorem~\cite{tao2015analysis}, $\tau(\alpha)$ has a (continuous) derivative in a neighborhood of $\widetilde{\alpha}$. Hence, by Taylor's theorem   
we get $\left|\tau_{\text{BH}}(\hat\alpha_m)-\tau({\widetilde\alpha})\right|=\Oc(t_m\vee m^{-1/2}(\log m)^{3/2})$ a.s.\ .}
\end{proof}

\begin{corollary}
{Assume the same setting as in Theorem~\ref{thm:plugin}. If $\left|\hat{\alpha}_m-\widetilde\alpha\right|=\Oc_p(t_m)$ with $t_m\to 0$ as $m\to\infty$, then $\tau_{\text{BH}}(\hat\alpha_m)=\tau({\widetilde\alpha})+\Oc_p(t_m\vee m^{-1/2}\log m)$.}
\end{corollary}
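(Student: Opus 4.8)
The plan is to run the argument of Theorem~\ref{thm:plugin} essentially verbatim, replacing every ``almost surely'' statement by a ``with probability at least $1-\eta$'' statement and invoking the in-probability form of Lemma~\ref{het} (from Appendix~\ref{techlem}); this is the only structural change, and it is precisely why the logarithmic power improves from $(\log m)^{3/2}$ to $\log m$, exactly as in the passage from the a.s.\ bounds to the $o_p(\log m/\sqrt m)$ bounds used in the proof of Proposition~\ref{thm:no-com-BH}. Concretely, fix $\eta>0$. Since $\left|\hat\alpha_m-\widetilde\alpha\right|=\Oc_p(t_m)$, there are a constant $C=C(\eta)$ and an index $m_0(\eta)$ so that the event $A_m:=\{\left|\hat\alpha_m-\widetilde\alpha\right|\le C\,t_m\}$ has $\P(A_m)\ge 1-\eta$ for all $m\ge m_0(\eta)$. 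Because $t_m\to 0$, for all large $m$ the deterministic interval $[\widetilde\alpha-C t_m,\ \widetilde\alpha+C t_m]$ is contained in the neighborhood $\mathcal{B}_{\delta}(\widetilde\alpha)$ of Assumption~\ref{concav} on which $F'(\tau_{\alpha'})\neq\beta_{\alpha'}$. On $A_m$, monotonicity of $\alpha'\mapsto\tau_{\text{BH}}(\alpha')$ gives the sandwich
\[
\tau_{\text{BH}}(\widetilde\alpha-C t_m)\ \le\ \tau_{\text{BH}}(\hat\alpha_m)\ \le\ \tau_{\text{BH}}(\widetilde\alpha+C t_m).
\]

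Next I would apply Lemma~\ref{het} in its in-probability form (with $\kappa=1/2$) to the two \emph{deterministic} sequences of target levels $\widetilde\alpha\pm C t_m\to\widetilde\alpha$: with $\varepsilon_m:=m^{-1/2}\log m$, there is an event $B_m$ with $\P(B_m)\ge 1-\eta$ for large $m$ on which $(1-\varepsilon_m)\,\tau(\widetilde\alpha-C t_m)\le\tau_{\text{BH}}(\widetilde\alpha-C t_m)$ and $\tau_{\text{BH}}(\widetilde\alpha+C t_m)\le(1+\varepsilon_m)\,\tau(\widetilde\alpha+C t_m)$. Combining with the previous display, on $A_m\cap B_m$ (probability $\ge 1-2\eta$ for $m$ large) we get
\[
(1-\varepsilon_m)\,\tau(\widetilde\alpha-C t_m)\ \le\ \tau_{\text{BH}}(\hat\alpha_m)\ \le\ (1+\varepsilon_m)\,\tau(\widetilde\alpha+C t_m).
\]
By Assumption~\ref{concav} and the inverse function theorem, $\alpha'\mapsto\tau(\alpha')$ is continuously differentiable near $\widetilde\alpha$, so Taylor's theorem yields $\tau(\widetilde\alpha\pm C t_m)=\tau(\widetilde\alpha)+\Oc(t_m)$, and since $\tau(\cdot)$ is bounded near $\widetilde\alpha$ we also have $\varepsilon_m\,\tau(\widetilde\alpha\pm C t_m)=\Oc(\varepsilon_m)$. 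Substituting into the sandwich produces a constant $C'=C'(\eta)$ with $\left|\tau_{\text{BH}}(\hat\alpha_m)-\tau(\widetilde\alpha)\right|\le C'\,(t_m\vee\varepsilon_m)$ on $A_m\cap B_m$. Since $\P(A_m\cap B_m)\ge 1-2\eta$ for all large $m$ and $\eta>0$ was arbitrary, this is by definition $\tau_{\text{BH}}(\hat\alpha_m)=\tau(\widetilde\alpha)+\Oc_p(t_m\vee m^{-1/2}\log m)$.

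I do not expect a genuine obstacle here: the argument is the stochastic-boundedness shadow of Theorem~\ref{thm:plugin}. The only points requiring care are bookkeeping ones — the constants $C,C'$ and the events $A_m,B_m$ are allowed to depend on the tolerance $\eta$, which is exactly what the $\Oc_p$ notation permits — and checking that the in-probability version of Lemma~\ref{het} applies with a \emph{moving} (but deterministic) target level $\widetilde\alpha\pm C t_m$ rather than a fixed one. This last point is already how Lemma~\ref{het} is used in the proofs of Proposition~\ref{thm:no-com-BH} and Theorem~\ref{thm:plugin}, so it introduces nothing new; one simply needs the lemma's $\Oc_p$ estimate to be uniform along such a convergent sequence of levels inside $\mathcal{B}_{\delta}(\widetilde\alpha)$, which is built into its statement in Appendix~\ref{techlem}.
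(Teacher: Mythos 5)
Your proposal is correct and follows essentially the same route as the paper: the paper's proof of this corollary is literally "run the proof of Theorem~\ref{thm:plugin} again with $\varepsilon'_m = m^{-1/2}\log m$ in place of $\varepsilon_m$," i.e., invoke the in-probability rate $o_p((\log m)^{1/2+\gamma}/\sqrt m)$ from Lemma~\ref{het} (with $\gamma=1/2$) instead of the almost-sure rate, which is exactly the substitution you make. Your added bookkeeping about $\eta$-dependent constants and the uniformity of the lemma along the deterministic perturbed levels $\widetilde\alpha\pm C t_m$ is sound and matches how the sandwich argument is already used in the theorem's proof.
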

\begin{proof}
    {The proof follows a similar argument as in the proof of Theorem~\ref{thm:plugin} with $\varepsilon'_m = m^{-1/2}\log m$ (instead of $\varepsilon_m$).}
\end{proof}

\begin{propo}
Under Assumptions~\ref{ass:fixed_dist},~\ref{ass:consis} and~\ref{concav}, if $\tau^*>0$ then the proportion-matching method attains the pooled BH performance (FDR and power) asymptotically. {
Moreover, if $\left|\hat{r}_0^{(i)}-\overline{r}_0^{\,(i)}\right|=\Oc_p\left(u^{(i)}(m)\right)$, 
then $\left|\tau_{\text{BH}}^{(i)}(\hat{\alpha}^{(i)})-\tau^*_{\text{BH}}(\alpha)\right|=\Oc_p\left(v_m\right)$, $\mathsf{FDR}=r_0^*\alpha+\Oc\left(v_m\right)$, and $\mathsf{power}=\mathsf{power}^*+\Oc\left(v_m\right)$, where $v_m=u_m\vee m^{-1/2}\log m$ with $u_m={\max}_i\ u^{(i)}(m)$, and $\mathsf{power}^*$ denotes the pooled power.}
\end{propo}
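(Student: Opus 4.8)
The plan is to show that the proportion‑matching method makes every local threshold collapse onto the \emph{same} deterministic value $\tau^*=\tau(\alpha;r_0^*)$ at rate $\Oc_p(v_m)$, that the pooled BH threshold $\tau^*_{\text{BH}}(\alpha)$ collapses onto that same $\tau^*$ (at a faster rate), and that consequently both procedures asymptotically reject $\{P\le\tau^*\}$ network‑wide, so their FDP and TDP admit the same closed‑form limits. The three ingredients are Lemma~\ref{lem:lln} (for the estimated test sizes), the corollary of Theorem~\ref{thm:plugin} (for the BH threshold driven by an estimated target level), and the proportion‑matching identity $\beta(\alpha^{(i)};r_0^{(i)})=\beta^*$, hence $\tau(\alpha^{(i)};r_0^{(i)})=\tau^*$, established in Section~\ref{sec:prp-m}.

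\emph{Step 1 (threshold concentration).} By Lemma~\ref{lem:lln} combined with Assumption~\ref{ass:consis} (so that $\widetilde\alpha^{(i)}=\alpha^{(i)}$), $\big|\hat\alpha^{(i)}-\alpha^{(i)}\big|=\Oc_p\big(\max(u_m,m^{-1/2})\big)$ with $u_m=\max_i u^{(i)}(m)$. At node $i$ the p‑values are \iid from $G_i=r_0^{(i)}U+(1-r_0^{(i)})F$ (Assumption~\ref{ass:fixed_dist}), the fixed point $\tau(\alpha^{(i)};r_0^{(i)})$ equals $\tau^*$, and $F'(\tau^*)\ne\beta^*=\beta(\alpha^{(i)};r_0^{(i)})$ by Assumption~\ref{concav}; so the corollary of Theorem~\ref{thm:plugin}, applied with $\hat\alpha_m=\hat\alpha^{(i)}$, $\widetilde\alpha=\alpha^{(i)}$, $t_m=\max(u_m,m^{-1/2})$ (and $m^{(i)}\asymp m$), yields $\tau_{\text{BH}}^{(i)}(\hat\alpha^{(i)})=\tau^*+\Oc_p(v_m)$. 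Applying the same corollary to the pooled sample, which is \iid from $\Gc=r_0^*U+(1-r_0^*)F$ with the same fixed point $\tau^*$ and $F'(\tau^*)\ne\beta^*$, with constant target level $\hat\alpha_m\equiv\alpha$ (so $t_m=0$), gives $\tau_{\text{BH}}^*(\alpha)=\tau^*+\Oc_p(m^{-1/2}\log m)$. The triangle inequality then yields $\big|\tau_{\text{BH}}^{(i)}(\hat\alpha^{(i)})-\tau_{\text{BH}}^*(\alpha)\big|=\Oc_p(v_m)$.

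\emph{Step 2 (FDR and power).} Write the global FDP of the proportion‑matching method as the ratio of $m^{-1}\sum_{i}V_m^{(i)}\big(\tau_{\text{BH}}^{(i)}(\hat\alpha^{(i)})\big)$ to $m^{-1}\big(1\vee\sum_{i}R_m^{(i)}(\tau_{\text{BH}}^{(i)}(\hat\alpha^{(i)}))\big)$. On the (asymptotically certain) event that all $\tau_{\text{BH}}^{(i)}(\hat\alpha^{(i)})$ lie in a fixed neighborhood of $\tau^*$, a uniform DKW‑type empirical‑process bound over that neighborhood, together with Step~1 and the Lipschitz continuity of $U,G_i,F$ near $\tau^*$, gives $m^{-1}\sum_i V_m^{(i)}(\cdot)=r_0^*\tau^*+\Oc_p(v_m)$ and $m^{-1}\sum_i R_m^{(i)}(\cdot)=\Gc(\tau^*)+\Oc_p(v_m)=\tau^*/\alpha+\Oc_p(v_m)$, the last step because $\Gc(\tau^*)=\tau^*/\alpha$ by the definition of $\tau^*$. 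Since $\tau^*>0$ the denominator stays bounded away from $0$, so $\mathsf{FDP}=r_0^*\alpha+\Oc_p(v_m)$; the analogous computation with $m^{-1}\sum_i (R_m^{(i)}-V_m^{(i)})(\cdot)=r_1^*F(\tau^*)+\Oc_p(v_m)$ and $m_1/m=r_1^*+\Oc_p(m^{-1/2})$ (here $r_1^*>0$) gives $\mathsf{TDP}=F(\tau^*)+\Oc_p(v_m)$. Repeating Step~2 for the pooled BH procedure with threshold $\tau_{\text{BH}}^*(\alpha)=\tau^*+\Oc_p(m^{-1/2}\log m)$ produces $\mathsf{FDP}^*=r_0^*\alpha+\Oc_p(m^{-1/2}\log m)$ and $\mathsf{TDP}^*=F(\tau^*)+\Oc_p(m^{-1/2}\log m)$, so the two procedures share the limits $r_0^*\alpha$ and $F(\tau^*)=\mathsf{power}^*$. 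Finally, exactly as in the proof of Proposition~\ref{thm:no-com-BH}, the boundedness of FDP and TDP (with the exponential tails of the empirical‑CDF deviations) upgrades these in‑probability statements to $\mathsf{FDR}=r_0^*\alpha+\Oc(v_m)$ and $\mathsf{power}=\mathsf{power}^*+\Oc(v_m)$.

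\emph{Main obstacle.} The delicate point is the propagation of the rate through the composition. The BH threshold depends on the estimated level only through a monotone sandwich, so one must first squeeze $\hat\alpha^{(i)}$ between deterministic sequences before invoking Theorem~\ref{thm:plugin}, and then control the \emph{ratio} defining the FDP when numerator and denominator each carry an $\Oc_p(v_m)$ error — this is precisely where $\tau^*>0$ is essential and where a uniform, rather than pointwise, empirical‑process bound is needed because the plugged‑in threshold is random. A secondary subtlety is the passage from the $\Oc_p$ rates for FDP and TDP to the $\Oc$ rates for their expectations, which uses the exponential concentration of the empirical CDF and not merely its $\Oc_p(m^{-1/2})$ order.
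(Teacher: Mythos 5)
Your proposal is correct and follows essentially the same route as the paper: Lemma~\ref{lem:lln} with Assumption~\ref{ass:consis} to concentrate $\hat\alpha^{(i)}$ around $\alpha^{(i)}$, Theorem~\ref{thm:plugin} (and its corollary for the rate) to push this through to the local BH thresholds, Lemma~\ref{het} for the pooled threshold, and then the empirical-process/ratio argument of Proposition~\ref{thm:no-com-BH} together with boundedness of FDP and TDP to convert threshold convergence into the stated FDR and power rates. You simply spell out more explicitly (uniform DKW bound near $\tau^*$, the role of $\tau^*>0$ in keeping the denominator away from zero, the monotone sandwich on $\hat\alpha^{(i)}$) the steps the paper delegates to "the same argument as in the proof of Proposition~\ref{thm:no-com-BH}."
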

\begin{proof}
According to Assumption~\ref{ass:consis} and Lemma \ref{lem:lln}, we have $\hat\alpha^{(i)}\xrightarrow{a.s.}\alpha^{(i)}$ for all $1\leq i\leq N$. Therefore, by Theorem~\ref{thm:plugin}, we get $\tau_{\text{BH}}^{(i)}(\hat{\alpha}^{(i)})\xrightarrow{a.s.} \tau(\alpha^{(i)};r_0^{(i)})=\tau^*$. Also, we have $\tau^*_{\text{BH}}(\alpha)\xrightarrow{a.s.}\tau^*$ according to Lemma \ref{het}. Hence, each node rejects according to the global BH threshold asymptotically. Thus, since $\tau^*>0$, FDR and power of the proportion-matching method also converge to the same limiting values as the global BH asymptotic performance. This follows from the convergence and boundedness of $\mathsf{FDP}$ and $\mathsf{TDP}$ similar to the proof of Proposition~\ref{thm:no-com-BH}. {The convergence rate can be obtained by combining the rates from Lemma
~\ref{lem:lln} and {Corollary~\ref{thm:plugin}} and using the same argument as in the proof of Proposition~\ref{thm:no-com-BH}.}
\end{proof}

\begin{remark}
The proportion matching method can potentially be extended to match the pooled performance of the scan and multiple reference points {procedures (\!\!~\cite{ARIASCASTRO202142,chi2007performance})} as well.
\end{remark}

\vspace{-1em}
\section{Robustness Against Inconsistent Estimators and heterogeneous Alternatives}
\label{sec:robust}
In this section we wish to relax Assumptions~\ref{ass:fixed_dist} and \ref{ass:consis} in the proportion matching method. {Specifically, Theorem~\ref{prop_het} concerns relaxing both assumptions simultaneously, while the rest of the section focuses on dropping only Assumption~\ref{ass:fixed_dist}, i.e., a heterogeneous setting is considered, where each node can have a unique alternative distribution function $F_i(t)$.} In this respect, we write $\widetilde{\tau}^{(i)}:=\sup\{t:F_i(t)=\widetilde{\beta}^{(i)}t\}$  
where $\widetilde{\beta}^{(i)}=\beta\big(\widetilde{\alpha}^{(i)};r_0^{(i)}\big),\ 1\leq i\leq N$.

In light of the existing estimators that converge to an upper bound of the proportion of true nulls {(as discussed in Section~\ref{ssec:prop_est}), Theorem~\ref{prop_het}} makes an attempt to obtain an upper bound on the (asymptotic) global FDR for any such estimators in a heterogeneous setting. We introduce $\Delta = \sum_{i=1}^{N}q^{(i)} |d_i|$
with $d_i = r_0^{(i)} - r_0^*$, as a natural measure of network heterogeneity in terms of proportion of true nulls at different nodes, noting that $\sum_{i=1}^{N}q^{(i)}=1$ and $\sum_{i=1}^{N}q^{(i)} d_i=0$ according to the definition of $r_0^*$.  

We need one mild technical assumptions on $F_i$ to invoke the asymptotic characterization from~\cite{genovese2002operating} {via Lemma~\ref{het}}. 

\begin{assumption}
$F_i(t)$ is {continuously} differentiable {in a neighborhood of} $t=\widetilde{\tau}^{(i)}$, $F_i'\big(\widetilde{\tau}^{(i)}\big)\neq  \widetilde{\beta}^{(i)}$ all $i$, and $\widetilde{\tau}^{(i)}>0$ for at least one node.
\label{concav2}
\end{assumption}

\begin{theorem}\label{prop_het}
If Assumptions~\ref{concav2} holds, then for $r_0^*\leq\underset{1\leq i\leq N}{\min}\frac{1-\overline{r}_0^{\,(i)}}{1-r_0^{\,(i)}}$, we have
\begin{equation}\label{het_propbnd}\underset{m\rightarrow\infty}{\lim}\mathsf{FDR}\leq \alpha + \frac{\breve{V}+\breve{R}}{\big(\breve{R}-\Delta \big)^2}\Delta\, , \qquad \text{for } \Delta < \breve{R}\, ,
\end{equation}
where $\breve{V}=r_0^*\sum_{i=1}^{N}q^{(i)} \widetilde{\tau}^{(i)}$ and
\begin{align*}
    \breve{R}&=r_0^*\sum_{i=1}^{N}q^{(i)} \widetilde{\tau}^{(i)}+r_1^*\sum_{i=1}^{N} q^{(i)} F_i(\widetilde{\tau}^{(i)}).
\end{align*} 
\end{theorem}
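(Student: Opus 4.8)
The plan is to (a) identify the almost-sure limit of $\mathsf{FDP}$, (b) rewrite that limit as a $\Delta$-perturbation of $\breve V/\breve R$, and (c) show $\breve V/\breve R\le\alpha$; the bound then follows by combining (b) and (c). For (a): by Lemma~\ref{lem:lln} the plug-in levels satisfy $\hat\alpha^{(i)}\xrightarrow{a.s.}\widetilde\alpha^{(i)}$, and applying Theorem~\ref{thm:plugin} at each node $i$ with $(F,r_0,\widetilde\alpha)=(F_i,r_0^{(i)},\widetilde\alpha^{(i)})$ — legitimate under Assumption~\ref{concav2}, the node-wise counterpart of Assumption~\ref{concav} — yields $\tau_{\text{BH}}^{(i)}(\hat\alpha^{(i)})\xrightarrow{a.s.}\tau(\widetilde\alpha^{(i)};r_0^{(i)})=\widetilde\tau^{(i)}$. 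Since $G_i$ is continuous at $\widetilde\tau^{(i)}$ (Assumption~\ref{concav2}), the empirical CDF at node $i$ converges uniformly to $G_i$ (Glivenko--Cantelli), and $m^{(i)}/m\xrightarrow{a.s.}q^{(i)}$, the argument in the proof of Proposition~\ref{thm:no-com-BH} gives $\mathsf{FDP}\xrightarrow{a.s.}V_\infty/R_\infty$ with $V_\infty:=\sum_{i=1}^N q^{(i)} r_0^{(i)}\widetilde\tau^{(i)}$ and $R_\infty:=\sum_{i=1}^N q^{(i)} G_i(\widetilde\tau^{(i)})$; here $R_\infty>0$ since $G_i(\widetilde\tau^{(i)})=\widetilde\tau^{(i)}/\widetilde\alpha^{(i)}>0$ at the node with $\widetilde\tau^{(i)}>0$ guaranteed by Assumption~\ref{concav2}. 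Boundedness of $\mathsf{FDP}$ then gives $\lim_{m}\mathsf{FDR}=V_\infty/R_\infty$.

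For the perturbation step (b), using $r_1^{(i)}-r_1^*=-d_i$ one has $V_\infty-\breve V=\sum_i q^{(i)} d_i\widetilde\tau^{(i)}$ and $R_\infty-\breve R=\sum_i q^{(i)} d_i\big(\widetilde\tau^{(i)}-F_i(\widetilde\tau^{(i)})\big)$; since $\widetilde\tau^{(i)},F_i(\widetilde\tau^{(i)})\in[0,1]$ and $\sum_i q^{(i)}|d_i|=\Delta$, both differences are bounded in absolute value by $\Delta$. In particular $R_\infty\ge\breve R-\Delta>0$, and writing $\frac{V_\infty}{R_\infty}-\frac{\breve V}{\breve R}=\frac{(V_\infty-\breve V)\breve R-\breve V(R_\infty-\breve R)}{R_\infty\breve R}$, the numerator is at most $(\breve V+\breve R)\Delta$ in absolute value while the denominator is at least $\breve R(\breve R-\Delta)\ge(\breve R-\Delta)^2$, so $V_\infty/R_\infty\le\breve V/\breve R+\frac{(\breve V+\breve R)\Delta}{(\breve R-\Delta)^2}$.

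For step (c), the fixed-point identities $F_i(\widetilde\tau^{(i)})=\widetilde\beta^{(i)}\widetilde\tau^{(i)}$ and $G_i(\widetilde\tau^{(i)})=\widetilde\tau^{(i)}/\widetilde\alpha^{(i)}$ let one write $\breve V=r_0^*S$ and $\breve R=r_0^*S+(1-r_0^*)T$ with $S=\sum_i q^{(i)}\widetilde\tau^{(i)}>0$ and $T=\sum_i q^{(i)}\widetilde\tau^{(i)}\widetilde\beta^{(i)}$, so $\breve V/\breve R\le\alpha$ is equivalent to $T/S\ge r_0^*(1-\alpha)/\big(\alpha(1-r_0^*)\big)$; since $T/S$ is a convex combination of the $\widetilde\beta^{(i)}$, it suffices to prove the pointwise bound $\widetilde\beta^{(i)}\ge r_0^*(1-\alpha)/\big(\alpha(1-r_0^*)\big)$ for every $i$. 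For this, the identity $1/\widetilde\alpha^{(i)}=(1-\overline r_0^{(i)})\overline\beta^*+\overline r_0^{(i)}$ and $\overline r_0^{(i)}\ge r_0^{(i)}$ give $\widetilde\beta^{(i)}=\frac{1/\widetilde\alpha^{(i)}-r_0^{(i)}}{1-r_0^{(i)}}\ge\frac{1-\overline r_0^{(i)}}{1-r_0^{(i)}}\overline\beta^*\ge r_0^*\overline\beta^*$, the last step being exactly the hypothesis $r_0^*\le\min_i(1-\overline r_0^{(i)})/(1-r_0^{(i)})$; and since $x\mapsto(1-\alpha x)/(1-x)$ is increasing on $[0,1)$, $r_0^*\le\overline r_0^*$, and $r_0^*\le1$, we get $\overline\beta^*=\frac{1-\alpha\overline r_0^*}{\alpha(1-\overline r_0^*)}\ge\frac{1-\alpha r_0^*}{\alpha(1-r_0^*)}\ge\frac{1-\alpha}{\alpha(1-r_0^*)}$. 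Chaining these two displays yields the desired pointwise bound, hence $\breve V/\breve R\le\alpha$, and combining with (b) proves $\lim_m\mathsf{FDR}\le\alpha+\frac{(\breve V+\breve R)\Delta}{(\breve R-\Delta)^2}$.

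The main obstacle is step (c): the bound $\breve V/\breve R\le\alpha$ fails node-by-node (that is precisely why a $\Delta$ correction is unavoidable), so one must exploit the global averaging structure by reducing it to a \emph{pointwise} lower bound on the effective slopes $\widetilde\beta^{(i)}$, and then recognize that the assumption $r_0^*\le\min_i(1-\overline r_0^{(i)})/(1-r_0^{(i)})$ is exactly what bridges $r_0^*\overline\beta^*$ and $r_0^*(1-\alpha)/\big(\alpha(1-r_0^*)\big)$. Steps (a) and (b) are, respectively, a direct adaptation of the proof of Proposition~\ref{thm:no-com-BH} and an elementary perturbation estimate.
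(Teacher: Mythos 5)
Your proof is correct and follows essentially the same route as the paper's: identify the limiting FDR via the node-wise thresholds $\widetilde{\tau}^{(i)}$ (Theorem~\ref{thm:plugin} under Assumption~\ref{concav2}), prove the homogenized bound $\breve{V}/\breve{R}\leq\alpha$ by establishing the pointwise slope inequality $\widetilde{\beta}^{(i)}\geq r_0^*(1-\alpha)/\big(\alpha(1-r_0^*)\big)=\underline{\beta}^*$ from the hypothesis on $r_0^*$, and then absorb the heterogeneity into a $\Delta$-perturbation of the ratio. The only differences are presentational (the paper reaches the slope bound via $\widetilde{\alpha}^{(i)}<\overline{\alpha}^{(i)}$ and monotonicity of $\beta(\cdot\,;r_0^{(i)})$, and arranges the perturbation algebra slightly differently), so there is nothing substantive to add.
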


\begin{proof}
Define $\underline{\beta}^*:=\beta(\alpha/r_0^*;r_0^*)$ and $\overline{\alpha}^{(i)} :=\big((1-r^{(i)}_0) \underline{\beta}^*+r^{(i)}_0\big)^{-1}$. Notice that ${\beta}\big(\overline{\alpha}^{(i)};r_0^{(i)}\big) =\underline{\beta}^*$. 

Recall that $\widetilde{\alpha}^{(i)}=\big({\big(1-\overline{r}_0^{\,(i)}\big)\overline{\beta}^*+\overline{r}_0^{\,(i)}}\big)^{-1}$. We now show $\widetilde{\alpha}^{(i)} < \overline{\alpha}^{(i)}$ for all $0<\alpha\leq 1$ and $1\leq i\leq N$. Observe, $\overline{\beta}^* > \frac{({1}/{\alpha})-r^*_0}{1-r^*_0}$ since $r^*_0< \overline{r}^*_0$. Therefore, $\widetilde{\alpha}^{(i)} < \overline\alpha^{(i)}$ (using the definitions of $\overline{\beta}^*$ and $\underline{\beta}^*$) is {implied by} to the following,
\begin{align*}
    &\Big(1-\overline{r}_0^{\,(i)}\Big)\bigg(\frac{(1/\alpha)-r^*_0}{1-r^*_0}\bigg)+\overline{r}_0^{\,(i)} > \\&\qquad\qquad\qquad\qquad\quad \Big(1-{r}_0^{\,(i)}\Big)\bigg(\frac{r^*_0\,(1/\alpha-1)}{1-r^*_0}\bigg)+{r}_0^{\,(i)}\nonumber\\
    &\iff \alpha > \frac{r^*_0\Big(1-{r}_0^{(i)}\Big)-\Big(1-\overline{r}_0^{\,(i)}\Big)}{\overline{r}_0^{(i)}-{r}_0^{\,(i)}},\ \text{for all } 0<\alpha\leq 1 \nonumber\\
    &\iff r^*_0\Big(1-{r}_0^{(i)}\Big)-\Big(1-\overline{r}_0^{\,(i)}\Big) \leq 0\\
    &\iff r_0^*\leq\frac{1-\overline{r}_0^{\,(i)}}{1-r_0^{\,(i)}}\ .
\end{align*}
Hence, we have
\begin{align*}
    &\frac{r_0^*\widetilde{\tau}^{(i)}}{r_0^*\widetilde{\tau}^{(i)}+r_1^*F_i(\widetilde{\tau}^{(i)})}=\frac{r_0^*}{r_0^*+r_1^* \widetilde{\beta}^{(i)}}\\
    &\le\frac{r_0^*}{r_0^*+r_1^* \underline{\beta}^*}
    = \alpha ,\quad \text{for all}\  \left\{1\leq i\leq N:\widetilde{\tau}^{(i)}>0\right\} \,,
\end{align*}
where the inequality follows from
\begin{equation*}
    \underline{\beta}^*={\beta}\big(\overline{\alpha}^{(i)};r_0^{(i)}\big) \leq {\beta}\big(\widetilde{\alpha}^{(i)};r_0^{(i)}\big) =\widetilde{\beta}^{(i)},
\end{equation*}
since $\widetilde{\alpha}^{(i)} < \overline{\alpha}^{(i)}$.
This implies,
\begin{equation}
    \frac{\sum_{i=1}^{N}q^{(i)} r_0^*\widetilde{\tau}^{(i)}}{\sum_{i=1}^{N}q^{(i)}\big(r_0^*\widetilde{\tau}^{(i)}+r_1^* F_i(\widetilde{\tau}^{(i)})\big)} \leq \alpha .\label{eq:bound_1}
\end{equation}
 {By Assumption~\ref{concav2} and Theorem~\ref{thm:plugin},} we can bound the asymptotic FDR as follows,
\begin{align}
    & \underset{m\rightarrow\infty}{\lim}\mathsf{FDR} =\frac{\sum_{i=1}^{N}q^{(i)}r_0^{(i)}\widetilde{\tau}^{(i)}}{\sum_{i=1}^{N}q^{(i)}\big(r_0^{(i)}\widetilde{\tau}^{(i)}+r_1^{(i)}F_i(\widetilde{\tau}^{(i)})\big)}\nonumber\\
    & = \frac{\sum_{i=1}^{N}q^{(i)}\big(r_0^*+ d_i\big)\widetilde{\tau}^{(i)}}{\sum_{i=1}^{N}q^{(i)}\Big(\big(r_0^*+d_i\big)\widetilde{\tau}^{(i)}+ \big(r^*_1-d_i\big)F_i(\widetilde{\tau}^{(i)})\Big)}\nonumber\\
    & = \frac{\sum_{i=1}^{N}q^{(i)} r_0^*\widetilde{\tau}^{(i)}+\sum_{i=1}^{N}q^{(i)} d_i\widetilde{\tau}^{(i)}}{\sum_{i=1}^{N}q^{(i)}A_i+\sum_{i=1}^{N}q^{(i)} B_i}\label{eq:limFDR},
    \end{align}
    where we define
    \begin{align*}
        A_i:= \big(r_0^*\widetilde{\tau}^{(i)}+ r^*_1 F_i(\widetilde{\tau}^{(i)})\big)\;\text{and }\;
       B_i:=d_i\big(\widetilde{\tau}^{(i)}-F_i(\widetilde{\tau}^{(i)})\big).
    \end{align*}
    Since $\widetilde{\tau}^{(i)}\le 1$ and $F(x)\le 1$, we can upper bound~\eqref{eq:limFDR} as 
    \begin{align}
    &\frac{\sum_{i=1}^{N}q^{(i)} r_0^*\widetilde{\tau}^{(i)}+\sum_{i=1}^{N}q^{(i)} |d_i|}{\sum_{i=1}^{N}q^{(i)}\big(r_0^*\widetilde{\tau}^{(i)}+ r^*_1 F_i(\widetilde{\tau}^{(i)})\big)-\sum_{i=1}^{N}q^{(i)} |d_i|}\nonumber\\
    & \overset{(a)}{\leq} \frac{\sum_{i=1}^{N}q^{(i)} r_0^*\widetilde{\tau}^{(i)}}{\sum_{i=1}^{N}q^{(i)}\big(r_0^*\widetilde{\tau}^{(i)}+r_1^* F_i(\widetilde{\tau}^{(i)})\big)} +\nonumber\\
    & \qquad \frac{\sum_{i=1}^{N}q^{(i)}\big(2 r_0^*\widetilde{\tau}^{(i)}+r_1^* F_i(\widetilde{\tau}^{(i)})\big)}{\Big(\sum_{i=1}^{N}q^{(i)}\big(r_0^*\widetilde{\tau}^{(i)}+r_1^* F_i(\widetilde{\tau}^{(i)})\big)-\Delta \Big)^2}\Delta\nonumber\\
    &\overset{(b)}{\leq} \alpha + \frac{\breve{V}+\breve{R}}{\big(\breve{R}-\Delta \big)^2}\Delta\ ,\label{boundincon}
\end{align}
where (a) follows from the definition of $\Delta$ and elementary algebra, and (b) from~\eqref{eq:bound_1} and the definitions of $\breve{V}$ and $\breve{R}$. 
\end{proof}

We observe that for consistent estimators of $r_0^{(i)}$, we have $\widetilde{\alpha}^{(i)}={\alpha}^{(i)}$, and as a result
\begin{align*}
    \widetilde{\beta}^{(i)}&={\beta}\big(\widetilde{\alpha}^{(i)};r_0^{(i)}\big) ={\beta}\big(\alpha^{(i)};r_0^{(i)}\big),\\ 
   \beta^*&=\beta(\alpha;r_0^*)\overset{(*)}{=}\beta(\alpha^{(i)};r_0^{(i)}),
\end{align*}
where ($*$) follows from the definition of $\alpha^{(i)}$. Hence, $\widetilde{\beta}^{(i)}=\beta^*$. Recall that $\widetilde{\tau}^{(i)}=\sup\{t:F_i(t)=\widetilde{\beta}^{(i)}t\}$. Using $\widetilde{\beta}^{(i)}=\beta^*$, we get  $\widetilde{\tau}^{(i)}=\sup\{t:F_i(t)={\beta}^*t\}=:\tau^{(i)}$ and $F_i(\tau^{(i)})=\beta^* {\tau}^{(i)}$. Now the bound \eqref{het_propbnd} will hold with $r_0^*\,\alpha\leq\alpha$ instead of $\alpha$, where $\breve{R}=(r_0^*+r_1^*\beta^*)\tau$, $\breve{V}=r_0^*\tau$, and $\tau = \sum_{i=1}^{N}q^{(i)} {\tau}^{(i)}$. 
\begin{corollary}\label{Fheter1-2} 
Under Assumptions~\ref{ass:consis} and~\ref{concav2}, we have
\begin{equation}\label{het_propbnd2}
    \underset{m\rightarrow\infty}{\lim}\mathsf{FDR}\leq r_0^*\alpha + \frac{\breve{V}+\breve{R}}{\big(\breve{R}-\Delta \big)^2}\Delta\, , \qquad \text{for } \Delta < \breve{R}\, .
\end{equation}
\end{corollary}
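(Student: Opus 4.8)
The plan is to derive Corollary~\ref{Fheter1-2} as a direct specialization of Theorem~\ref{prop_het} to the case of consistent estimators. First I would observe that under Assumption~\ref{ass:consis} we have $\overline{r}_0^{\,(i)}=r_0^{(i)}$ for all $i$, so the hypothesis $r_0^*\leq\min_i\frac{1-\overline{r}_0^{\,(i)}}{1-r_0^{(i)}}=\min_i 1=1$ of Theorem~\ref{prop_het} is automatically satisfied (since $r_0^*<1$). Hence Theorem~\ref{prop_het} applies, and it remains only to re-express the quantities $\breve V$, $\breve R$, and $\widetilde\tau^{(i)}$ in the simplified form claimed.

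Next I would carry out the identification of slopes and thresholds already sketched in the paragraph preceding the corollary: with $\overline{r}_0^{\,(i)}=r_0^{(i)}$ we get $\widetilde\alpha^{(i)}=\alpha^{(i)}$ from the formula in Lemma~\ref{lem:lln}, hence $\widetilde\beta^{(i)}=\beta(\widetilde\alpha^{(i)};r_0^{(i)})=\beta(\alpha^{(i)};r_0^{(i)})=\beta^*$, where the last equality is the defining property of $\alpha^{(i)}$ from Section~\ref{sec:prp-m}. Consequently $\widetilde\tau^{(i)}=\sup\{t:F_i(t)=\widetilde\beta^{(i)}t\}=\sup\{t:F_i(t)=\beta^* t\}=:\tau^{(i)}$, and by Assumption~\ref{concav2} together with the characterization from~\cite{genovese2002operating} (via Lemma~\ref{het}) we have $F_i(\tau^{(i)})=\beta^*\tau^{(i)}$ at the relevant fixed point. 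Substituting $F_i(\widetilde\tau^{(i)})=\beta^*\widetilde\tau^{(i)}$ into the definitions $\breve V=r_0^*\sum_i q^{(i)}\widetilde\tau^{(i)}$ and $\breve R=r_0^*\sum_i q^{(i)}\widetilde\tau^{(i)}+r_1^*\sum_i q^{(i)}F_i(\widetilde\tau^{(i)})$ collapses them to $\breve V=r_0^*\tau$ and $\breve R=(r_0^*+r_1^*\beta^*)\tau$ with $\tau=\sum_i q^{(i)}\tau^{(i)}$.

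The last step is to sharpen the leading constant from $\alpha$ to $r_0^*\alpha$. Here I would revisit inequality~\eqref{eq:bound_1} in the proof of Theorem~\ref{prop_het}: in the consistent-estimator case the bound $\underline\beta^*\leq\widetilde\beta^{(i)}$ becomes the equality $\beta^*=\widetilde\beta^{(i)}$, so the ratio $\frac{r_0^*\widetilde\tau^{(i)}}{r_0^*\widetilde\tau^{(i)}+r_1^*F_i(\widetilde\tau^{(i)})}=\frac{r_0^*}{r_0^*+r_1^*\beta^*}$ equals exactly $r_0^*\alpha$ rather than being bounded by $\alpha$ — one checks $\frac{r_0^*}{r_0^*+r_1^*\beta^*}=r_0^*\cdot\frac{1}{r_0^*+r_1^*\beta^*}=r_0^*\alpha$ using $r_0^*+r_1^*\beta^*=1/\alpha$, which follows from $\beta^*=\beta(\alpha;r_0^*)=\frac{1/\alpha-r_0^*}{1-r_0^*}$. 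Feeding $r_0^*\alpha$ in place of $\alpha$ into step (b) of~\eqref{boundincon} yields~\eqref{het_propbnd2}. I do not anticipate a serious obstacle; the only point requiring a little care is confirming that the fixed-point value selected by the BH limit indeed satisfies $F_i(\widetilde\tau^{(i)})=\beta^*\widetilde\tau^{(i)}$ (as opposed to the supremum being attained at a boundary where the density condition in Assumption~\ref{concav2} is what rules this out), but this is exactly what Lemma~\ref{het} guarantees.
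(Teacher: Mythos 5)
Your proposal is correct and follows essentially the same route as the paper, which derives the corollary by specializing Theorem~\ref{prop_het} to the consistent case via the identifications $\widetilde\beta^{(i)}=\beta^*$, $\widetilde\tau^{(i)}=\tau^{(i)}$, and the exact evaluation $\frac{r_0^*}{r_0^*+r_1^*\beta^*}=r_0^*\alpha$ in place of the inequality~\eqref{eq:bound_1}. Your additional check that the hypothesis $r_0^*\leq\min_i\frac{1-\overline{r}_0^{\,(i)}}{1-r_0^{\,(i)}}$ becomes vacuous under consistency is a detail the paper leaves implicit, but it does not change the argument.
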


We now characterize the FDR loss (for consistent estimators) in a different way. Let $\widetilde{F}(t):=({1}/{r_1^*})\sum_{i=1}^N q^{(i)} r_1^{(i)} F_i(t)$ denote {the centralized alternative CDF and define $\tau^*:=\sup\{t:\widetilde{F}(t)=\beta^* t\}$. In order to capture the degree of heterogeneity in terms of the alternative distributions,} it is natural to control the distance {between $F_i$ and $\widetilde{F}$ at each node}. We formalize this by introducing a parameter $\delta_{F_i}$ below
along with a smoothness assumption on $\widetilde{F}(t)$. 
\begin{assumption}\label{Fdist}
$\underset{t_1\leq t\leq t_2}{\sup}|F_i(t)-\widetilde{F}(t)|\leq \delta_{F_i}$ for some $\delta_{F_i} >0$ where $t_1=\underset{i}{\min}\{{\tau}^{(i)}\}$ and $t_2=\underset{i}{\max}\{{\tau}^{(i)}\}$.
\end{assumption}
\begin{assumption}
$\widetilde{F}(t)$ is C-Lipschitz on the interval $[\underset{i}{\min}\,{\tau}^{(i)},\underset{i}{\max}\,{\tau}^{(i)}]$, i.e., $|\widetilde{F}(t)-\widetilde{F}(t')| \leq C|t-t'|$, for all $\underset{i}{\min}\,{\tau}^{(i)}\leq t,t'\leq \underset{i}{\max}\,{\tau}^{(i)}$.\label{lipschitz}
\end{assumption}
For instance, the one-sided p-value computed for $\Norm(\mu,1)$ statistic to test $\mathsf{H}_0:\mu = 0$ has Lipschitz constant $f_{\mu}(a) = e^{-\mu^2/2}e^{\,\mu Q^{-1}(a)}$ on interval $[a,b]$ where $Q(t)=1-\Phi(t)$ and $\Phi$ denotes the CDF of $\Norm(0,1)$ \cite{ray2011false}.
\begin{lemma}\label{tight}
Under Assumptions~\ref{Fdist} and~\ref{lipschitz}, if $\beta^*>C$, then $|{\tau}^{(i)}-\tau^*|\leq \frac{\delta_{F_i}}{\beta^*-C}$.
\end{lemma}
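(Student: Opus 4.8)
The plan is to play the two defining fixed‑point identities, $\widetilde{F}(\tau^*)=\beta^*\tau^*$ and $F_i(\tau^{(i)})=\beta^*\tau^{(i)}$, against each other: Assumption~\ref{Fdist} lets me transfer the second identity onto $\widetilde{F}$, and Assumption~\ref{lipschitz} converts a displacement in $t$ into a controlled displacement in $\widetilde{F}$, while the slope gap $\beta^*>C$ closes the loop. First I would record the localization $\tau^{(i)},\tau^*\in[t_1,t_2]$. The inclusion $\tau^{(i)}\in[t_1,t_2]$ is immediate from $t_1=\underset{j}{\min}\,\tau^{(j)}$ and $t_2=\underset{j}{\max}\,\tau^{(j)}$. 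For $\tau^*$, note that $\widetilde{F}=\sum_{j}w_j F_j$ with $w_j:=q^{(j)}r_1^{(j)}/r_1^*\ge 0$ and $\sum_j w_j=\tfrac1{r_1^*}\sum_j q^{(j)}(1-r_0^{(j)})=1$, i.e.\ $\widetilde{F}$ is a convex combination of the $F_j$'s; since each $\tau^{(j)}$ is the \emph{largest} crossing of $F_j$ with $t\mapsto\beta^*t$, one gets $\widetilde{F}(t)<\beta^* t$ for $t>t_2$ (hence $\tau^*\le t_2$), and combined with the strict monotonicity of $t\mapsto\widetilde{F}(t)-\beta^*t$ on $[t_1,t_2]$ guaranteed by $C<\beta^*$, also $\tau^*\ge t_1$.

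Granting the localization, the estimate is essentially one line. On $[t_1,t_2]$ Assumption~\ref{lipschitz} gives $\bigl|\widetilde{F}(\tau^*)-\widetilde{F}(\tau^{(i)})\bigr|\le C\,|\tau^*-\tau^{(i)}|$, and Assumption~\ref{Fdist}, applied at the point $\tau^{(i)}\in[t_1,t_2]$, gives $\bigl|\widetilde{F}(\tau^{(i)})-F_i(\tau^{(i)})\bigr|\le \delta_{F_i}$. Combining these with the two fixed‑point identities (and using $\beta^*>0$ and the continuity of $\widetilde F$ to ensure the suprema are attained),
\begin{align*}
\beta^*\,\bigl|\tau^*-\tau^{(i)}\bigr|
&=\bigl|\widetilde{F}(\tau^*)-F_i(\tau^{(i)})\bigr|\\
&\le\bigl|\widetilde{F}(\tau^*)-\widetilde{F}(\tau^{(i)})\bigr|+\bigl|\widetilde{F}(\tau^{(i)})-F_i(\tau^{(i)})\bigr|\\
&\le C\,\bigl|\tau^*-\tau^{(i)}\bigr|+\delta_{F_i}.
\end{align*}
Rearranging yields $(\beta^*-C)\,|\tau^{(i)}-\tau^*|\le \delta_{F_i}$, and dividing by $\beta^*-C>0$ gives $|\tau^{(i)}-\tau^*|\le \delta_{F_i}/(\beta^*-C)$, as claimed.

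The real content is the three‑line display, which is just the triangle inequality plus the slope gap; I expect the only place needing genuine care to be the localization step, i.e.\ confirming that $\tau^*$ cannot escape the interval $[t_1,t_2]$ on which Assumptions~\ref{Fdist} and~\ref{lipschitz} are stated. This is immediate if one grants the single‑crossing picture for each $F_j$ on $(0,1]$ (so that $\widetilde{F}(t)-\beta^* t$ is positive below $t_1$ and negative above $t_2$), which is consistent with Assumption~\ref{concav2}; in the fully general case one argues it from the convex‑combination representation of $\widetilde{F}$ together with the ``last crossing'' characterization of each $\tau^{(j)}$, after which the same display goes through verbatim.
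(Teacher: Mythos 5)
Your proof is correct and follows essentially the same route as the paper's: both exploit the fixed-point identities $\widetilde{F}(\tau^*)=\beta^*\tau^*$ and $F_i(\tau^{(i)})=\beta^*\tau^{(i)}$, apply the triangle inequality through $\widetilde{F}(\tau^{(i)})$ using Assumptions~\ref{Fdist} and~\ref{lipschitz}, and rearrange using $\beta^*>C$. Your additional localization step establishing $\tau^{(i)},\tau^*\in[t_1,t_2]$ (needed so that the Lipschitz and distance bounds actually apply at $\tau^*$) is a legitimate point of care that the paper's proof leaves implicit.
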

\begin{proof}
Recall that ${\tau}^{(i)}=\sup\{t:F_i(t)={\beta}^*t\}$ and $\tau^*=\sup\{t:\widetilde{F}(t)=\beta^* t\}$. Therefore, $\widetilde{F}(\tau^*)=\beta^*\tau^*$, $F_i({\tau}^{(i)})= {\beta}^*{\tau}^{(i)}$, and as a result
$\big|F_i({\tau}^{(i)})-\widetilde{F}(\tau^*)\big|=\beta^*|{\tau}^{(i)}-\tau^*|$. Now according to Assumptions \ref{Fdist} and \ref{lipschitz}, we get
\begin{align*}
    |{\tau}^{(i)}-\tau^*| &= 1/\beta^*\big|F_i({\tau}^{(i)})-\widetilde{F}(\tau^*)\big|\\
    &\leq \frac{1}{\beta^*}\Big(\big|\widetilde{F}(\tau^*)-\widetilde{F}({\tau}^{(i)})\big|+\big|\widetilde{F}({\tau}^{(i)})-F_i({\tau}^{(i)})\big|\Big)\\
    &\leq \frac{1}{\beta^*}\big(C\big|{\tau}^{(i)}-\tau^*\big|+\delta_{F_i}\big).
\end{align*}
Therefore, $\beta^*>C$ yields $|{\tau}^{(i)}-\tau^*|\leq \frac{\delta_{F_i}}{\beta^*-C}$.
\end{proof}

The condition $\beta^* > C$ is mild since we only need it to hold for the smallest Lipschitz constant on the interval $[\underset{i}{\min}\{{\tau}^{(i)}\},\underset{i}{\max}\{{\tau}^{(i)}\}]$.

\begin{theorem} \label{Fheter2}
Under Assumptions~\ref{ass:consis},~\ref{concav2},~\ref{Fdist} and~\ref{lipschitz}, if $\beta^* > C$, then 
\begin{equation*}
    \underset{m\rightarrow\infty}{\lim}\mathsf{FDR}\leq r_0^*\,\alpha + \frac{\check{V}+\check{R}}{\big(\check{R}-\Delta' \big)^2}\Delta'\, ,\quad \text{for } \Delta'< \check{R},
\end{equation*}
where $\check{R}=(r_0^*+r_1^*\beta^*)\tau^*$, $\check{V}=r_0^*\tau^*$, and $\Delta'=
\frac{1}{\beta^*-C}\sum_{i=1}^Nq^{(i)}(r_0^{(i)}+\beta^*r_1^{(i)})\delta_{F_i}$.
\end{theorem}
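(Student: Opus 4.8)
The plan is to follow the same route as Corollary~\ref{Fheter1-2}, except that heterogeneity is now quantified through the deviations $\delta_{F_i}$ of the local alternative CDFs $F_i$ from the centralized CDF $\widetilde{F}$, which effectively replaces the weighted-average threshold $\sum_i q^{(i)}\tau^{(i)}$ by the single centralized threshold $\tau^*$. First I would recall that under Assumption~\ref{ass:consis} one has $\widetilde{\alpha}^{(i)}=\alpha^{(i)}$, hence $\widetilde{\beta}^{(i)}=\beta^*$ and $\widetilde{\tau}^{(i)}=\tau^{(i)}:=\sup\{t:F_i(t)=\beta^* t\}$ with $F_i(\tau^{(i)})=\beta^*\tau^{(i)}$, exactly as in the paragraph preceding Corollary~\ref{Fheter1-2}. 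By Assumption~\ref{concav2}, Theorem~\ref{thm:plugin}, and the convergence-and-boundedness argument of Proposition~\ref{thm:no-com-BH}, the FDR of the proportion-matching method converges to
\[
\lim_{m\to\infty}\mathsf{FDR}=\frac{\sum_{i=1}^N q^{(i)} r_0^{(i)}\tau^{(i)}}{\sum_{i=1}^N q^{(i)}\big(r_0^{(i)}+r_1^{(i)}\beta^*\big)\tau^{(i)}},
\]
where the denominator equals $\sum_i q^{(i)}\big(r_0^{(i)}\tau^{(i)}+r_1^{(i)}F_i(\tau^{(i)})\big)$ and will be shown below to be positive.

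Next I would write $\tau^{(i)}=\tau^*+\varepsilon_i$ and use Lemma~\ref{tight} (valid since $\beta^*>C$) to get $|\varepsilon_i|\le \delta_{F_i}/(\beta^*-C)$. Expanding numerator and denominator around $\tau^*$ and using $\sum_i q^{(i)} r_0^{(i)}=r_0^*$ and $\sum_i q^{(i)}(r_0^{(i)}+r_1^{(i)}\beta^*)=r_0^*+r_1^*\beta^*$ gives
\[
\sum_{i=1}^N q^{(i)} r_0^{(i)}\tau^{(i)}=\check{V}+\eta_1,\qquad \sum_{i=1}^N q^{(i)}\big(r_0^{(i)}+r_1^{(i)}\beta^*\big)\tau^{(i)}=\check{R}+\eta_2,
\]
with $|\eta_1|\le \sum_i q^{(i)} r_0^{(i)}\delta_{F_i}/(\beta^*-C)\le \Delta'$ (using $r_0^{(i)}\le r_0^{(i)}+\beta^* r_1^{(i)}$, which holds since $\beta^*\ge 1$) and $|\eta_2|\le \sum_i q^{(i)}(r_0^{(i)}+r_1^{(i)}\beta^*)\delta_{F_i}/(\beta^*-C)=\Delta'$. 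Hence, under the hypothesis $\Delta'<\check{R}$, the numerator is $\le \check{V}+\Delta'$ and the denominator is $\ge \check{R}-\Delta'>0$, which in particular confirms that the limiting ratio above is well defined.

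Finally I would apply the elementary inequality $\frac{a+\epsilon}{b-\epsilon}\le \frac{a}{b}+\frac{(a+b)\epsilon}{(b-\epsilon)^2}$, valid for $a\ge0$ and $0\le\epsilon<b$ (it follows from $\frac{a+\epsilon}{b-\epsilon}-\frac{a}{b}=\frac{(a+b)\epsilon}{b(b-\epsilon)}$ together with $b\ge b-\epsilon$), with $a=\check{V}$, $b=\check{R}$, $\epsilon=\Delta'$, obtaining
\[
\lim_{m\to\infty}\mathsf{FDR}\le \frac{\check{V}+\Delta'}{\check{R}-\Delta'}\le \frac{\check{V}}{\check{R}}+\frac{\check{V}+\check{R}}{(\check{R}-\Delta')^2}\,\Delta'.
\]
Since $\beta^*=\beta(\alpha;r_0^*)=\frac{(1/\alpha)-r_0^*}{1-r_0^*}$ yields $r_0^*+r_1^*\beta^*=1/\alpha$, we have $\check{V}/\check{R}=\frac{r_0^*\tau^*}{(r_0^*+r_1^*\beta^*)\tau^*}=r_0^*\alpha$, which is exactly the claimed bound.

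I expect the first step --- establishing that $\mathsf{FDR}$ actually converges to the displayed ratio --- to be the only nontrivial point; it rests on the almost-sure convergence of the local plug-in BH thresholds to $\tau^{(i)}$ from Theorem~\ref{thm:plugin} (which is where Assumption~\ref{concav2} enters), combined with the weak convergence of the empirical processes and the boundedness of the FDP exactly as in Proposition~\ref{thm:no-com-BH}, and with the positivity of the limiting denominator, which is ensured either directly by Assumption~\ref{concav2} (some $\widetilde{\tau}^{(i)}>0$) or a posteriori by the hypothesis $\Delta'<\check{R}$. Everything after that is elementary algebra, essentially the same as in the proof of Theorem~\ref{prop_het}.
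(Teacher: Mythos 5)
Your proposal is correct and follows essentially the same route as the paper's proof: the same decomposition of the limiting FDR numerator and denominator around $\check{V}$ and $\check{R}$, the same application of Lemma~\ref{tight} to bound $|\tau^{(i)}-\tau^*|$ by $\delta_{F_i}/(\beta^*-C)$, and the same elementary inequality $\frac{\check{V}+\Delta'}{\check{R}-\Delta'}\leq r_0^*\alpha+\frac{\check{V}+\check{R}}{(\check{R}-\Delta')^2}\Delta'$ at the end. Your write-up is if anything slightly more explicit than the paper's (naming the perturbations $\eta_1,\eta_2$ and spelling out why the limiting ratio is the FDR limit), but there is no substantive difference.
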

\begin{proof}
{According to Assumptions~\ref{ass:consis} and~\ref{concav2}},
\begin{align*}
    &\underset{m\rightarrow\infty}{\lim}\mathsf{FDR}=\frac{\sum_{i=1}^{N}q^{(i)}r_0^{(i)}{\tau}^{(i)}}{\sum_{i=1}^{N}q^{(i)}\big(r_0^{(i)}{\tau}^{(i)}+r_1^{(i)}F_i({\tau}^{(i)})\big)}\\
    &=\frac{\check{V}+\sum_{i=1}^{N}q^{(i)} r_0^{(i)}({\tau}^{(i)}-\tau^*)}{\check{R}+\sum_{i=1}^{N}q^{(i)}\big(r_0^{(i)}({\tau}^{(i)}-\tau^*)+r_1^{(i)}(F_i({\tau}^{(i)})-\widetilde{F}(\tau^*))\big)}\\
    &\overset{(a)}{\leq}\frac{\check{V}+\sum_{i=1}^{N}q^{(i)} r_0^{(i)}|{\tau}^{(i)}-\tau^*|}{\check{R}-\sum_{i=1}^{N}q^{(i)}\big(r_0^{(i)}|{\tau}^{(i)}-\tau^*|+r_1^{(i)}\beta^*|{\tau}^{(i)}-\tau^*|\big)}\ ,
\end{align*}
where (a) follows from $\big|F_i({\tau}^{(i)})-\widetilde{F}(\tau^*)\big|=\beta^*|{\tau}^{(i)}-\tau^*|$.

According to Lemma \ref{tight}, we have $|{\tau}^{(i)}-\tau^*|\leq \frac{\delta_{F_i}}{\beta^*-C}$ if $\beta^* > C$. 
Hence,
\begin{align*}
    \underset{m\rightarrow\infty}{\lim}\mathsf{FDR}&\leq 
    \frac{\check{V}+\Delta'}{\check{R}-\Delta'}
    \leq r_0^*\,\alpha + \frac{\check{V}+\check{R}}{\big(\check{R}-\Delta' \big)^2}\Delta'\ .
\end{align*}
\end{proof}
\begin{corollary}
Combining the results from Corollary \ref{Fheter1-2} and Theorem \ref{Fheter2} we get the following bound under the same assumptions as in Theorem \ref{Fheter2}.
\begin{equation*}
    \underset{m\rightarrow\infty}{\lim}\mathsf{FDR} \leq r_0^*\,\alpha + \min\Bigg\{\frac{\breve{V}+\breve{R}}{\big(\breve{R}-\Delta \big)^2}\Delta,\frac{\check{V}+\check{R}}{\big(\check{R}-\Delta' \big)^2}\Delta'\Bigg\} \ .
\end{equation*}
\end{corollary}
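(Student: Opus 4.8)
The plan is to observe that the two bounds being combined are \emph{both} valid under the stated hypotheses, so the claim reduces to the trivial fact that the minimum of two correct upper bounds is again a correct upper bound. First I would check the logical containment of the hypotheses: Theorem~\ref{Fheter2} is stated under Assumptions~\ref{ass:consis},~\ref{concav2},~\ref{Fdist},~\ref{lipschitz} together with $\beta^*>C$, whereas Corollary~\ref{Fheter1-2} requires only Assumptions~\ref{ass:consis} and~\ref{concav2}. Hence the assumptions of the present corollary (which are those of Theorem~\ref{Fheter2}) imply the assumptions of Corollary~\ref{Fheter1-2} as well, so both parent results apply simultaneously.

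Next I would simply record the two resulting inequalities. From Corollary~\ref{Fheter1-2},
\[
\underset{m\rightarrow\infty}{\lim}\mathsf{FDR}\leq r_0^*\,\alpha + \frac{\breve{V}+\breve{R}}{\big(\breve{R}-\Delta \big)^2}\Delta \qquad (\Delta<\breve{R}),
\]
and from Theorem~\ref{Fheter2},
\[
\underset{m\rightarrow\infty}{\lim}\mathsf{FDR}\leq r_0^*\,\alpha + \frac{\check{V}+\check{R}}{\big(\check{R}-\Delta' \big)^2}\Delta' \qquad (\Delta'<\check{R}).
\]
Since a real number bounded above by both $A$ and $B$ is bounded above by $\min\{A,B\}$, the displayed bound follows at once, with the convention that a term whose validity condition ($\Delta<\breve{R}$, resp. $\Delta'<\check{R}$) fails is read as $+\infty$ and therefore never selected by the minimum; if both conditions fail the statement is vacuous. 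No new estimates are needed.

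There is no genuine obstacle here; the only point requiring care is the bookkeeping of which assumption set each parent result uses, so that one may legitimately assert both inequalities at once. It is also worth remarking, as a sanity check rather than a proof step, that the two bounds are not comparable in general — $\Delta$ measures heterogeneity in the null proportions while $\Delta'$ measures heterogeneity in the alternative CDFs (through $\delta_{F_i}$ and the Lipschitz constant $C$) — so the minimum can strictly improve on either bound in different regimes, which is precisely the motivation for stating the corollary.
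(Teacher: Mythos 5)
Your proposal is correct and matches the paper's intent exactly: the paper offers no separate proof for this corollary precisely because it is the trivial observation that both parent bounds hold simultaneously under the assumptions of Theorem~\ref{Fheter2} (which subsume those of Corollary~\ref{Fheter1-2}), so their minimum is a valid bound. Your additional care about the validity conditions $\Delta<\breve{R}$ and $\Delta'<\check{R}$ is a reasonable refinement but not a departure from the paper's argument.
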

This result characterizes the impact of the two types of heterogeneity ($\Delta$ and $\Delta'$) on the asymptotic FDR. 
\begin{remark}
    Recall that $\beta^*=\beta(\alpha;r_0^*)	= \frac{(1/\a)-{r}_0^*}{1-{r}_0^*}$, which implies $r_0^*+r_1^*\beta^*=1/\alpha$. Therefore, we have $\Delta'\leq \frac{\underset{1\leq i \leq N}{\max}\delta_{F_i}}{\alpha(\beta^*-C)}$.
\end{remark}
We now study the asymptotic power under heterogeneity. Let $P^*=\widetilde{F}(\tau^*)$ denote the global BH asymptotic power. 
\begin{theorem}
Under the setting of Theorem \ref{Fheter2}, we have
\begin{equation*}
    \lim_{m\rightarrow\infty} \mathsf{power}\geq P^* - \min\left\{\frac{\underset{1\leq i \leq N}{\max}\delta_{F_i}}{1-C/\beta^*},\;\frac{\Delta'}{r_1^{*}}\right\}\, .
\end{equation*}
\end{theorem}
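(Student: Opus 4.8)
The plan is to pass to the limiting power of the proportion-matching procedure, reduce the gap $\lim_{m}\mathsf{power}-P^*$ to a $q^{(i)}r_1^{(i)}$-weighted sum of the threshold deviations $\tau^{(i)}-\tau^*$, and then control those deviations with Lemma~\ref{tight} in two complementary ways.

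First I would identify the limiting power. Under Assumption~\ref{ass:consis}, Lemma~\ref{lem:lln} gives $\hat\alpha^{(i)}\xrightarrow{a.s.}\alpha^{(i)}$, so Theorem~\ref{thm:plugin} yields $\tau_{\text{BH}}^{(i)}(\hat\alpha^{(i)})\xrightarrow{a.s.}\tau(\alpha^{(i)};r_0^{(i)})$, and since $\beta(\alpha^{(i)};r_0^{(i)})=\beta^*$ this limit equals $\widetilde{\tau}^{(i)}=\tau^{(i)}=\sup\{t:F_i(t)=\beta^* t\}$. Feeding this into the empirical-process argument from the proof of Proposition~\ref{thm:no-com-BH} (with $\theta^{(i)}$ replaced by $\tau^{(i)}$ and using continuity of $F_i$ at $\tau^{(i)}$ from Assumption~\ref{concav2}), together with the Remark following that proposition, gives
\begin{equation*}
  \lim_{m\to\infty}\mathsf{power} \;=\; \frac{1}{r_1^*}\sum_{i=1}^N q^{(i)} r_1^{(i)} F_i\big(\tau^{(i)}\big).
\end{equation*}

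Next comes the key algebraic step. Since $P^*=\widetilde{F}(\tau^*)=\frac{1}{r_1^*}\sum_i q^{(i)}r_1^{(i)}F_i(\tau^*)$ by the definition of $\widetilde{F}$, we have $\lim_{m}\mathsf{power}-P^*=\frac{1}{r_1^*}\sum_i q^{(i)}r_1^{(i)}\big(F_i(\tau^{(i)})-F_i(\tau^*)\big)$. Using $F_i(\tau^{(i)})=\beta^*\tau^{(i)}$ and $\widetilde{F}(\tau^*)=\beta^*\tau^*$, write $F_i(\tau^{(i)})-F_i(\tau^*)=\beta^*(\tau^{(i)}-\tau^*)+\big(\widetilde{F}(\tau^*)-F_i(\tau^*)\big)$. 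When summed against the weights $q^{(i)}r_1^{(i)}$, the last term cancels because $\sum_i q^{(i)}r_1^{(i)}\big(\widetilde{F}(\tau^*)-F_i(\tau^*)\big)=r_1^*\widetilde{F}(\tau^*)-r_1^*\widetilde{F}(\tau^*)=0$, leaving
\begin{equation*}
  \lim_{m\to\infty}\mathsf{power}-P^* \;=\; \frac{\beta^*}{r_1^*}\sum_{i=1}^N q^{(i)} r_1^{(i)}\big(\tau^{(i)}-\tau^*\big).
\end{equation*}

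Finally, Lemma~\ref{tight} gives $|\tau^{(i)}-\tau^*|\le \delta_{F_i}/(\beta^*-C)$ under $\beta^*>C$, so $\lim_{m}\mathsf{power}-P^*\ge -\frac{\beta^*}{r_1^*(\beta^*-C)}\sum_i q^{(i)}r_1^{(i)}\delta_{F_i}$. Bounding the right-hand side once by $\sum_i q^{(i)}r_1^{(i)}\delta_{F_i}\le r_1^*\max_i\delta_{F_i}$ produces the bound $-\max_i\delta_{F_i}/(1-C/\beta^*)$, and once by $\beta^* r_1^{(i)}\le r_0^{(i)}+\beta^* r_1^{(i)}$, which makes $\frac{\beta^*}{\beta^*-C}\sum_i q^{(i)}r_1^{(i)}\delta_{F_i}$ at most $\Delta'$, produces $-\Delta'/r_1^*$; taking the smaller of the two gives the claim. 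I expect the main obstacle to be only the first step, namely transcribing the empirical-process and threshold-convergence machinery of Proposition~\ref{thm:no-com-BH} and Theorem~\ref{thm:plugin} to obtain the limiting power formula cleanly (and handling the degenerate case where some $\tau^{(i)}=0$, which is harmless since Assumption~\ref{concav2} keeps at least one $\tau^{(i)}>0$ so all denominators stay positive); the cancellation identity and the two-way bounding are elementary once that expression is in hand.
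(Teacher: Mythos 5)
Your proof is correct and follows essentially the same route as the paper: both identify $\lim_m \mathsf{power}$ as $\frac{1}{r_1^*}\sum_i q^{(i)}r_1^{(i)}F_i(\tau^{(i)})$, reduce the deficit to $\frac{\beta^*}{r_1^*}\sum_i q^{(i)}r_1^{(i)}(\tau^{(i)}-\tau^*)$ via the identities $F_i(\tau^{(i)})=\beta^*\tau^{(i)}$ and $\widetilde F(\tau^*)=\beta^*\tau^*$ (your detour through $F_i(\tau^*)$ with the exact cancellation is algebraically identical to the paper's direct use of $F_i(\tau^{(i)})-\widetilde F(\tau^*)$), and then invoke Lemma~\ref{tight} with the same two complementary bounds to obtain the minimum. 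The only difference is that you spell out the first step (the limiting power formula) in more detail than the paper, which simply asserts it.
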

\begin{proof}
We start by noting that $\lim_{m\rightarrow\infty}\mathsf{power}$ equals to $\frac{1}{r_1^*}\sum_{i=1}^N q^{(i)} r_1^{(i)}F_i({\tau}^{(i)})$,
which can be rewritten and then lower bounded as follows,
\begin{align*}
    &\frac{1}{r_1^*}\sum_{i=1}^N q^{(i)} r_1^{(i)}\widetilde{F}(\tau^*)+\frac{1}{r_1^*}\sum_{i=1}^N q^{(i)} r_1^{(i)}\big(F_i({\tau}^{(i)})-\widetilde{F}(\tau^*)\big)\\
    &\geq P^* -\sum_{i=1}^N \frac{q^{(i)} r_1^{(i)}}{r_1^{*}}\beta^*\big|{\tau}^{(i)}-\tau^*\big|\overset{(a)}{\geq}  P^* - \frac{\sum_{i=1}^N \frac{q^{(i)} r_1^{(i)}}{r_1^{*}}\delta_{F_i}}{1-C/\beta^*}\\
    &\geq P^* - \min\left\{\frac{\underset{1\leq i \leq N}{\max}\delta_{F_i}}{1-C/\beta^*},\;\frac{\Delta'}{r_1^{*}}\right\}\, ,
\end{align*}
where (a) holds according to Lemma \ref{tight}.
\end{proof}

\vspace{-.5em}
\section{{Towards Optimality: Adaptive Segmentation}}
\label{sec:optimal}
By similar arguments as in \cite{genovese2002operating}, for target FDR $\alpha$ in a centralized (or pooled) setting, one can show that performing the BH procedure with size $\alpha/r_0^*$ is asymptotically optimal among the threshold procedures (i.e., procedures rejecting $[0,t]$ for some $t\leq 1$).
In this case the asymptotic rejection region is $[0,\mathcal{T}(\alpha/r_0^*)]$ where $\mathcal{T}(\alpha):=\sup\big\{t:\mathcal{G}(t)=(1/\alpha)\,t\big\}$. Following this direction, the authors in~\cite{ARIASCASTRO202142} proposed a \emph{scan procedure} that rejects the interval $[\sigma,\tau]$ asymptotically, where
\begin{align*}
    (\sigma,\tau) = \underset{(s,t)}{\arg\max}(t-s),\qquad
     \text{s.t.}\quad \frac{t-s}{\mathcal{G}(t)-\mathcal{G}(s)}\leq \alpha.
\end{align*}
This is a simpler method than the multiple reference points method~\cite{chi2007performance}, which selects multiple intervals in a delicate manner. 
Again we note that for the target FDR $\alpha$, performing the scan procedure with size $\alpha/r_0^*$ is asymptotically optimal among the procedures that reject a single interval, namely, procedures rejecting $[s,t]$ for some $0\leq s<t\leq1$. 
In this section we wish to present an explicit characterization of the asymptotic optimal rejection rule for general networks. We also provide a computation- and communication-efficient algorithm to approximate it in a star network.

\begin{assumption} \label{ass:pdf}
For all $1\leq i\leq N$, $F_i(0)=0$, $F_i(t)$ is continuously differentiable on $(0,1)$.
%, $f_i(t)=F_i'(t)>0$ almost everywhere on $t\in(0,1)$ 
$f_i(t)=F_i'(t)$ over $t\in(0,1)$ and $f_i(t)=0$ otherwise; $f_i(t)$  has a finite number of local extrema on $(0,1)$.
\end{assumption}
Let $\Pi$ denote the set of subsets of $[0,1]$ that can be written as a finite union of disjoint open intervals. Now consider 
\begin{equation}
    %[\Gamma^*(\alpha)]:=
    \underset{\Gamma\in\Pi^N}{\arg\max}\ \underset{m\rightarrow\infty}{\lim}\mathsf{power}(\Gamma), \ 
      \text{s.t.}\  \underset{m\rightarrow\infty}{\lim}\mathsf{FDR}(\Gamma)\leq \alpha. \label{opt}
\end{equation}
At first glance, it seems that solving \eqref{opt} (even over the simple threshold procedures) by scanning requires a search over $N$-dimensional space $[0,1]^N$ which is computationally exhaustive. However, as we shall see in Theorem~\ref{thm:optimal}, if the probability density functions at each node are known, a $1$-dimensional search would be sufficient to find the asymptotic optimal rejection region in the network.
 
Let $g_i=G_i'$ and $\nu$ denote the Lebesgue measure. Define,
\begin{align}
    \Gamma^{(i)}(t)&:=\left\{x\in(0,1):\frac{g_i(x)}{r_0^{(i)}}>t+1\right\},\label{eq:Gamma}\\
 c_\alpha&:=\inf\left\{t:\frac{\sum_{i=1}^N q^{(i)}r_0^{(i)}\nu(\Gamma^{(i)}(t))}{\sum_{i=1}^N q^{(i)}\int_{\Gamma^{(i)}(t)}g_i d\nu}\leq\alpha\right\}.\label{eq:cutoff}
\end{align}
\begin{theorem}\label{thm:optimal}
If Assumption~\ref{ass:pdf} holds, then $\Gamma^*(\alpha)=\left({\Gamma^*}^{(1)},\hdots,{\Gamma^*}^{(N)}\right)$ with ${\Gamma^*}^{(i)}:=\Gamma^{(i)}(c_\alpha)$, for all $1 \leq i \leq N$, is a solution to \eqref{opt} and it is unique up to a null set (i.e., a set of Lebesgue measure 0).
\end{theorem}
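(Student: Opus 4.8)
The plan is to reduce the $N$-dimensional optimization in \eqref{opt} to a one-dimensional problem by first writing the asymptotic power and FDR as functionals of the rejection regions, and then applying a pointwise (Lagrangian / Neyman--Pearson-type) argument. First I would note that, under Assumption~\ref{ass:pdf}, the weak convergence of empirical processes (as used in the proof of Proposition~\ref{thm:no-com-BH}) gives, for any fixed $\Gamma=(\Gamma^{(1)},\dots,\Gamma^{(N)})\in\Pi^N$,
\begin{align*}
    \underset{m\rightarrow\infty}{\lim}\mathsf{power}(\Gamma) &= \frac{1}{r_1^*}\sum_{i=1}^N q^{(i)} r_1^{(i)}\int_{\Gamma^{(i)}} f_i\, d\nu,\\
    \underset{m\rightarrow\infty}{\lim}\mathsf{FDR}(\Gamma) &= \frac{\sum_{i=1}^N q^{(i)} r_0^{(i)}\nu(\Gamma^{(i)})}{\sum_{i=1}^N q^{(i)}\int_{\Gamma^{(i)}} g_i\, d\nu}
\end{align*}
(with the usual convention when the denominator is zero). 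Maximizing power over $\Gamma$ subject to $\lim\mathsf{FDR}(\Gamma)\le\alpha$ is equivalent to maximizing the total number of true rejections $\sum_i q^{(i)} r_1^{(i)}\int_{\Gamma^{(i)}} f_i\,d\nu$, which equals $\sum_i q^{(i)}\big(\int_{\Gamma^{(i)}} g_i\,d\nu - r_0^{(i)}\nu(\Gamma^{(i)})\big)$, subject to $\sum_i q^{(i)} r_0^{(i)}\nu(\Gamma^{(i)}) \le \alpha \sum_i q^{(i)}\int_{\Gamma^{(i)}} g_i\,d\nu$.

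Next I would introduce the Lagrangian: for a multiplier $t\ge 0$, consider maximizing
\begin{equation*}
    \sum_{i=1}^N q^{(i)}\Big(\int_{\Gamma^{(i)}} g_i\, d\nu - r_0^{(i)}\nu(\Gamma^{(i)})\Big) - t\Big(\sum_{i=1}^N q^{(i)} r_0^{(i)}\nu(\Gamma^{(i)}) - \alpha \sum_{i=1}^N q^{(i)}\int_{\Gamma^{(i)}} g_i\, d\nu\Big).
\end{equation*}
Dropping the constant-in-$\Gamma$ term $t\alpha\cdot(\cdots)$ appropriately, this is separable across $i$ and across points $x$: the integrand at a point $x\in\Gamma^{(i)}$ contributes $q^{(i)}\big(g_i(x)(1+t\alpha) - r_0^{(i)}(1+t)\big)$, so the pointwise-optimal region is $\{x: g_i(x)(1+t\alpha) > r_0^{(i)}(1+t)\}$, i.e. $\{x: g_i(x)/r_0^{(i)} > (1+t)/(1+t\alpha)\}$. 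Reparametrizing $s = (1+t)/(1+t\alpha) - 1 = t(1-\alpha)/(1+t\alpha) \ge 0$ (a bijection from $t\in[0,\infty)$ to $s\in[0,(1-\alpha)/\alpha)$, increasing since $\alpha<1$), this is exactly $\Gamma^{(i)}(s)$ as defined in \eqref{eq:Gamma}. Thus the candidate optimal regions are necessarily of the form $\Gamma^{(i)}(s)$ for a common threshold $s$; Assumption~\ref{ass:pdf} (finitely many local extrema of $f_i$, hence of $g_i$) ensures each $\Gamma^{(i)}(s)\in\Pi$, so the candidate lies in the feasible class $\Pi^N$.

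Then I would close the argument by choosing the correct threshold. Define $\phi(t) := \sum_i q^{(i)} r_0^{(i)}\nu(\Gamma^{(i)}(t))$ and $\psi(t) := \sum_i q^{(i)}\int_{\Gamma^{(i)}(t)} g_i\, d\nu$; both are non-increasing (sets shrink in $t$), $\psi$ is continuous in $t$ because $g_i$ has no interval on which it is constant at a critical level (again by the finite-extrema assumption, the level sets $\{g_i/r_0^{(i)} = t+1\}$ are finite), and the FDR ratio $\phi/\psi$ is non-decreasing — monotonicity of the FDR along this family follows since enlarging the region by adding points with the smallest available density ratio can only increase the ratio $\phi/\psi$. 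Hence $c_\alpha$ in \eqref{eq:cutoff} is well-defined and the constraint binds: $\phi(c_\alpha)/\psi(c_\alpha)=\alpha$ (or the region is all of $[0,1]^{\otimes N}$ if the unconstrained optimum already satisfies FDR $\le\alpha$, handled separately). Complementary slackness with $t$ the multiplier corresponding to $s=c_\alpha$ shows $\Gamma^*(\alpha)$ is feasible and that no feasible $\Gamma$ achieves strictly larger power: for any feasible $\Gamma$, comparing the pointwise Lagrangian integrand against that of $\Gamma^*(\alpha)$ gives $\mathrm{power}(\Gamma)\le\mathrm{power}(\Gamma^*(\alpha))$ after using the constraint. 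Uniqueness up to a null set follows because the pointwise comparison is strict off the boundary set $\{x: g_i(x)/r_0^{(i)} = c_\alpha+1\}$, which has Lebesgue measure zero by Assumption~\ref{ass:pdf}: any other optimizer must agree with $\Gamma^*(\alpha)$ except possibly on that null set.

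The main obstacle I anticipate is not the Lagrangian/Neyman--Pearson scaffolding, which is standard, but the careful verification that the map $t\mapsto \phi(t)/\psi(t)$ is continuous and monotone on the relevant range so that $c_\alpha$ is attained and the constraint is active — this is where Assumption~\ref{ass:pdf} (finitely many extrema, forcing the critical level sets of $g_i$ to be finite) does the real work, and one must also handle the edge cases where $\psi(c_\alpha)=0$ (vacuous constraint, $\tau^{(i)}>0$-type nondegeneracy) or where the unconstrained optimum is already feasible. Getting the reparametrization $t\leftrightarrow s$ and the direction of monotonicity exactly right, and ruling out "ties" on positive-measure level sets, are the delicate points.
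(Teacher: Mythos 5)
Your proposal is correct in substance, but it takes a genuinely different route from the paper's. You set up a Lagrangian/Neyman--Pearson argument: after writing the asymptotic power and FDR as linear functionals of $(\Gamma^{(1)},\dots,\Gamma^{(N)})$ and linearizing the constraint as $\phi-\alpha\psi\le 0$, pointwise maximization of the separable integrand $q^{(i)}\bigl(g_i(x)(1+t\alpha)-r_0^{(i)}(1+t)\bigr)$ identifies the candidates as common superlevel sets of $g_i/r_0^{(i)}$, and complementary slackness at the active constraint delivers optimality, with uniqueness from strictness of the pointwise comparison off the level set $\{g_i/r_0^{(i)}=c_\alpha+1\}$. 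The paper instead builds an explicit injection $\Tc:\Pi^N\to r_0^*\Pi$ that rescales each node's region by $q^{(i)}r_0^{(i)}$ and concatenates them into one interval $[0,r_0^*]$, so that the false-rejection mass becomes plain Lebesgue measure and the power becomes $\frac{1}{r_1^*}\int_{\Tc(\Gamma)}\Upsilon\,d\nu$ for a single glued density $\Upsilon$; the optimum is then obtained by an equimeasurable-rearrangement argument (among all sets of a given measure, the superlevel set of $\Upsilon$ maximizes $\int\Upsilon$), which handles optimality and uniqueness at once without a multiplier and makes the ``one-dimensional search'' reading of the theorem transparent. Your route is more classical and avoids the bookkeeping of $\Tc$, but it obliges you to verify two things the paper's argument gets for free: existence of a finite multiplier (your reparametrization $s=t(1-\alpha)/(1+t\alpha)$ only covers $s\in[0,(1-\alpha)/\alpha)$, which suffices because on $\Gamma^{(i)}(s)$ one has $\int g_i\,d\nu>(s+1)r_0^{(i)}\nu(\Gamma^{(i)}(s))$, hence the FDR of $\Gamma(s)$ is below $1/(s+1)$ and $c_\alpha\le(1-\alpha)/\alpha$), and activeness of the constraint at $c_\alpha$, which you correctly reduce to continuity and monotonicity of $\phi/\psi$ along the family. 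Both arguments rest on the same consequences of Assumption~\ref{ass:pdf} (level sets of $g_i$ at positive levels are Lebesgue-null, so $\phi,\psi$ are continuous and your boundary set is null), and both implicitly need $f_i>0$ a.e.\ to handle the degenerate case $c_\alpha=0$ with a strictly slack constraint, where a positive-measure set with $f_i=0$ could otherwise be appended at no cost in power; this caveat is shared with the paper's proof and is not a defect specific to yours.
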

\begin{proof}
See Appendix~\ref{apx:opt}.
\end{proof}
It can be observed that solving for $c_\alpha$, requires a $1$-dimensional search over $t$.
Recall that $g_i(x) = r_0 + r_1 f_i(x)$. For $x\in (0,1)$, we have
\begin{align*}
    \Gamma^{(i)}(t)&= \left\{x:\frac{r_1^{(i)}}{r_0^{(i)}} f_i(x)> t \right\}
    = \left\{x: \Lambda_i(x) >\frac{r_0^{(i)}}{r_1^{(i)}} t \right\},
\end{align*}
where $\Lambda_i$ denotes the likelihood ratio at node $i$. A similar problem has been considered in \cite{cai2009simultaneous} and the same result has been established under the assumption that $f_i(t)$ is monotonically decreasing in $t$ for all $1\leq i\leq N$.

We note that if the density functions exist, then a decent estimation of $g_i$ (or $f_i$) together with Theorem~\ref{thm:optimal} as well as~\eqref{eq:Gamma} and~\eqref{eq:cutoff} would allow for the development of a computation- and communication-efficient algorithm for star networks. However, density estimation is undesirable and we wish to avoid it as much as we can. Therefore, we propose a computation- and communication-efficient algorithm to approximate the optimal solution without even requiring the existence of a density function. We will show that our procedure is a ``good" approximation of $\Gamma^*(\alpha)$ if it exists. {In order to achieve this, we define a fixed number of candidate intervals at each node and formulate the optimality problem over a finite set of feasible solutions.}

Fix some $\eps>0$. {Let $L^{(i)} := \eps/({q}^{(i)}{r}_0^{(i)})$ and ${K}^{(i)}(\eps):= \lfloor 1/L^{(i)}\rfloor$ {denote the length and the number of the candidate intervals at node $i$, respectively. The length of the candidate intervals at each node is calibrated to allow for a greedy solution as we shall see later in Theorem~\ref{thm:oracle-interval}.} Define ${\mathcal{A}}(\eps):=\big\{(i,j): 1\leq i \leq N,\ 1\leq j\leq {K}^{(i)}(\eps)\big\}$ and let ${I}_j^{(i)}:=L^{(i)}\cdot(j-1,j],\ (i,j)\in {\mathcal{A}}(\eps)$ denote the $j$-th interval at node $i$. Let $\Gamma_S:=\bigl(\Gamma_S^{(1)},\hdots,\Gamma_S^{(N)}\bigr)$ be the {tuple of regions} corresponding to a subset of intervals $S\subseteq\Ac(\eps)$, where $\Gamma_S^{(i)}:=\underset{{(a,b)\in S:\, a=i}}{\bigcup} I_b^{(a)}$.
We write down the FDR and power of the rejection region $\Gamma_S$ explicitly as below and simplify the expressions later,
  \begin{align*}
  \mathsf{power}(\Gamma_S)
     &=\mathbb{E}\Bigg[\frac{\sum_{(i,j)\in S}\sum_{k=1}^{m^{(i)}}\ind\Big\{\mathsf{H}_{0,k}^{(i)}=0, P_k^{(i)}\in {I}_j^{(i)}\Big\}}{m_1\vee 1}\Bigg],\\
     \mathsf{FDR}(\Gamma_S)&= \mathbb{E}\Bigg[\frac{\sum_{(i,j)\in S}\sum_{k=1}^{m^{(i)}}{\ind\Big\{\mathsf{H}_{0,k}^{(i)}=1,P_k^{(i)}\in\ {I}_j^{(i)}\Big\}}}{1\vee\sum_{(i,j)\in S}{\sum_{k=1}^{m^{(i)}}{\ind\big\{P_k^{(i)}\in{I}_j^{(i)}\big\}}}}\Bigg],
\end{align*}
where $\mathsf{H}_{0,k}^{(i)}=1$ denotes that the $k$-th null hypothesis at node $i$ is true, and similarly for $\mathsf{H}_{0,k}^{(i)}=0$.
We first aim to find the asymptotic optimal rejection rule among the procedures that reject intervals of the form $I_j^{(i)}$. Therefore, the asymptotic optimality problem is given as follows.
\begin{align}
    \underset{S\subseteq{\mathcal{A}}(\eps)}{\arg\max}\ \underset{m\rightarrow\infty}{\lim}\mathsf{power}(\Gamma_S), \ 
      \text{s.t.}\  \underset{m\rightarrow\infty}{\lim}\mathsf{FDR}(\Gamma_S)\leq \alpha.\label{seg_opt}
\end{align} 
By invoking the law of large numbers and the bounded convergence theorem, and recall that $m^{(i)}/m\xrightarrow{a.s.}q^{(i)}$, the asymptotic optimality problem can be expressed as 
     \begin{align*}
      &\Xi^*(\alpha,\eps):=\underset{S\subseteq{\mathcal{A}(\eps)}}{\arg\max} \ \frac{1}{r_1^*} \sum_{(i,j)\in S}q^{(i)}\mathbb{P}\left(\mathsf{H}_{0,1}^{(i)}=0,P_1^{(i)}\in\ {I}_j^{(i)}\right)\\
     & \text{s.t.} \quad \frac{\sum_{(i,j)\in S}q^{(i)}\,\mathbb{P}{\Big(\mathsf{H}_{0,1}^{(i)}=1,P_1^{(i)}\in\ {I}_j^{(i)}\Big)}}{\sum_{(i,j)\in S}q^{(i)}\,\mathbb{P}{\big(P_1^{(i)}\in{I}_j^{(i)}\big)}}\leq \alpha,\ \  S\neq \varnothing,
\end{align*}
where $S\neq \varnothing$ is needed due to the $\vee$ operation in $1\vee\sum_{(i,j)\in S}{\sum_{k=1}^{m^{(i)}}{\ind\{P_k^{(i)}\in{I}_j^{(i)}\}}}$.}
\begin{remark}
    It is worth mentioning that
    \begin{equation*}
        0 \leq \underset{m\rightarrow\infty}{\lim}\mathsf{FDR}(\Gamma_S)-\mathsf{FDR}(\Gamma_S)\le (1-p_S)^m,
    \end{equation*}
    \begin{equation*}
        0 \leq \underset{m\rightarrow\infty}{\lim}\mathsf{power}(\Gamma_S)-\mathsf{power}(\Gamma_S)\le (1-r_1^*)^m,
    \end{equation*}
    by~\cite[Theorem~1]{Storey03thepositive} where $p_S:=\sum_{(i,j)\in S}q^{(i)}\,\mathbb{P}{\big(P_1^{(i)}\in{I}_j^{(i)}\big)}$ is the probability of rejection for the set of selected intervals $\Gamma_S$. This tells us that the finite-sample objective function converges to the asymptotic one exponentially fast. 
\end{remark}
\vspace{-1.5em}
\subsection{{Oracle Solution}}\label{orac}
{Before giving a solution to optimization problem \eqref{seg_opt}, let us simplify our notation by adopting the following shorthands for evaluating CDFs on intervals,
\begin{align}
    G_i\left(I_j^{(i)}\right):&=G_i\left(j L^{(i)} \right)-G_i\left((j-1)L^{(i)}\right)\nonumber\\
    &= r_0^{(i)}U\left(I_j^{(i)}\right)+ r_1^{(i)} F_i\left(I_j^{(i)}\right)\label{eq:G},
\end{align}
where $F_i(I_j^{(i)}):=F_i\left(j L^{(i)}\right)-F_i\left((j-1) L^{(i)}\right)$ and  $U(I_j^{(i)}):= j L^{(i)} - (j-1) L^{(i)}=L^{(i)}$.
Notice that $\mathbb{P}{\big(P_1^{(i)}\in{I}_j^{(i)}\big)}=G_i(I_j^{(i)})$ and
\begin{align*}
    &\mathbb{P}\left(\mathsf{H}_{0,1}^{(i)}=0,P_1^{(i)}\in\ {I}_j^{(i)}\right)\\    &\hspace{0.1em}=\mathbb{P}\left(\mathsf{H}_{0,1}^{(i)}=0\right)\cdot\mathbb{P}\left(P_1^{(i)}\in\ {I}_j^{(i)}|\mathsf{H}_{0,1}^{(i)}=0\right)
   =r_1^{(i)}F_i\left(I_j^{(i)}\right).
\end{align*}
Similarly, $\mathbb{P}(\mathsf{H}_{0,1}^{(i)}=1,P_1^{(i)}\in\ {I}_j^{(i)})=r_0^{(i)}U(I_j^{(i)})$ with the uniform distribution under the null. Thus, we can express the problem in terms of the CDFs as follows,
\begin{align*}
    &\Xi^*(\alpha,\eps)=\underset{S\subseteq{\mathcal{A}(\eps)}}{\arg\max}\ \frac{1}{r_1^*}\sum_{(i,j)\in S}{r_1^{(i)}q^{(i)}F_i\left(I_j^{(i)}\right)}\nonumber\\
    &\hspace{4em} \text{s.t.}\quad \frac{\sum_{(i,j)\in S}{r_0^{(i)}q^{(i)}U\left(I_j^{(i)}\right)}}{\sum_{(i,j)\in S}q^{(i)}G_i\left(I_j^{(i)}\right)} \leq \alpha,\ \  S\neq \varnothing.\nonumber
\end{align*}
Now define {$h^{(i)}_j:= (1/\eps)\, q^{(i)}G_i(I_j^{(i)})$} and
\begin{equation*}
     {\mathcal{H}}:=\left\{{h}^{(i)}_j:1\leq i \leq N,\ 1\leq j\leq {K}^{(i)}\right\},
\end{equation*}
where we allow ties inside $\mathcal{H}$.
\begin{theorem}
$\Xi(\alpha,\eps)\subseteq \Xi^*(\alpha,\eps)$ for all $\alpha\in (0,1)$ where
\begin{align}
    &\Xi(\alpha,\eps) :=\left\{S\subseteq \mathcal{A}(\e):|S|=M^*,\ \sum_{(i,j)\in S}h_j^{(i)}=\sum_{\ell=1}^{M^*}h_{(\ell)}\right\},\nonumber\\
    & M^* := \max\Big\{0\leq M\leq |\mathcal{H}|: M \leq \alpha \sum_{\ell=1}^M h_{(\ell)}\Big\},\label{ms}
\end{align}
with ${h}_{(\ell)}$ being the $\ell$-th largest element of $\mathcal{H}$.\label{thm:oracle-interval}
\end{theorem}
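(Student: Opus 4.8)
The plan is to exploit the specific calibration of the interval lengths, $L^{(i)}=\eps/(q^{(i)}r_0^{(i)})$, which renders the combinatorial problem \eqref{seg_opt} transparent. First I would record the elementary identities that follow from this choice. Since $U(I_j^{(i)})=L^{(i)}$, every candidate interval carries exactly the same null mass, $r_0^{(i)}q^{(i)}U(I_j^{(i)})=r_0^{(i)}q^{(i)}L^{(i)}=\eps$; moreover $q^{(i)}G_i(I_j^{(i)})=\eps\, h_j^{(i)}$ by the definition of $h_j^{(i)}$, and consequently $r_1^{(i)}q^{(i)}F_i(I_j^{(i)})=q^{(i)}G_i(I_j^{(i)})-r_0^{(i)}q^{(i)}U(I_j^{(i)})=\eps\,(h_j^{(i)}-1)$. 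Substituting these into the CDF form of $\Xi^*(\alpha,\eps)$ turns it into a purely combinatorial program over $S\subseteq\Ac(\eps)$: maximize $\sum_{(i,j)\in S}(h_j^{(i)}-1)$ subject to $|S|\le \alpha\sum_{(i,j)\in S}h_j^{(i)}$ and $S\neq\varnothing$ (the positive prefactor $\eps/r_1^*$ on the objective is irrelevant for the argmax). Note that $h_j^{(i)}\ge 1$ for every pair, because $F_i(I_j^{(i)})\ge 0$; hence every summand of the objective is nonnegative, so $\sum_{\ell=1}^{M}(h_{(\ell)}-1)$ is nondecreasing in $M$.

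Next I would run a short exchange argument. Fix any feasible nonempty $S$ and set $M=|S|$. Since $\sum_{(i,j)\in S}h_j^{(i)}$ is a sum of $M$ elements of the multiset $\Hc$, it is at most the sum $\sum_{\ell=1}^{M}h_{(\ell)}$ of the $M$ largest ones; combined with feasibility this gives $M\le \alpha\sum_{(i,j)\in S}h_j^{(i)}\le \alpha\sum_{\ell=1}^{M}h_{(\ell)}$, so $M$ belongs to the index set defining $M^*$ in \eqref{ms}, \ie $M\le M^*$. For the objective, the same domination yields $\sum_{(i,j)\in S}(h_j^{(i)}-1)\le \sum_{\ell=1}^{M}(h_{(\ell)}-1)\le \sum_{\ell=1}^{M^*}(h_{(\ell)}-1)$, where the last step uses $M\le M^*$ together with $h_{(\ell)}-1\ge 0$. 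Finally, any $S\in\Xi(\alpha,\eps)$ has $|S|=M^*$ and $\sum_{(i,j)\in S}h_j^{(i)}=\sum_{\ell=1}^{M^*}h_{(\ell)}$, so it is feasible (the constraint $M^*\le\alpha\sum_{\ell=1}^{M^*}h_{(\ell)}$ is precisely the defining property of $M^*$) and it attains the upper bound $\sum_{\ell=1}^{M^*}(h_{(\ell)}-1)$ just derived for every feasible competitor; hence it lies in $\Xi^*(\alpha,\eps)$, proving $\Xi(\alpha,\eps)\subseteq\Xi^*(\alpha,\eps)$.

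The argument is short, so there is no single hard computational step; the real content is the observation in the first paragraph that the calibration $L^{(i)}=\eps/(q^{(i)}r_0^{(i)})$ equalizes the null mass across all candidate intervals, which is exactly what decouples the cardinality $|S|$ from the FDR ``budget'' $\alpha\sum h_j^{(i)}$ and legitimizes sorting by the single scalar $h_j^{(i)}$. Two minor points still need care. One is the degenerate case $M^*=0$: then no nonempty feasible $S$ exists, and one should either note that the feasible region of $\Xi^*(\alpha,\eps)$ is assumed nonempty (equivalently $h_{(1)}\ge 1/\alpha$, i.e.\ $M^*\ge 1$) or treat this case separately. The other is ties inside $\Hc$: the exchange argument uses only the value $\sum_{\ell=1}^{M^*}h_{(\ell)}$ and never a particular tie-breaking, so \emph{every} set meeting the two conditions in the definition of $\Xi(\alpha,\eps)$ is certified optimal, which is exactly the inclusion asserted rather than an equality of singletons.
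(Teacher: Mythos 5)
Your proof is correct and follows essentially the same route as the paper's: both exploit the calibration $L^{(i)}=\eps/(q^{(i)}r_0^{(i)})$ through the identity $r_1^{(i)}q^{(i)}F_i\bigl(I_j^{(i)}\bigr)=\eps\bigl(h_j^{(i)}-1\bigr)$ (the paper phrases this as adding $M\eps$ to the objective for fixed $|S|=M$), so that sorting by the single scalar $h_j^{(i)}$ simultaneously maximizes power and minimizes the FDR ratio, after which the largest feasible cardinality is $M^*$. Your explicit note that $h_j^{(i)}\ge 1$ supplies the monotonicity in $M$ that the paper asserts via the non-decreasing max, and your remarks on the $M^*=0$ edge case and on ties are consistent with the paper's discussion following its proof.
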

\begin{proof}
Observe that $U(I_j^{(i)})=L^{(i)}=\eps/({q}^{(i)}{r}_0^{(i)})$. Hence, the optimization problem for $\Xi^*(\alpha,\eps)$ simplifies to
\begin{align}
    &\Xi^*(\alpha,\eps)= \underset{S\subseteq{\mathcal{A}}}{\arg\max}\sum_{(i,j)\in S}{r_1^{(i)}q^{(i)}F_i\left(I_j^{(i)}\right)}\label{eq:problem1}\\
    &\hspace{4em} \text{s.t.}\quad \frac{|S|}{\sum_{(i,j)\in S}h^{(i)}_j} \leq \alpha,\ \  S\neq \varnothing. \label{eq:problem2}
\end{align} 
We note that for any fixed $1\leq |S|=M\leq |\mathcal{A}|$, we have
\begin{align*}
    &\underset{{S\subseteq{\mathcal{A}}},\ {|S|=M}}{\arg\max}\sum_{(i,j)\in S}{r_1^{(i)}q^{(i)}F_i\left(I_j^{(i)}\right)}\\
    &=\underset{{S\subseteq{\mathcal{A}}},\ {|S|=M}}{\arg\max}M\eps+\sum_{(i,j)\in S}{r_1^{(i)}q^{(i)}F_i\left(I_j^{(i)}\right)}\\
    &\overset{(a)}{=}\underset{{S\subseteq{\mathcal{A}}},\ {|S|=M}}{\arg\max}\sum_{(i,j)\in S}{\left(\eps+r_1^{(i)}q^{(i)}F_i\left(I_j^{(i)}\right)\right)}\\
    &= \underset{{S\subseteq{\mathcal{A}}},\ {|S|=M}}{\arg\max}\sum_{(i,j)\in S}\left(r_0^{(i)}q^{(i)}U\left(I_j^{(i)}\right) +r_1^{(i)}q^{(i)}F_i\left(I_j^{(i)}\right)\right),
\end{align*}
where (a) follows from the condition that $|S|=M$. Thus the optimization problem in~\eqref{eq:problem1}
 and~\eqref{eq:problem2} for fixed $|S|=M$ can be further simplified as follows, 
 \begin{align*}
      \underset{{S\subseteq{\mathcal{A}}},\ {|S|=M}}{\arg\max}\sum_{(i,j)\in S}{h^{(i)}_j},
    \hspace{3em} \text{s.t.}\quad \frac{M}{\sum_{(i,j)\in S}h^{(i)}_j} \leq \alpha,
\end{align*}
where we use~\eqref{eq:G} and $h^{(i)}_j= (1/\eps) \,q^{(i)}G_i(I_j^{(i)})$. 
Observe that  
\begin{equation*}
    S\in\left\{S'\subseteq \mathcal{A}(\e):|S'|=M,\ \sum_{(i,j)\in S'}h_j^{(i)}=\sum_{\ell=1}^{M}h_{(\ell)}\right\}
\end{equation*}
maximizes $\sum_{(i,j)\in S}{h^{(i)}_j}$ and minimizes ${M}/\left({\sum_{(i,j)\in S}h^{(i)}_j}\right)$ simultaneously over the solutions with $|S|=M$. 
Since $\underset{{S\subseteq{\mathcal{A}}},\ {|S|=M}}{\max}\sum_{(i,j)\in S}{r_1^{(i)}q^{(i)}F_i(I_j^{(i)})}$ is non-decreasing in $M$,
\begin{align*}
    &\underset{S\subseteq{\mathcal{A}}}{\arg\max}\sum_{(i,j)\in S}{r_1^{(i)}q^{(i)}F_i(I_j^{(i)})}
    \hspace{1em} \text{s.t.}\quad {|S|}\leq \alpha{\sum_{(i,j)\in S}h^{(i)}_j}\\
    &\overset{(b)}{\supseteq}\left\{S\subseteq \mathcal{A}(\e):|S|=M^*,\ \sum_{(i,j)\in S}h_j^{(i)}=\sum_{\ell=1}^{M^*}h_{(\ell)}\right\}=\Xi.
 \end{align*}
\end{proof}
We note that if we assume $F_i(I_j^{(i)})>0$ for all $(i,j)\in {\mathcal{A}}(\eps)$, then $(b)$ holds with equality because $\underset{{S\subseteq{\mathcal{A}}},\ {|S|=M}}{\max}\sum_{(i,j)\in S}{r_1^{(i)}q^{(i)}F_i(I_j^{(i)})}$ would be strictly increasing in $M$. Furthermore, if we assume there are no ties among $h^{(i)}_j$'s, then $\Xi(\alpha,\eps)$ would become a singleton where $S^*:=\{(i,j):h^{(i)}_j\geq h_{({M^*})}\}$ is its single element.}

\vspace{-1em}
\subsection{Greedy Aggregation Procedure}
 In this section, we propose greedy aggregation, a procedure that is based on segmenting the $[0,1]$ interval at each node and aims to estimate the oracle solution we have obtained in Theorem~\ref{thm:oracle-interval}.
 We note that with one round of communication, each node $i$ will access the total number of p-values $m$ and can estimate $\hat{q}^{(i)}\hat{r}_0^{(i)}$ using $\hat{q}^{(i)}=m^{(i)}/m$ and methods from Section~\ref{ssec:prop_est}. 
 The length and number of the candidate intervals at node $i$ is {estimated} by $\widehat L^{(i)}: = \eps/(\hat{q}^{(i)}\hat{r}_0^{(i)})$ and $\widehat{K}^{(i)}:= \lfloor1/\widehat L^{(i)}\rfloor$, respectively.   
 Obviously, a node with $\widehat{K}^{(i)}=0$ has no candidate intervals for rejection.
 
 Recall that node $i$ possesses $m^{(i)}$ p-values $\{P_1^{(i)},...,P_{m^{(i)}}^{(i)}\}$. At node $i$, we introduce the ``density" of p-values over interval {$\widehat{I}_j^{(i)}:=\widehat L^{(i)}\cdot(j-1,\; j]$} for $1\leq j \leq \widehat{K}^{(i)}$ as,
\begin{equation}
    \hat{h}_j^{(i)}:=\frac{1}{\eps m}\sum_{k=1}^{m^{(i)}}\ind\left\{ P^{(i)}_k\in\widehat{I}_j^{(i)}\right\},\label{eq:d_ij}
\end{equation}
which is an estimator of $h_j^{(i)}$ defined in Section~\ref{orac}.
Roughly, the idea is as follows. In each round of communication, the center node looks for the interval with the highest density of p-values among all nodes, and notify the corresponding node to reject the hypotheses in that interval; this process goes on until the estimated $\mathsf{FDR}$ exceeds $\a$ or all the hypotheses in the network have been rejected. To distinguish different actions needed at each node, the center node sends $1$ (or $0$) to indicate that rejections (or no rejections) need to be made, while any local node send $-1$ to the center node if all of its p-values have been rejected. We now formally describe our method. We use the notation $\langle k \rangle$ to denote the $k$-th iteration of the algorithm.

\smallskip
\noindent{\bf\underline{First round:}}
Each node $i$ first computes 
\begin{align*}
    \hat{h}^{(i)}_{\max}\langle 1\rangle :=\underset{1\leq j \leq \widehat{K}^{(i)}}{\max}{\hat{h}_j^{(i)}}\hspace{1.2em}\text{and}\hspace{1.2em}
    j^{(i)}_{\langle 1\rangle} := \underset{1\leq j \leq \widehat{K}^{(i)}}{\arg\max}\,{\hat{h}_j^{(i)}},
\end{align*}
and then sends $\hat{h}^{(i)}_{\max}\langle 1\rangle$ to the center node. The center node estimates $\widehat{\mathsf{FDR}}=1/\hat{h}_{\max}\langle 1\rangle$ where $\hat{h}_{\max}\langle 1\rangle:=\underset{1\leq i\leq N}{\max}\hat{h}^{(i)}_{\max}\langle 1\rangle$. 

Case I: If $\widehat{\mathsf{FDR}}\leq \alpha$, then the center node sends $1$ to node $i^*_{\langle 1\rangle}:= \underset{1\leq i\leq N}{\arg\max}\,{\hat{h}_{\max}^{(i)}\langle 1\rangle}$, which rejects the interval 
\begin{align*}
    \widehat L^{(i^*_{\langle 1\rangle})}\cdot\big(j^*_{\langle 1\rangle}-1,\; j^*_{\langle 1\rangle}\big],
\end{align*}
where  $j^*_{\langle 1\rangle} := j^{(i^*)}_{\langle 1\rangle}$. When ties exist, i.e., several nodes achieve $\hat{h}_{\max}\langle 1\rangle$, the center node rejects only one of them arbitrarily. The center node sends a $0$ to the other nodes so that they know no rejections will be made on their local hypotheses.

Case II: If $\widehat{\mathsf{FDR}}> \alpha$, the center node terminates the procedure with no rejections.

\medskip

\noindent{\bf\underline{$k$-th round:}}
Node $i^*_{\langle k-1\rangle}$ (i.e., the node that received $1$ in the $(k-1)$-th round) updates the center node by sending
\begin{align*}
    \hat{h}^{(i^*_{\langle k-1\rangle})}_{\max}\langle k\rangle:=\underset{ j  \text{ not rejected}}{\max}{\hat{h}_j^{(i^*_{\langle k-1\rangle})}},
\end{align*}
or sending $-1$ if all the intervals at node $i^*_{\langle k-1\rangle}$ have already been rejected. For all the other nodes ($i\neq i^*_{\langle k-1\rangle}$), we have $\hat{h}^{(i)}_{\max}\langle k\rangle=\hat{h}^{(i)}_{\max}\langle k-1\rangle$. 

Case I: If $\underset{1\leq i\leq N}{\max}\hat{h}^{(i)}_{\max}\langle k\rangle>0$, the center node sets $\hat{h}_{\max}\langle k\rangle=\underset{1\leq i\leq N}{\max}\hat{h}^{(i)}_{\max}\langle k\rangle$ and estimates 
\begin{align*}
    \widehat{\mathsf{FDR}}=\frac{k}{\sum_{\ell=1}^k \hat{h}_{\max}\langle \ell\rangle}.
\end{align*}
If $\widehat{\mathsf{FDR}}\leq \alpha$, the center node rejects all the hypotheses in the interval corresponding to $\hat{h}_{\max}\langle k\rangle$ in the same way as in the first round.

Case II: If $\underset{1\leq i\leq N}{\max}\hat{h}^{(i)}_{\max}\langle k\rangle\in\{0,-1\}$ or $\widehat{\mathsf{FDR}} > \alpha$, the center node terminates the procedure. 

The communication cost of our greedy aggregation method is $\mathcal{O}(\log m)$ bits per iteration, since the nodes only need to send the number of p-values that lie in certain intervals according to~\eqref{eq:d_ij}. The number of communication rounds $Q$ is bounded by the total number of candidate intervals in the network, i.e., $Q\leq \sum_{i=1}^N \widehat{K}^{(i)}\leq\left\lfloor \frac{1}{\eps} \sum_{i=1}^N\hat{q}^{(i)}\hat{r}_0^{(i)}\right\rfloor$. Therefore, for fixed $\epsilon$, the total communication cost is $\mathcal{O}(\log m)$ bits. In practice, following the literature on histograms~\cite{davies1947statistical,freedman1981histogram}, we suggest taking $\eps$ proportional to $m^{-1/2}$ or $m^{-1/3}$ which results in communicating $\mathcal{O}(m^{1/2}\log m)$ and $\mathcal{O}(m^{1/3}\log m)$ bits respectively. This is in contrast with communicating $\Oc(m)$ real-valued p-values for centralized inference.
It is noteworthy that {(for a fixed $\eps$)} the number of communication rounds is asymptotically upper bounded by $\a/\e$ (see Remark~\ref{rem:upper}).

\vspace{-1em}
\subsection{Asymptotic Performance}
Let $\hat{h}_{(k)}$ denote the $k$-th largest element of
\begin{equation*}
    {\widehat{\mathcal{H}}}:=\left\{\hat{h}_j^{(i)}:1\leq i \leq N,\ 1\leq j\leq \widehat{K}^{(i)}\right\},
\end{equation*}
where we allow ties inside ${\widehat{\mathcal{H}}}$.
Note that our greedy procedure essentially rejects the intervals corresponding to the $\widehat{M}$ largest elements of ${\widehat{\mathcal{H}}}$, where
\begin{align*}
    \widehat{M}:&=\min\left\{1\leq M \leq |{\widehat{\mathcal{H}}}|:\frac{M}{\sum_{\ell=1}^M \hat{h}_{(\ell)}} >\alpha \right\}-1\\
    &=\min\left\{1\leq M \leq |{\widehat{\mathcal{H}}}|:\frac{1}{M}{\sum_{\ell=1}^M \hat{h}_{(\ell)}} < 1/\alpha \right\}-1
\end{align*}
and $\widehat{M}=|{\widehat{\mathcal{H}}}|$ if $\frac{1}{M}{\sum_{\ell=1}^M \hat{h}_{(\ell)}} \geq 1/\alpha$ for all $1\leq M \leq |{\widehat{\mathcal{H}}}|$. But since $\frac{1}{M}{\sum_{\ell=1}^M \hat{h}_{(\ell)}}$ is non-increasing on $1\leq M \leq |{\widehat{\mathcal{H}}}|$, $\widehat{M}$ can be written as
\begin{align*}
    \widehat{M}=\max\left\{0\leq M \leq |{\widehat{\mathcal{H}}}|:M \leq \alpha \sum_{\ell=1}^M \hat{h}_{(\ell)}\right\}.
\end{align*}
Thus the procedure rejects some (arbitrary) $\widehat{S}\in  \widehat{\Xi}(\alpha,\eps)$, where
\begin{align*}
    \widehat{\Xi}(\alpha,\eps):=\left\{S\subseteq \widehat{\mathcal{A}}(\eps):|S|=\widehat{M},\ \sum_{(i,j)\in S}\hat{h}_j^{(i)}=\sum_{\ell=1}^{\widehat{M}} \hat{h}_{(\ell)}\right\},
\end{align*}
where $\widehat{\mathcal{A}}(\eps):=\big\{(i,j): 1\leq i \leq N,\ 1\leq j\leq \widehat{K}^{(i)}(\eps)\big\}$.

Consider the optimal asymptotic power that can be achieved by rejection regions of the form $\Gamma_S$, $S\subseteq\Ac(\eps)$ subject to asymptotic FDR control, defined as
\begin{align*}
 \Pc_{\Xi^*}(\alpha,\eps):= 
    \underset{S\subseteq{\mathcal{A}}(\eps)}{\max}\ \underset{m\rightarrow\infty}{\lim}\mathsf{power}(\Gamma_S), \\
      \text{s.t.}\  \underset{m\rightarrow\infty}{\lim}\mathsf{FDR}(\Gamma_S)\leq \alpha,
\end{align*} 
Define $\widehat\Gamma_{S}^{(i)}$ and $\widehat\Gamma_{S}$, $S\subseteq\widehat\Ac(\eps)$ via $\widehat{I}_j^{(i)}$ similar to $\Gamma_{S}^{(i)}$ and $\Gamma_{S}$.
\begin{propo}\label{localoptim}
Assume $G_i(t),\ 1\leq i\leq N$ are continuous.  Fix some $\eps>0$ and $\alpha\in (0,1)$.
Under Assumption~\ref{ass:consis}, if $\alpha \sum_{\ell=1}^{M^*} h_{(\ell)}\notin \mathbb{N}$ and $1/L^{(i)}\notin \mathbb N$ for all $1\leq i\leq N$, then 
$\underset{m\rightarrow\infty}{\lim}\mathsf{FDR}(\widehat\Gamma_{\widehat S})\leq \alpha$ and $\mathsf{power}(\widehat\Gamma_{\widehat S})\to\Pc_{\Xi^*}$.
\end{propo}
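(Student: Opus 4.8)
The plan is to show that, almost surely and for all large $m$, the greedy procedure selects exactly $M^*$ intervals, with its (possibly non-unique) selection $\widehat S$ satisfying $|\widehat S|=M^*$ and $\sum_{(i,j)\in\widehat S}\hat h_j^{(i)}=\sum_{\ell=1}^{M^*}\hat h_{(\ell)}$, and then to transfer the resulting a.s.\ convergence of $\mathsf{FDP}$ and $\mathsf{TDP}$ to their expectations via the bounded convergence theorem. First I would establish convergence of the auxiliary quantities: $\hat q^{(i)}=m^{(i)}/m\xrightarrow{a.s.}q^{(i)}$ by Kolmogorov's SLLN, $\hat r_0^{(i)}\xrightarrow{a.s.}r_0^{(i)}$ by Assumption~\ref{ass:consis}, so $\widehat L^{(i)}\xrightarrow{a.s.}L^{(i)}$; since $1/L^{(i)}\notin\mathbb{N}$, $\lfloor\cdot\rfloor$ is continuous there and hence $\widehat K^{(i)}=K^{(i)}$ for all large $m$, a.s., so $\widehat{\mathcal{A}}(\eps)=\mathcal{A}(\eps)$ eventually and the endpoints $(j-1)\widehat L^{(i)},\,j\widehat L^{(i)}$ converge to $(j-1)L^{(i)},\,jL^{(i)}$.

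Next I would pass the empirical averages to the limit on these data-dependent intervals. Since $m^{(i)},m^{(i)}_0,m^{(i)}_1\to\infty$ a.s.\ (Borel--Cantelli, as in Lemma~\ref{lem:lln}), the Glivenko--Cantelli theorem gives uniform a.s.\ convergence of the empirical CDFs of the p-values and of the null-only and non-null-only subsamples at each node; combined with the endpoint convergence above and the assumed continuity of $G_i$, this yields, for every $(i,j)\in\mathcal{A}(\eps)$ and a.s., $\hat h_j^{(i)}\to h_j^{(i)}$, $\frac1m\#\{k:\mathsf{H}_{0,k}^{(i)}=1,\,P_k^{(i)}\in\widehat I_j^{(i)}\}\to\eps$, $\frac1m\#\{k:\mathsf{H}_{0,k}^{(i)}=0,\,P_k^{(i)}\in\widehat I_j^{(i)}\}\to q^{(i)}r_1^{(i)}F_i(I_j^{(i)})$, together with $m_1/m\to r_1^*>0$. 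As $\mathcal{A}(\eps)$ is finite, these hold simultaneously, so for every $S\subseteq\mathcal{A}(\eps)$, a.s., $\mathsf{FDP}_m(\widehat\Gamma_S)\to |S|/\sum_{(i,j)\in S}h_j^{(i)}$ and $\mathsf{TDP}_m(\widehat\Gamma_S)\to \frac{1}{r_1^*}\sum_{(i,j)\in S}q^{(i)}r_1^{(i)}F_i(I_j^{(i)})$.

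Then I would show the greedy stopping index stabilizes, $\widehat M=M^*$ for large $m$. Sorting the finite multiset gives $\hat h_{(\ell)}\to h_{(\ell)}$ a.s.\ for each $\ell$, so $\hat\phi(M):=M-\alpha\sum_{\ell=1}^M\hat h_{(\ell)}$ converges pointwise to $\phi(M):=M-\alpha\sum_{\ell=1}^M h_{(\ell)}$; both are discrete-convex (the increments $1-\alpha\hat h_{(M)}$, resp.\ $1-\alpha h_{(M)}$, are nondecreasing since the sorted values are nonincreasing), so $\{\phi\le0\}=\{0,\dots,M^*\}$ and $\{\hat\phi\le0\}=\{0,\dots,\widehat M\}$. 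The hypothesis $\alpha\sum_{\ell=1}^{M^*}h_{(\ell)}\notin\mathbb{N}$ upgrades the defining inequality $M^*\le\alpha\sum_{\ell=1}^{M^*}h_{(\ell)}$ to a strict one, while maximality of $M^*$ already gives $\phi(M^*+1)>0$ strictly; both strict inequalities are preserved under the pointwise convergence, so $\hat\phi(M^*)<0<\hat\phi(M^*+1)$ for large $m$ and thus $\widehat M=M^*$ a.s.\ eventually. Consequently $|\widehat S|=M^*$ and $\sum_{(i,j)\in\widehat S}h_j^{(i)}=\sum_{(i,j)\in\widehat S}\hat h_j^{(i)}+o(1)=\sum_{\ell=1}^{M^*}\hat h_{(\ell)}+o(1)\to\sum_{\ell=1}^{M^*}h_{(\ell)}$, for any admissible choice of $\widehat S$.

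Finally I would conclude. For FDR, the previous steps give $\mathsf{FDP}_m(\widehat\Gamma_{\widehat S})=M^*/\sum_{(i,j)\in\widehat S}h_j^{(i)}+o(1)\to M^*/\sum_{\ell=1}^{M^*}h_{(\ell)}\le\alpha$ a.s.; since $\mathsf{FDP}\in[0,1]$, bounded convergence yields $\lim_m\mathsf{FDR}(\widehat\Gamma_{\widehat S})=M^*/\sum_{\ell=1}^{M^*}h_{(\ell)}\le\alpha$. For power, using $q^{(i)}r_1^{(i)}F_i(I_j^{(i)})=\eps(h_j^{(i)}-1)$ (from~\eqref{eq:G}), $\sum_{(i,j)\in\widehat S}q^{(i)}r_1^{(i)}F_i(I_j^{(i)})=\eps\bigl(\sum_{(i,j)\in\widehat S}h_j^{(i)}-M^*\bigr)\to\eps\bigl(\sum_{\ell=1}^{M^*}h_{(\ell)}-M^*\bigr)$, which by Theorem~\ref{thm:oracle-interval} is the common power-numerator $r_1^*\Pc_{\Xi^*}$ of every $S\in\Xi(\alpha,\eps)\subseteq\Xi^*(\alpha,\eps)$ (so ties in $\mathcal{H}$ are immaterial); hence $\mathsf{TDP}_m(\widehat\Gamma_{\widehat S})\to\Pc_{\Xi^*}$ a.s.\ and, again by bounded convergence, $\mathsf{power}(\widehat\Gamma_{\widehat S})\to\Pc_{\Xi^*}$. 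I expect the main obstacle to be the second step: the $\hat h_j^{(i)}$ live on random, $m$-dependent intervals $\widehat I_j^{(i)}$, so their convergence to $h_j^{(i)}$ is not a plain SLLN but needs the Glivenko--Cantelli/uniform argument combined with continuity of $G_i$ and of $\lfloor\cdot\rfloor$ at $1/L^{(i)}$; relatedly, pinning down $\widehat M=M^*$ hinges on converting the defining inequality of $M^*$ into a strict one, which is precisely where the non-degeneracy hypotheses $\alpha\sum_{\ell=1}^{M^*}h_{(\ell)}\notin\mathbb{N}$ and $1/L^{(i)}\notin\mathbb{N}$ are used.
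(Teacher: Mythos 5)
Your proof is correct and follows essentially the same route as the paper's: strong-law and Glivenko--Cantelli arguments give $\widehat L^{(i)}\to L^{(i)}$, $\widehat K^{(i)}=K^{(i)}$ eventually, and $\hat h_j^{(i)}\to h_j^{(i)}$ a.s., the non-integrality hypotheses pin down $\widehat M=M^*$, and bounded convergence transfers the a.s.\ limits of $\mathsf{FDP}$ and $\mathsf{TDP}$ to $\mathsf{FDR}$ and power. The only (harmless) variation is that you handle ties via the identity $q^{(i)}r_1^{(i)}F_i(I_j^{(i)})=\eps(h_j^{(i)}-1)$ and convergence of $\sum_{(i,j)\in\widehat S}\hat h_j^{(i)}$, whereas the paper argues that the order of the $\hat h_j^{(i)}$ matches that of the $h_j^{(i)}$ up to ties so that $\widehat S\in\Xi^*$ eventually.
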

\begin{proof}
{According to Assumption \ref{ass:consis} and the strong law of large numbers we get $\widehat L^{(i)}\xrightarrow{a.s.}L^{(i)}$ for a fixed $\eps>0$ and all $1\leq i\leq N$. Therefore, $\widehat{K}^{(i)}={K}^{(i)}$ and $\widehat{\Ac}(\e)=\Ac(\e),\ a.s.$ for large $m$ since $1/L^{(i)}\notin \mathbb N$. Let $\widehat{G}_i$ denote the empirical CDF of p-values at node $i$. We note,
\begin{equation}
    \hat{h}^{(i)}_j = \frac{m^{(i)}}{{\eps} m}\left[\widehat{G}_i\left(j\widehat L^{(i)}\right)-\widehat{G}_i\left((j-1)\widehat L^{(i)}\right)\right]\label{interval-conv}
\end{equation}
By the Glivenko-Cantelli theorem~\cite{smirnov1944approximate}, $\widehat L^{(i)}\xrightarrow{a.s.}L^{(i)}$, and continuity of $G_i$ we get}, $\left|\hat{h}^{(i)}_j-{h}^{(i)}_j\right|\xrightarrow{a.s.} 0$ for all $(i,j)\in\Ac(\e)$ and as a result, $\sum_{\ell=1}^{|\Ac(\e)|}\left|\hat{h}_{(\ell)}-h_{(\ell)}\right|\xrightarrow{a.s.} 0$. Observe,
\begin{align*}
    \underset{1\leq M\leq |\Ac(\e)|}{\max}\,\left|\sum_{\ell=1}^M \left(\hat{h}_{(\ell)}-h_{(\ell)}\right)\right|&\leq \underset{1\leq M\leq |\Ac(\e)|}{\max}\,\sum_{\ell=1}^M\left|\hat{h}_{(\ell)}-h_{(\ell)}\right|\\
    &=\sum_{\ell=1}^{|\Ac(\e)|}\left|\hat{h}_{(\ell)}-h_{(\ell)}\right|\xrightarrow{a.s.} 0.
\end{align*}
 Hence, $\widehat{M}=M^*,\ a.s.$ for large $m$ since $\alpha \sum_{\ell=1}^{M^*} h_{(\ell)}\notin \mathbb{N}$. Since $\underset{(i,j)\in\Ac(\e)}{\max}\,|\hat{h}^{(i)}_j-{h}^{(i)}_j|\xrightarrow{a.s.} 0$,  $\hat{h}^{(i)}_j$ and ${h}^{(i)}_j$ have the same order up to ties in $\Hc$, almost surely. {Therefore, $\widehat{\Xi}\subseteq\Xi\ a.s.$ which implies $\widehat{\Xi}\subseteq\Xi^*\ a.s.$ according to Theorem~\ref{thm:oracle-interval}. Hence, $\widehat S\in \Xi^*\ a.s.$ for large $m$. The result follows from $|\hat{h}^{(i)}_j-{h}^{(i)}_j|\xrightarrow{a.s.} 0$ and similar arguments as in~\eqref{interval-conv} for $F_i(t)$ and $U(t)$ together with bounded convergence theorem.}   
\end{proof}

\begin{corollary}
    Under the same setting as in Proposition~\ref{localoptim}. Suppose for each $i$, $G_i(t)$ is differentiable and $\left|\hat{r}_0^{(i)}-\overline{r}_0^{\,(i)}\right|=\Oc_p\left(u^{(i)}(m)\right)$. Then, $\mathsf{FDR}(\widehat\Gamma_{\widehat S})\leq \alpha+\Oc(z_m)$ and $\mathsf{power}(\widehat\Gamma_{\widehat S})=\Pc_{\Xi^*}+\Oc(z_m)$ where $z_m=u_m\vee m^{-1/2}$ with $u_m={\max}_i\ u^{(i)}(m)$.
\end{corollary}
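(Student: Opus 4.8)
The plan is to upgrade the almost-sure convergence statements in the proof of Proposition~\ref{localoptim} to rate statements, following exactly the same chain of implications but tracking the orders. First I would establish the rate at which the estimated interval structure matches the true one: since $G_i$ is differentiable, Lemma~\ref{lem:lln} (or the argument behind it) gives $|\widehat L^{(i)}-L^{(i)}| = \Oc_p(u_m \vee m^{-1/2})$ via $\widehat L^{(i)} = \eps/(\hat q^{(i)}\hat r_0^{(i)})$, $|\hat q^{(i)}-q^{(i)}| = \Oc_p(m^{-1/2})$, and a Taylor expansion of $x\mapsto \eps/x$. Because $1/L^{(i)}\notin\mathbb N$, there is a fixed gap between $1/L^{(i)}$ and the nearest integer, so for $m$ large enough (in a probabilistic sense) $\widehat K^{(i)} = K^{(i)}$ and $\widehat\Ac(\eps)=\Ac(\eps)$; this event fails with probability that vanishes, which is absorbed into the $\Oc_p$ notation.

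Next I would bound $\max_{(i,j)\in\Ac(\eps)}|\hat h_j^{(i)} - h_j^{(i)}|$. Using the representation \eqref{interval-conv}, write $\hat h_j^{(i)}-h_j^{(i)}$ as the sum of two contributions: (i) the empirical-process fluctuation $\frac{m^{(i)}}{\eps m}\big[(\widehat G_i - G_i)(j\widehat L^{(i)}) - (\widehat G_i-G_i)((j-1)\widehat L^{(i)})\big]$, which is $\Oc_p(m^{-1/2})$ uniformly in $j$ by the Dvoretzky–Kiefer–Wolfowitz inequality (or weak convergence of the empirical process, as invoked in Proposition~\ref{thm:no-com-BH}); and (ii) the bias from using $\widehat L^{(i)}$ instead of $L^{(i)}$, which by differentiability of $G_i$ and the mean value theorem is $\Oc(|\widehat L^{(i)}-L^{(i)}|) = \Oc_p(u_m\vee m^{-1/2})$. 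Also $m^{(i)}/m - q^{(i)} = \Oc_p(m^{-1/2})$. Combining, $\max_{(i,j)}|\hat h_j^{(i)}-h_j^{(i)}| = \Oc_p(z_m)$ with $z_m = u_m\vee m^{-1/2}$, and since $|\Ac(\eps)|$ is a fixed finite number (for fixed $\eps$), $\sum_{\ell=1}^{|\Ac(\eps)|}|\hat h_{(\ell)} - h_{(\ell)}| = \Oc_p(z_m)$ as well, and in particular $\max_{1\le M\le |\Ac(\eps)|}\big|\sum_{\ell=1}^M(\hat h_{(\ell)}-h_{(\ell)})\big| = \Oc_p(z_m)$.

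From here the argument mirrors the proof of Theorem~\ref{thm:plugin}: since $\alpha\sum_{\ell=1}^{M^*}h_{(\ell)}\notin\mathbb N$, there is a fixed margin $\eta>0$ separating $\alpha\sum_{\ell=1}^{M^*}h_{(\ell)}$ from both $M^*$ and $M^*+1$ in the defining inequality for $M^*$; once the $\Oc_p(z_m)$ perturbation is smaller than $\eta$ (an event of probability $1-o(1)$), we get $\widehat M = M^*$ and $\widehat\Xi\subseteq\Xi\subseteq\Xi^*$, so $\widehat S\in\Xi^*$. Then $\mathsf{FDR}(\widehat\Gamma_{\widehat S})$ and $\mathsf{power}(\widehat\Gamma_{\widehat S})$ differ from their limiting optimal values $\alpha$ (upper bound) and $\Pc_{\Xi^*}$ by: the exponentially small finite-sample-vs-asymptotic gap from the Remark after \eqref{seg_opt} (which is $o(z_m)$), plus the deviation of the selected intervals' CDF-content from the oracle's, which is controlled by $\max_{(i,j)}|\hat h_j^{(i)}-h_j^{(i)}|$ together with analogous $\Oc_p(z_m)$ bounds on $F_i(\widehat I_j^{(i)})-F_i(I_j^{(i)})$ and $U(\widehat I_j^{(i)})-U(I_j^{(i)})$ (the latter using differentiability of $F_i$ and the exact form of $U$); finally a bounded-convergence argument converts the $\Oc_p$ bounds on FDP/TDP into $\Oc$ bounds on FDR/power, as in Proposition~\ref{thm:no-com-BH}. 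The main obstacle I anticipate is being careful about the two ``for large $m$'' events — $\widehat K^{(i)} = K^{(i)}$ and $\widehat M = M^*$ — which hold only with probability $1-o(1)$ rather than deterministically; one has to argue that the contribution of the complementary events to FDR and power is $o(z_m)$ (it is in fact $o(m^{-c})$ for some $c>0$, since the probabilities decay polynomially in $m$ by the DKW-type bounds, while FDP and TDP are bounded by $1$), so the stated rate is unaffected.
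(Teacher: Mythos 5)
Your proposal is correct and follows essentially the route the paper intends: the corollary is stated without an explicit proof, and the implied argument is precisely to re-run the proof of Proposition~\ref{localoptim} while tracking rates --- $\Oc_p(z_m)$ for $\widehat L^{(i)}$, hence for $\max_{(i,j)}|\hat h_j^{(i)}-h_j^{(i)}|$ via a DKW/empirical-process bound plus a mean-value bound exploiting differentiability of $G_i$, then the fixed margins afforded by $1/L^{(i)}\notin\mathbb{N}$ and $\alpha\sum_{\ell=1}^{M^*}h_{(\ell)}\notin\mathbb{N}$ to force $\widehat K^{(i)}=K^{(i)}$ and $\widehat M=M^*$ on an event of probability $1-o(1)$ --- exactly as you do. The one place you over-claim is in asserting that the complementary events decay polynomially ``by the DKW-type bounds'': the event $\{\widehat K^{(i)}\neq K^{(i)}\}$ is driven by $\hat r_0^{(i)}$, for which only an $\Oc_p\big(u^{(i)}(m)\big)$ rate (with no tail bound) is assumed, so its probability is merely $o(1)$; this is the same $\Oc_p$-to-$\Oc$-in-expectation looseness the paper itself commits in its other rate statements, so your argument is at the paper's level of rigor.
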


\begin{remark}
\label{rem:upper}
From \eqref{ms}, we have $M^*\leq \alpha/\eps$, therefore the number of rejected intervals and consequently the number of algorithm rounds are asymptotically bounded above by $\alpha/\eps$. 
\end{remark}

Define $\Pc^*(\alpha):=\underset{m\rightarrow\infty}{\lim}\mathsf{power}(\Gamma^*(\alpha))$ as the (asymptotic) optimal power among all the methods that reject a finite union of disjoint intervals, 
where $\Gamma^*$ is defined in Theorem~\ref{thm:optimal}. The following theorem shows that the asymptotic performance of the greedy aggregation procedure is near-optimal for small $\eps$.
\begin{theorem}\label{asymopt}
Under Assumptions \ref{ass:pdf}, we have 
\begin{equation*}
    \underset{\delta\rightarrow 0}{\underline{\lim}}\ \underset{\eps\rightarrow 0}{\underline{\lim}}\ \Pc_{\Xi^*}(\alpha+\delta,\eps)\geq \Pc^*(\alpha).
\end{equation*}
\end{theorem}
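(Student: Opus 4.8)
The plan is to first discard the outer $\underline{\lim}$: it is enough to prove that, for every fixed $\delta>0$, $\underset{\eps\rightarrow 0}{\underline{\lim}}\,\Pc_{\Xi^*}(\alpha+\delta,\eps)\geq\Pc^*(\alpha)$, since a quantity that is $\geq\Pc^*(\alpha)$ for all $\delta>0$ has $\underset{\delta\rightarrow 0}{\underline{\lim}}\geq\Pc^*(\alpha)$. So fix $\alpha\in(0,1)$ and $\delta>0$ and let $\Gamma^*(\alpha)=({\Gamma^*}^{(1)},\hdots,{\Gamma^*}^{(N)})$ be the optimal region of Theorem~\ref{thm:optimal}; being a solution of~\eqref{opt}, it satisfies $\underset{m\rightarrow\infty}{\lim}\mathsf{power}(\Gamma^*(\alpha))=\Pc^*(\alpha)$ and $\underset{m\rightarrow\infty}{\lim}\mathsf{FDR}(\Gamma^*(\alpha))\leq\alpha$. (If $\Gamma^*(\alpha)$ is $\nu$-null, then $\Pc^*(\alpha)=0$ and there is nothing to prove, so assume not.) The idea is to approximate each ${\Gamma^*}^{(i)}$ \emph{from inside} by a union of the candidate cells $I_j^{(i)}$, spending the $\delta$ of headroom on the small FDR perturbation this introduces, and then to send $\eps\rightarrow 0$ so that the power loss vanishes.

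The first step I would carry out is a structural observation: since ${\Gamma^*}^{(i)}=\Gamma^{(i)}(c_\alpha)=\{x\in(0,1):f_i(x)>r_0^{(i)}c_\alpha/r_1^{(i)}\}$ and Assumption~\ref{ass:pdf} makes $f_i$ continuous on $(0,1)$ with finitely many local extrema (hence piecewise monotone), each ${\Gamma^*}^{(i)}$ is a finite union of, say, $n_i$ open intervals. Then, for $\eps$ small enough that $K^{(i)}(\eps)\geq 1$ for all $i$, I set $S(\eps):=\{(i,j)\in\Ac(\eps):I_j^{(i)}\subseteq{\Gamma^*}^{(i)}\}$, so that $\Gamma_{S(\eps)}^{(i)}\subseteq{\Gamma^*}^{(i)}$. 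Because the cells $I_j^{(i)}$ have length $L^{(i)}=\eps/(q^{(i)}r_0^{(i)})$, only the two end cells of each component of ${\Gamma^*}^{(i)}$, together with the rightmost sliver of $(0,1)$ left uncovered by $\Ac(\eps)$, can fail to be selected, so $\nu({\Gamma^*}^{(i)}\setminus\Gamma_{S(\eps)}^{(i)})\leq(2n_i+1)\,\eps/(q^{(i)}r_0^{(i)})\rightarrow 0$ as $\eps\rightarrow 0$. Since $f_i,g_i\in L^1(0,1)$, absolute continuity of the Lebesgue integral then gives, for each $i$, $\nu(\Gamma_{S(\eps)}^{(i)})\rightarrow\nu({\Gamma^*}^{(i)})$, $\int_{\Gamma_{S(\eps)}^{(i)}}g_i\,d\nu\rightarrow\int_{{\Gamma^*}^{(i)}}g_i\,d\nu$, and $\int_{\Gamma_{S(\eps)}^{(i)}}f_i\,d\nu\rightarrow\int_{{\Gamma^*}^{(i)}}f_i\,d\nu$ as $\eps\rightarrow 0$.

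Next I would use the closed forms for $\underset{m\rightarrow\infty}{\lim}\mathsf{FDR}(\Gamma_S)$ and $\underset{m\rightarrow\infty}{\lim}\mathsf{power}(\Gamma_S)$ (law of large numbers plus bounded convergence, exactly as in the reduction to $\Xi^*(\alpha,\eps)$ above and valid verbatim for the fixed region $\Gamma^*(\alpha)$): these are a ratio and a sum, respectively, of the quantities $\nu(\Gamma_S^{(i)})$, $\int_{\Gamma_S^{(i)}}g_i\,d\nu$, $\int_{\Gamma_S^{(i)}}f_i\,d\nu$, hence continuous in them, with the FDR denominator bounded away from $0$ for small $\eps$ because some ${\Gamma^*}^{(i)}$ has positive measure. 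Thus, as $\eps\rightarrow 0$, $\underset{m\rightarrow\infty}{\lim}\mathsf{FDR}(\Gamma_{S(\eps)})\rightarrow\underset{m\rightarrow\infty}{\lim}\mathsf{FDR}(\Gamma^*(\alpha))\leq\alpha$ and $\underset{m\rightarrow\infty}{\lim}\mathsf{power}(\Gamma_{S(\eps)})\rightarrow\Pc^*(\alpha)$. Consequently there is $\eps_0(\delta)>0$ with $\underset{m\rightarrow\infty}{\lim}\mathsf{FDR}(\Gamma_{S(\eps)})\leq\alpha+\delta$ for all $\eps<\eps_0(\delta)$, so $S(\eps)$ is feasible for the optimization defining $\Pc_{\Xi^*}(\alpha+\delta,\eps)$ and therefore $\Pc_{\Xi^*}(\alpha+\delta,\eps)\geq\underset{m\rightarrow\infty}{\lim}\mathsf{power}(\Gamma_{S(\eps)})$; taking $\underset{\eps\rightarrow 0}{\underline{\lim}}$ gives $\underset{\eps\rightarrow 0}{\underline{\lim}}\,\Pc_{\Xi^*}(\alpha+\delta,\eps)\geq\Pc^*(\alpha)$, and then $\underset{\delta\rightarrow 0}{\underline{\lim}}$ finishes the argument.

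The step I expect to be the main obstacle is making this continuity argument fully rigorous, which comes down to two points. First, the geometric claim that each ${\Gamma^*}^{(i)}$ is a finite union of intervals, so that the inner grid approximation discards only $O(\eps)$ of Lebesgue measure; this is precisely where the ``finitely many local extrema'' clause of Assumption~\ref{ass:pdf} is used, and one should also handle the cases $c_\alpha\leq 0$ and ${\Gamma^*}^{(i)}$ abutting the endpoints $0$ or $1$ (where the bound is, if anything, easier). Second, controlling $\int_{{\Gamma^*}^{(i)}\setminus\Gamma_{S(\eps)}^{(i)}}g_i\,d\nu$ without assuming $g_i$ (equivalently $f_i$) bounded near $0$ or $1$, which is why I invoke $L^1$ absolute continuity of the integral rather than a supremum bound. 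Verifying that $\Gamma^*(\alpha)$ is feasible and non-degenerate (so the FDR denominator does not vanish) is then immediate from Theorem~\ref{thm:optimal} once the trivial case $\Pc^*(\alpha)=0$ is set aside.
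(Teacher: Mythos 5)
Your argument is correct and is essentially the paper's proof: both approximate $\Gamma^*(\alpha)$ by the grid cells $I_j^{(i)}$, absorb the resulting $o(1)$ FDR perturbation into the $\delta$ of headroom, and use absolute continuity of the Lebesgue integral to make the power loss vanish as $\eps\rightarrow 0$. The only difference is that you approximate $\Gamma^*(\alpha)$ from the inside (keeping cells contained in it) whereas the paper uses the minimal outer cover $\overline\Gamma^*$; this is a mirror-image variation with the same bookkeeping of the $O(\eps)$ boundary cells and the right-end sliver $(K^{(i)}L^{(i)},1]$.
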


\begin{proof}
Let $\overline\Gamma^*:=\bigcap_{\Gamma_S\supseteq\Breve{\Gamma}^*}\Gamma_S$ denote the minimal cover of ${\Breve{\Gamma}}^*:=\left({\Breve{\Gamma}}^*_1,\hdots,{\Breve{\Gamma}}^*_N\right)$ where 
${{\Breve{\Gamma}}^*}_i:={\Gamma^*}^{(i)}\cap (0,K^{(i)}L^{(i)}]$
using intervals of length $L^{(i)}=\e/(q^{(i)}r_0^{(i)})$ for each $i$. Now define $\overline{\Pc}^*:=\underset{m\rightarrow\infty}{\lim}\mathsf{power}(\overline\Gamma^*)$ and $\overline{\mathsf{FDR}}^*:=\underset{m\rightarrow\infty}{\lim}\mathsf{FDR}(\overline\Gamma^*)$. We note that for small enough $\eps$ we get,
\begin{align}
   &\overline{\mathsf{FDR}}^*= \frac{\sum_{i=1}^N q^{(i)}r_0^{(i)}U\left({\overline{\Gamma}^*}^{(i)}\right)}{\sum_{i=1}^N q^{(i)}G_i\left({\overline{\Gamma}^*}^{(i)}\right)}\nonumber\\
    &\leq \frac{2\e \kappa^*+\sum_{i=1}^N q^{(i)}r_0^{(i)}U\left({\Gamma^*}^{(i)}\right)}{\sum_{i=1}^N q^{(i)}G_i\left({\Gamma^*}^{(i)}\right)-\sum_{i=1}^N q^{(i)}G_i\left((K^{(i)}L^{(i)},1]\right)}\nonumber\\
    & \leq \alpha+ \frac{\sum_{i=1}^N q^{(i)}\left [r_0^{(i)}U\left({\Gamma^*}^{(i)}\right)+G_i\left({\Gamma^*}^{(i)}\right)\right ]}{\left(\sum_{i=1}^N q^{(i)}G_i\left({\Gamma^*}^{(i)}\right)-D(\eps)\right)^2}D(\eps)\label{FDRloss}
\end{align}
where $D(\eps)=\max\left(2\e \kappa^*,\sum_{i=1}^N q^{(i)}G_i\left((K^{(i)}L^{(i)},1]\right)\right)$, $\kappa^*=\sum_{i=1}^N J^*_i$ and $J^*_i$ denotes the total number of intervals in ${\Gamma^*}^{(i)}$. Notice that $D(\eps)\to 0$ as $\eps\to 0$ according to the absolute continuity of the Lebesgue integral.
Therefore, for all $\delta>0$, there exists $\eps_0(\delta)$ such that $\Pc_{\Xi^*}(\alpha+\delta,\eps)\geq \overline{\Pc}^*$ for all $\eps<\eps_0$,
where we have used the optimality of $ \Pc_{\Xi^*}$ in terms of power, and the fact that $\overline\Gamma^*$ will satisfy the FDR constraint for small enough $\eps$, according to \eqref{FDRloss}.
We observe that $\overline{\Pc}^*\geq\Pc^*(\a)-D'(\eps)$ where $D'(\eps)=(1/r_1^*)\sum_{i=1}^N q^{(i)}r_0^{(i)}F\left((K^{(i)}L^{(i)},1]\right)$. Hence, we get $\Pc_{\Xi^*}(\alpha+\delta,\eps)\geq {\Pc}^*(\alpha)-D'(\eps)$ for all $\eps<\eps_0(\delta)$ and $\delta>0$. Thus, by the absolute continuity of the Lebesgue integral, we get $\underset{\eps\rightarrow 0}{\underline{\lim}}\  \Pc_{\Xi^*}(\alpha+\delta,\eps)\geq {\Pc}^*(\alpha)$ for all $\delta > 0$ completing the proof.
\end{proof}

\vspace{-.5em}
\section{Simulations}
\label{sec:sim}
In this section, we evaluate the empirical performance of our algorithms. In all the experiments, we have $N=5$ nodes and the proportion of false null hypotheses at node $i$ is set as $r_1^{(i)}=(N-i+1)/(2N)=0.5-(i-1)/10$. We consider two distributions for the statistics: $\Norm(\mu,1)$ and $\text{Cauchy}(\mu,1)$ where $\mu$ denotes the location parameter. One-sided p-values are computed against $\mathsf{H}_0:\mu=0$. At node $i$ the statistics under the alternative $\mathsf{H}_1:\mu>0$ are generated according to $\mathcal{N}(\mu^{(i)},1)$ or $\text{Cauchy}(\mu^{(i)},1)$ where $\mu^{(i)}\sim \mathsf{Unif} \left[\mu_{\text{base}}^{(i)}-0.5, \mu_{\text{base}}^{(i)}+0.5\right]$. The overall network (asymptotic) target FDR is set to be $\alpha = 0.2$ and the FDR and power are estimated by averaging over $1000$ trials. 
 
We compare the performance of our methods with \emph{no-communication BH}, \emph{pooled BH}, and the \emph{optimal} rejection region. The no-communication BH simply refers to the method in which each node performs a local BH procedure with $\alpha^{(i)}=\alpha/\hat{r}_0^{(i)}$. {In order to estimate $r_0^{(i)}$, we adopt the spacing estimator (discussed in Section~\ref{ssec:prop_est}) with $s_m=m^{-0.7}$.} The pooled BH refers to the global BH procedure where all the p-values in the network are pooled and shuffled, and the BH procedure is carried out with target FDR $\alpha/\hat{r}_0^*$ over all the p-values from all nodes. The target FDR of the proportion matching method is set to $\alpha/\hat{r}_0^*$ as well. For the greedy aggregation method, we take $\eps = \alpha\, m^{-1/2}$ for Experiments 1 and 3, $\eps = \eta\,\alpha\, m^{-1/2}$ for Experiment 2 (a), and $\eps = 2.5\,\alpha\, m^{-1/2}$ for Experiment 2 (b-c). The optimal rejection regions are computed according to~\eqref{eq:Gamma} and~\eqref{eq:cutoff}.
 
We would like to point out that our experiment settings deviate from Assumption~\ref{ass:fixed_dist} considerably. The proportion matching method is basically built upon this assumption. Nevertheless, as we have discussed in Section~\ref{sec:robust}, we shall see in the experiments that the method is robust against heterogeneous alternatives and its performance essentially follows the pooled BH performance closely.

\smallskip
\noindent{{\bf Experiment 1 (vary m).} The statistics are generated according to the Gaussian distribution. We set $\mu_{\text{base}}^{(i)}=1.25\, i$, $m^{(i)}=({N-i+1})/{N}=(1-0.2(i-1))\cdot n$ and vary $n$ from $10^2$ to $10^5$. We observe that the greedy aggregation method has a near-optimal performance when $n$ is large. On the other hand, no-communication BH is preferred over the proportion matching method in this experiment.}

\smallskip
\noindent{{\bf Experiment 2 (vary $\mu$).} We fix {$m^{(i)}=(1-0.2(i-1))\cdot 10^3$}, and consider the following three cases:
%\smallskip

\noindent\textbf{(a)} Gaussian statistics, $\mu^{(i)}=\eta\, i$, and $\eta$ ranges from $0.5$ to $2$.
%\smallskip

\noindent\textbf{(b)} Cauchy statistics, $\mu^{(i)}=\mu$, and $\mu$ ranges from $2$ to $10$.
%\smallskip

\noindent\textbf{(c)} Cauchy statistics at nodes $1$, $3$, and $5$ and Gaussian statistics at nodes $2$ and $4$. We fix $\mu^{(i)}=\mu$, and $\mu$ ranges from $2$ to $5$.
%\smallskip

We observe that the performance of the greedy aggregation method is very close to the optimal solution in this setting. Also, the proportion matching method has a performance close to the pooled BH method.
}

\begin{figure}[h]
\centering
  \includegraphics[scale=0.45]{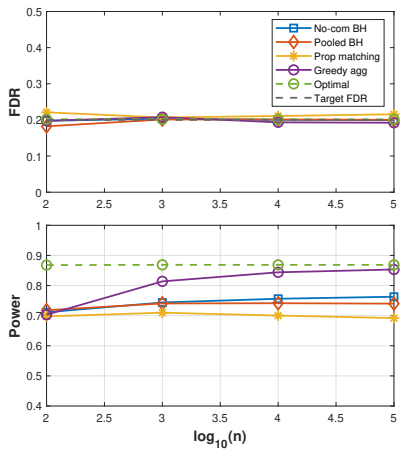}
    \vspace{-1em}
  \caption{Experiment~1: The effect of the number of p-values on the performance of a heterogeneous network with Gaussian statistics ($m^{(i)}=n(1.2-0.2i)$).}\vspace{-1em}
\end{figure}

\smallskip
\noindent{{\bf Experiment 3 (dependent p-values).} 
We set {$m^{(i)}=(1-0.2(i-1))\cdot 10^3$}, and generate Gaussian statistics with $\mu_{\text{base}}^{(i)}=1.25\, i$ and a tapering covariance structure where $\Sigma_{i,j} = \rho^{|i-j|}$ and $\rho$ ranges from $0$ to $0.9$. It can be observed that our methods show stable and consistent performance in this setting as well.}

\begin{figure*}[t]\centering
\begin{subfigure}{}\includegraphics[scale=0.45]{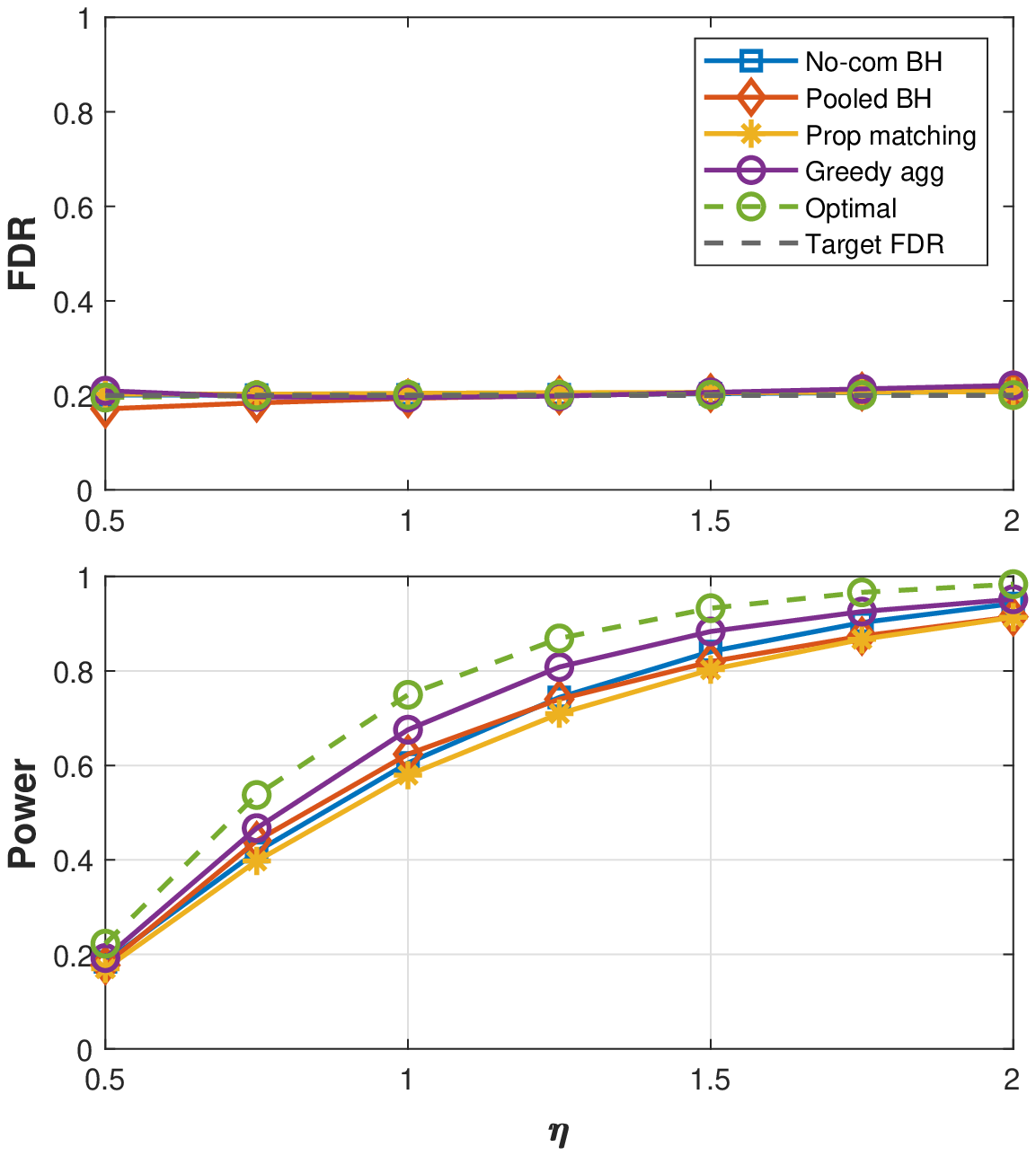}\end{subfigure}
\begin{subfigure}{}\includegraphics[scale=0.45]{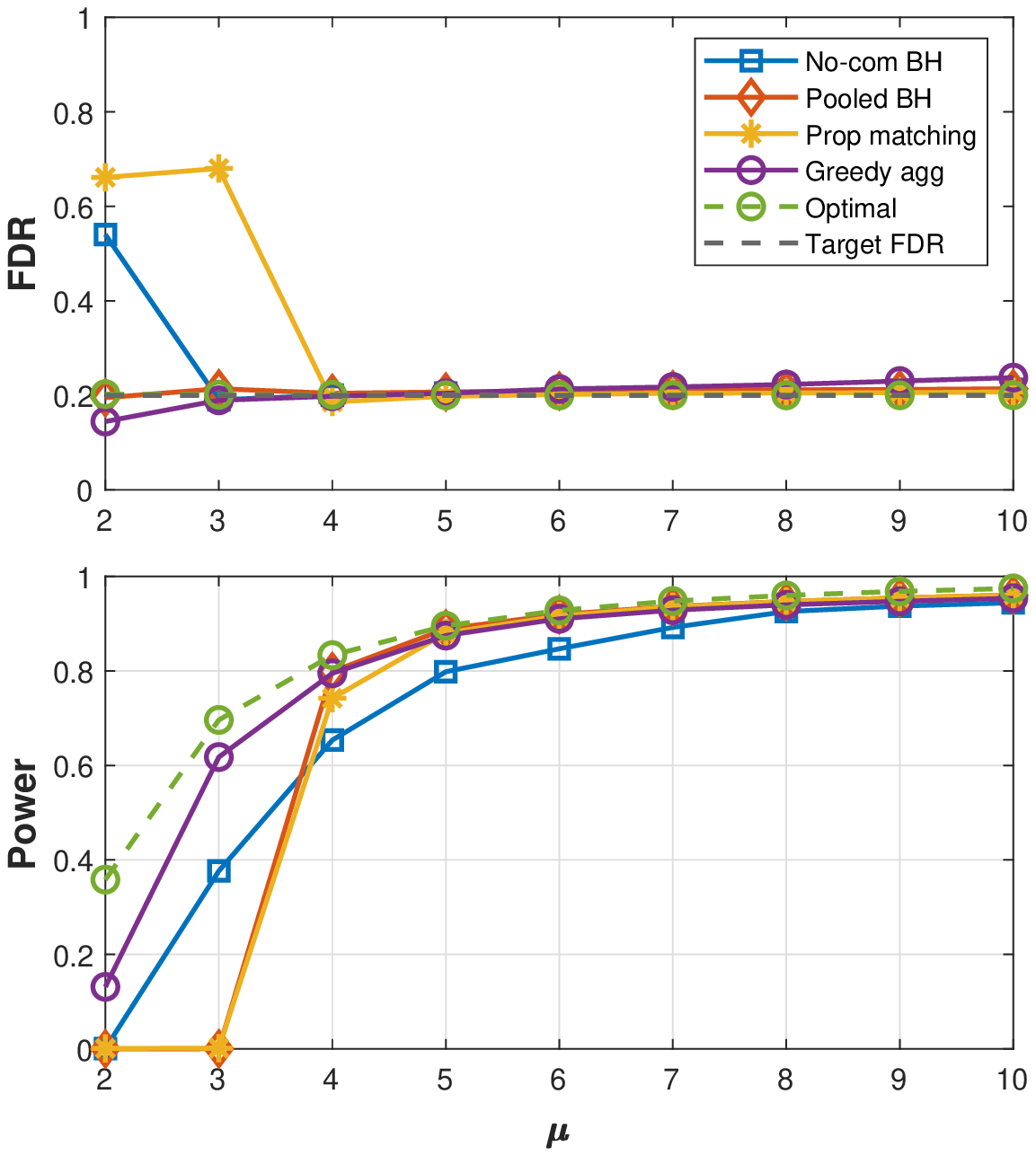}\end{subfigure}
\begin{subfigure}{}\includegraphics[scale=0.45]{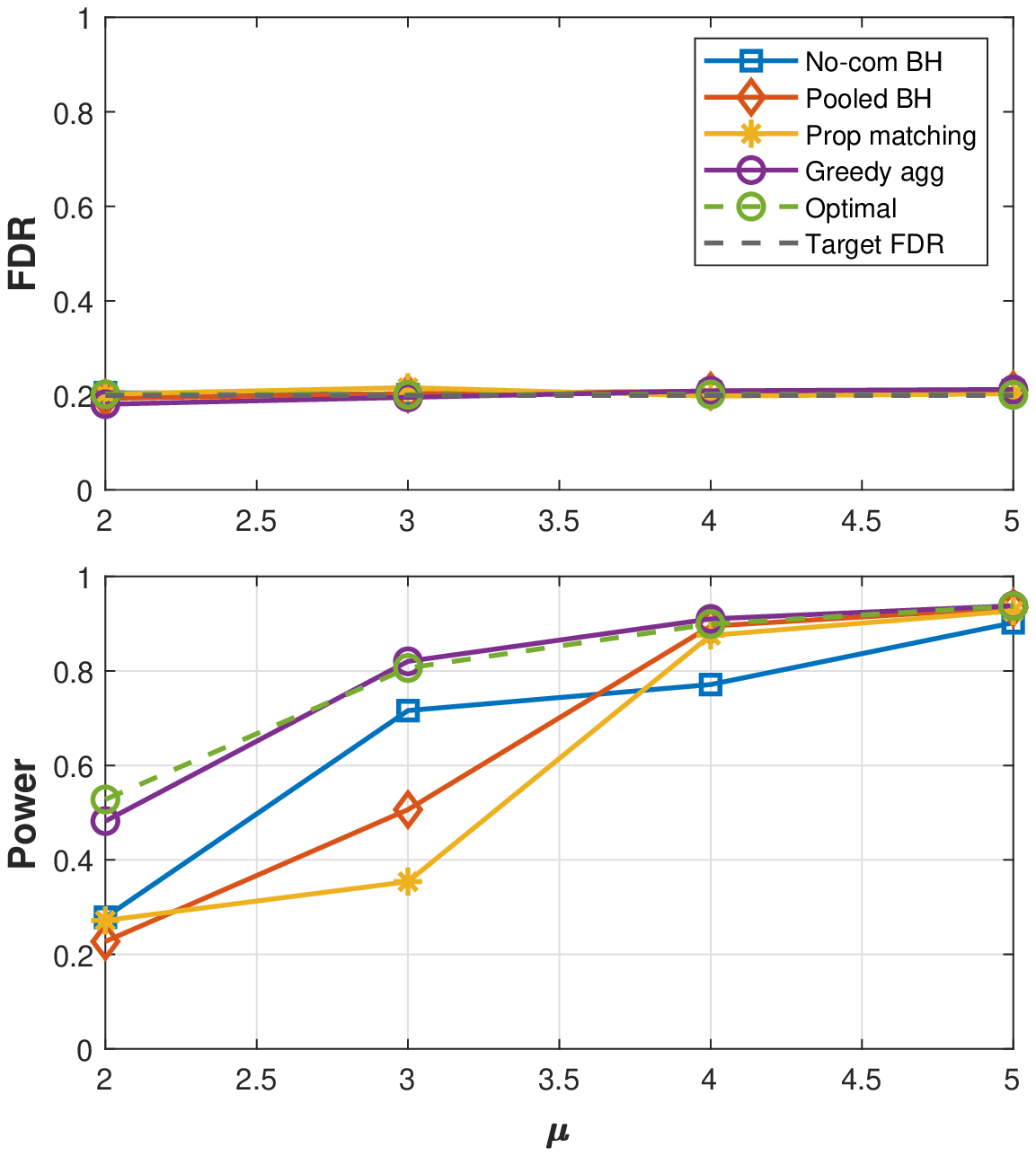}\end{subfigure}
\vspace{-2em}
\caption{From left to right, Experiment~$2$~(a): Gaussian statistics ($\underline{\mu}_{\text{base}}=\eta\cdot[1,2,3,4,5]$); Experiment~$2$~(b): Cauchy statistics ($\mu^{(i)}=\mu$); Experiment~$2$~(c): nodes $2$ and $4$ own p-values generated according to Gaussian statistics ($\mu^{(i)}=\mu$).}
\vspace{-1em}
\end{figure*}

\begin{figure}[h]
\centering
  \includegraphics[scale=0.45]{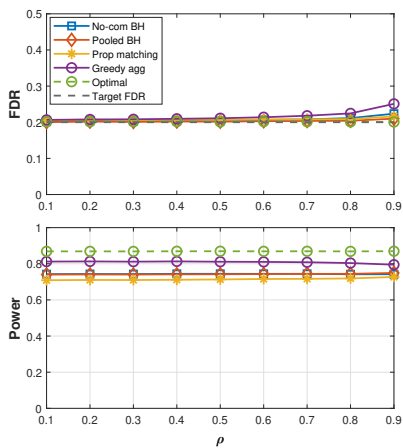}
    \vspace{-1em}
  \caption{Experiment~3: p-values generated according to Gaussian statistics with tapering correlation structure.}\vspace{-1em}
\end{figure}

\vspace{-.5em}

\section{Conclusion}
\label{sec:discuss}
We have taken an asymptotic approach to developing two methods for large-scale multiple testing problems in networks. Both methods come with low communication and computation costs, {asymptotic FDR control guarantee}, and competitive empirical performance. We have proposed the proportion matching method, a robust and communication-efficient algorithm that achieves the centralized BH performance asymptotically by adapting the local test sizes to the global proportion of true null hypotheses. We have also characterized the optimal (oracle) solution for the large-scale multiple testing problem in networks and proposed the greedy aggregation method as a communication- and computation-efficient algorithm to approximate the optimal rejection region effectively.

\vspace{-.5em}

\section{Acknowledgement}
\label{sec:ack}
The authors would like to thank the anonymous reviewers for their constructive comments {and the Associate Editor, Professor Lifeng Lai, for handling the submission}.

{\appendices

\vspace{-.5em}
\section{Technical Lemmas} \label{techlem} 
Consider applying the BH procedure to $m$ p-values generated $\iid$ according to $G(t;r_0)$ (defined in Section \ref{sec:prp-m}). The following lemma concerns relaxing the assumptions of the  Theorem~1 in~\cite{genovese2002operating}. In our proof, $F$ is not assumed to be concave and multiple solutions to $F(t) = \beta(\alpha;r_0)\, t$ can exist.
\begin{lemma}
Fix $\alpha\in(0,1)$. Define $\beta:=\beta(\alpha;r_0)$ and $\tau:=\tau(\alpha; r_0)=\sup\{t:F(t) = \beta\, t\}$. If $F(t)$ is {continuously} differentiable at $t=\tau$ and $F'(\tau)\neq {\beta}$, then the BH threshold satisfies {$|\tau_{\text{BH}}(\alpha)- \tau|=o_p((\log m)^{1/2+\gamma}/\sqrt{m})$ for any $\gamma > 0$ and $|\tau_{\text{BH}}(\alpha)- \tau|=o((\log m)^{1+\kappa}/\sqrt{m})\ a.s.$ for all $\kappa>0$.} \label{het}
\end{lemma}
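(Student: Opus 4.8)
The plan is to tie $\tau_{\text{BH}}(\alpha)$ to the empirical distribution function $\widehat G_m(t):=\frac1m\sum_{k=1}^m\ind\{P_k\le t\}$ through the identity
\[
  \tau_{\text{BH}}(\alpha)=\sup\bigl\{t\in[0,1]:\widehat G_m(t)\ge t/\alpha\bigr\},
\]
which is a direct consequence of the definition $\tau_{\text{BH}}(\alpha)=\frac\alpha m\max\{0\le k\le m:P_{(k)}\le\frac km\alpha\}$ (with $P_{(0)}=0$): the event $P_{(k)}\le\frac km\alpha$ is the event $\widehat G_m(\frac km\alpha)\ge\frac km$, and once $k$ passes the maximizing index $\widehat G_m(t)$ cannot catch up to $t/\alpha$ again on $[0,1]$. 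In particular $\widehat G_m(\tau_{\text{BH}})\ge\tau_{\text{BH}}/\alpha$, and every $t$ with $\widehat G_m(t)\ge t/\alpha$ satisfies $t\le\tau_{\text{BH}}$.

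The structural point — which is exactly what removes the need for concavity of $F$ — is that $\tau$ is the \emph{largest} root of $G(\cdot;r_0)=t/\alpha$. Since $\alpha<1$ forces $G(1;r_0)=1<1/\alpha$, the intermediate value theorem gives $G(t;r_0)<t/\alpha$ strictly on $(\tau,1]$; comparing difference quotients at $\tau$ then yields $F'(\tau)\le\beta$, so the hypothesis $F'(\tau)\neq\beta$ gives
\[
  c:=\tfrac1\alpha-G'(\tau;r_0)=(1-r_0)\bigl(\beta-F'(\tau)\bigr)>0 .
\]
A first-order Taylor expansion of $G(\cdot;r_0)$ at $\tau$ then produces a fixed $\delta\in(0,1)$ with $G(t;r_0)-t/\alpha\le-\tfrac c2(t-\tau)$ on $[\tau,(\tau+\delta)\wedge1]$ and $G(t;r_0)-t/\alpha\ge\tfrac c2(\tau-t)$ on $[(\tau-\delta)\vee0,\tau]$, while compactness gives a constant $c_\delta>0$ with $G(t;r_0)-t/\alpha\le-c_\delta$ on $[(\tau+\delta)\wedge1,1]$.

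Writing $\Delta_m:=\sup_{t\in[0,1]}|\widehat G_m(t)-G(t;r_0)|$, the next step is a deterministic sandwich on the event $E_m:=\{\Delta_m<c_\delta\wedge\tfrac{c\delta}2\}$. On $E_m$: (i) $\widehat G_m(t)-t/\alpha<0$ for $t\in[(\tau+\delta)\wedge1,1]$, hence $\tau_{\text{BH}}<\tau+\delta$; (ii) if $\tau_{\text{BH}}>\tau$ then $0\le\widehat G_m(\tau_{\text{BH}})-\tau_{\text{BH}}/\alpha\le-\tfrac c2(\tau_{\text{BH}}-\tau)+\Delta_m$, so $\tau_{\text{BH}}-\tau\le\tfrac2c\Delta_m$; (iii) the point $t^\star:=(\tau-\tfrac2c\Delta_m)\vee0$ satisfies $\widehat G_m(t^\star)-t^\star/\alpha\ge G(t^\star;r_0)-t^\star/\alpha-\Delta_m\ge0$ by the lower Taylor bound, whence $\tau_{\text{BH}}\ge t^\star\ge\tau-\tfrac2c\Delta_m$. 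Combining, $|\tau_{\text{BH}}(\alpha)-\tau|\le\tfrac2c\Delta_m$ on $E_m$, and $E_m$ holds for all large $m$ almost surely by the Glivenko--Cantelli theorem. Finally, the Dvoretzky--Kiefer--Wolfowitz inequality $\P(\Delta_m>x)\le2e^{-2mx^2}$ gives $\Delta_m=O_p(m^{-1/2})$ and, via Borel--Cantelli, $\Delta_m=O(\sqrt{m^{-1}\log m})$ a.s.; these respectively imply the asserted $o_p\bigl((\log m)^{1/2+\gamma}/\sqrt m\bigr)$ for every $\gamma>0$ and $o\bigl((\log m)^{1+\kappa}/\sqrt m\bigr)$ a.s.\ for every $\kappa>0$ (in fact somewhat stronger rates).

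The step I expect to be the main obstacle is the deterministic sandwich done \emph{without} concavity of $F$. One must (a) pin down the exact supremum characterization of $\tau_{\text{BH}}$ (ties among the $P_k$, right-continuity of $\widehat G_m$, the convention $P_{(0)}=0$); (b) rule out that $\tau_{\text{BH}}$ is pulled toward a smaller root of $F(t)=\beta t$ — this is precisely where working with the \emph{largest} root, and the resulting one-sided sign pattern of $G(\cdot;r_0)-t/\alpha$ to its right (forcing $G'(\tau;r_0)<1/\alpha$ rather than $>$), is indispensable; and (c) handle the boundary cases $\tau=0$ (only the upper bound is needed, since $\tau_{\text{BH}}\ge0$) and $\tau$ near $1$ (shrink $\delta$). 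Everything downstream — the slope comparison, the Taylor estimates, and the exponential tail bound — is routine.
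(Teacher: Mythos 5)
Your proposal is correct, and it reaches the stated conclusion (in fact a slightly stronger one) by a genuinely different route than the paper. The paper works with the rejection count $D_{\text{BH}}=\tfrac{m}{\alpha}\tau_{\text{BH}}$, brackets it between $a_m=\tfrac{m\tau}{\alpha}(1-\eps_m)$ and $b_m=\tfrac{m\tau}{\alpha}(1+\eps_m)$, and controls each tail by a union bound over the indices $k$ combined with a pointwise Hoeffding bound on $\sum_i\ind\{P_i\le \alpha k/m\}$ at each grid point; the admissible decay of $\eps_m$ is then dictated by the need for the exponential tail to beat the union-bound factor $m$ (summably so, in the a.s.\ case), which is exactly where the $(\log m)^{1/2+\gamma}$ versus $(\log m)^{1+\kappa}$ distinction comes from. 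You instead use the continuous characterization $\tau_{\text{BH}}=\sup\{t:\widehat G_m(t)\ge t/\alpha\}$, a single uniform Dvoretzky--Kiefer--Wolfowitz bound on $\Delta_m$, and a deterministic sandwich, yielding $|\tau_{\text{BH}}-\tau|\le \tfrac{2}{c}\Delta_m$ on a high-probability event and hence $O_p(m^{-1/2})$ and $O(\sqrt{\log m/m})$ a.s., which strictly implies the lemma's rates. Both arguments rest on the same structural facts --- $\tau$ is the \emph{largest} root, so $G(\cdot;r_0)-t/\alpha$ is strictly negative on $(\tau,1]$, forcing $F'(\tau)<\beta$; a first-order expansion at $\tau$ giving a linear margin with slope $c=(1-r_0)(\beta-F'(\tau))>0$; and a uniform gap $c_\delta$ away from $\tau$ (your $c_\delta$ plays the role of the paper's verification that $\inf_{k>b_m}h(k)\ge h(b_m)$). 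What your approach buys is the elimination of the union bound over $k$, hence the sharper in-probability rate; what the paper's buys is that it stays entirely within elementary pointwise concentration and transfers verbatim to settings where a uniform empirical-process bound is less convenient. Two cosmetic remarks: your appeal to the intermediate value theorem to get $G(t;r_0)<t/\alpha$ on $(\tau,1]$ uses continuity of $F$ there, whereas the paper's citation of the Knaster--Tarski identity $\sup\{t:F(t)=\beta t\}=\sup\{t:F(t)\ge\beta t\}$ needs only monotonicity; and your sandwich correctly isolates the boundary cases ($\tau=0$, $\hat k=0$, ties), which is where such arguments usually leak.
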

\begin{proof}
We follow the approach in \cite{genovese2002operating} and only highlight the main differences. 
Let $\eps_m=m^{-1/4},\ m\geq 1$ and define, $a_m = \frac{m\tau}{\alpha}(1-\eps_m)$ and $b_m = \frac{m\tau}{\alpha}(1+\eps_m)$.
Recall that $D_{\text{BH}} = m/\alpha\, \tau_{\text{BH}}$ denote the BH deciding index, or equivalently, the number of rejections made by the BH procedure. First, we show $\mathbb{P}(D_{\text{BH}} > b_m)\rightarrow 0$ as follows,
\begin{align}
    &\mathbb{P}(D_{\text{BH}} > b_m)\nonumber \\
    &= \mathbb{P}\Bigg(\underset{k\,>\,b_m}{\bigcup}\big\{P_{(k)}\leq (k/m)\alpha\big\}\Bigg)\nonumber \\
    &= \mathbb{P}\Bigg(\underset{k\,>\,b_m}{\bigcup}\bigg\{\sum_{i=1}^m \ind\big\{P_i\leq (k/m)\alpha\big\}\geq k\bigg\}\Bigg)\nonumber\\
    &\leq \sum_{k\,>\,b_m}\mathbb{P}\Bigg[\bigg(\sum_{i=1}^m \ind\big\{P_i\leq (k/m)\alpha\big\}\bigg)-\mu(k)\geq k-\mu(k)\Bigg],\nonumber
\end{align}
where 
\begin{align}
   \mu(k) & =  \mathbb{E}\Big(\sum_{i=1}^m \ind\big\{P_i\leq (k/m)\alpha\big\}\Big)\nonumber\\
   & =r_0\,m\,(k/m)\alpha + (1-r_0)\,m F\big((k/m)\alpha\big).\nonumber
\end{align}
Recall that $\beta(\alpha;r_0) := \frac{(1/\a)-{r}_0}{1-{r}_0}$. We observe that
\begin{align}\label{diff}
     k-&\mu\,(k) = m r_1\Big[\frac{\alpha k}{m}\beta -F\big(\frac{\alpha k}{m}\big)\Big]=:h(k).
\end{align}
By Taylor's theorem, we obtain  
\begin{equation}
    h(b_m)= m r_1\Big[\tau\,\eps_m\big(\beta-F'(\tau)\big)+o(\eps_m)\Big]\nonumber.
\end{equation}
We observe,
\begin{align*}
    \underset{k > b_m}{\inf}h(k)/m &=  r_1\underset{k:\frac{\alpha k}{m}>\frac{\alpha b_m}{m}}{\inf} \Big[\frac{\alpha k}{m}\beta-F\big(\frac{\alpha k}{m}\big)\Big]\\
    &\geq r_1\underset{ t\geq (1+\eps_m)\tau}{\inf} \big(\beta t-F(t)\big)\ .
\end{align*}
We note that $\beta\,t-F(t)>0$ for all $t>\tau$ according to the definition of $\tau$. Notice that,
\begin{align*}
    r_1\Big[(1+\eps_m)\tau\beta-F((1+\eps_m)\tau)\Big]
    =\frac{h(b_m)}{m}\to 0.
\end{align*}
Also, $\beta t-F(t)$ is strictly increasing in a neighborhood of $\tau$ according to the continuous differentiability of $F(t)$ at $\tau$ and $\frac{d}{dt}\big(\beta t-F(t)\big)\Big|_{t=\tau}=\beta-F'(\tau) > 0$, where $F'(\tau) < \beta$ is deduced from $F'(\tau)\neq {\beta}$ and {the following result from~\cite{tarski1955lattice},} 
\[
    \sup\{t:F(t) = \beta\, t\}=\sup\{t:F(t) \geq \beta\, t\}.
\]
Hence, for large enough $m$ we get $\underset{k > b_m}{\inf}h(k)/m \geq {h(b_m)}/{m}$, and as a result $h(k)\geq h(b_m)$ for all $k>b_m$. According to $F'(\tau)<\beta$ and $\eps_m=m^{-1/4}$, we get $h(b_m)/\sqrt{m}\asymp m^{1/4}$.
Hence, by Hoeffding's inequality, we get
\begin{align}
    \mathbb{P}(D_{\text{BH}} > b_m) 
    &\leq \sum_{k\,>\,b_m}e^{-{2h(k)^2}/{m}}
    \precsim m\, e^{-{2h(b_m)^2}/{m}}\nonumber\\
    &\asymp m\, e^{-c \sqrt{m}}\rightarrow 0,\label{bm}
\end{align}
for some constant $c>0$. By the same argument as for $h(b_m)$ via Taylor's theorem and $F'(\tau)<\beta$, we get 
\begin{equation*}
    m^{-1/2}\big(\mu\,(a_m)-a_m\big)=-h(a_m)/\sqrt{m}\asymp m^{1/4} \ .
\end{equation*}
Hence, for some constant $c>0$, 
\begin{align}
    \mathbb{P}(D_{\text{BH}} < a_m) 
    &\leq \mathbb{P}\Big(P_{(a_m)}> (a_m/m)\alpha\Big)\nonumber \\
    &\leq \mathbb{P}\bigg(\sum_{i=1}^m \ind\big\{P_i\leq (a_m/m)\alpha\big\}< a_m\bigg)\nonumber\\
    &\overset{(a)}{\leq}e^{-{2h(a_m)^2}/{m}}\asymp e^{-c \sqrt{m}}\rightarrow 0,\label{am}
\end{align}
where $(a)$ follows from Hoeffding's inequality. Since the upper bounds in both~\eqref{bm} and~\eqref{am} are summable in $m$, we get $a_m\leq D_{\text{BH}}\leq b_m,\ a.s.$ for large $m$ by the Borel-Cantelli lemma. {The almost sure rate follows from the fact that the same argument holds for all $\{\eps_m\}$ decaying as (or slower than) $(\log m)^{1+\kappa}/\sqrt{m}$ for every $\kappa>0$. Regarding the convergence in probability rates, we note that the upper bounds in~\eqref{bm} and~\eqref{am} are not required to be summable in this case. Therefore, any $\{\eps_m\}$ decaying as (or slower than) $(\log m)^{1/2+\gamma}/\sqrt{m}$ works, concluding the claim.}
\end{proof}

\vspace{-1em}
\section{Proof of Theorem~\ref{thm:optimal}}\label{apx:opt}
Recall that $\Pi$ denotes the set of subsets of $[0,1]$ that can be written as a finite union of disjoint open interval, and $r_0^*=\sum_{i=1}^{N}{q^{(i)}\,r_0^{(i)}}$. Consider the map $\Tc:\Pi^N\to r_0^*\Pi$ given by
\begin{equation*}
    \left(\Gamma^{(1)},\hdots,\Gamma^{(N)}\right)\mapsto \bigcup_{i=1}^N\left\{\left\{\sum_{j=0}^{i-1} q^{(j)}r_0^{(j)}\right\}+q^{(i)}r_0^{(i)}\Gamma^{(i)}\right\},
\end{equation*}
where $q^{(0)}:=0$ and $r_0^{(0)}:=0$. In words, this map {normalizes and {merges} (potential) rejection regions from different nodes into an element of $r_0^*\Pi$.} It is straightforward to observe that this map is injective and the image of the map is
\begin{equation}
    \text{Im}(\Tc)=\left\{\Psi\mathbin{\big\backslash}\bigcup_{i=1}^{N-1}\left\{\sum_{j=1}^{i} q^{(j)}r_0^{(j)}\right\}:\Psi\in r_0^*\Pi\right\},\label{eq:Im}
\end{equation}
where we need to remove the endpoints of the intervals since $\Gamma^{(i)}$'s are unions of open intervals. 

Let $\Vc(\Gamma):=\underset{m\rightarrow\infty}{\lim}\frac{V(\Gamma)}{m}$ and $\Pc(\Gamma):=\underset{m\rightarrow\infty}{\lim}\mathsf{power}{(\Gamma)}$ denote the probability of false rejection and asymptotic power for a fixed $\Gamma\in \Pi^N$. Notice that we have
\begin{align}
    \Vc(\Gamma)&= \sum_{i=1}^N q^{(i)}r_0^{(i)}\nu(\Gamma^{(i)})
    =\sum_{i=1}^N \nu(q^{(i)}r_0^{(i)}\Gamma^{(i)})\nonumber\\
    &=\nu(\Tc(\Gamma)),\quad \text{for all } \Gamma\in \Pi^N,\label{eq:fdr}
\end{align}
where $\nu$ denotes the Lebesgue measure.
The asymptotic power can be expressed as follows,
\begin{align}
    \Pc(\Gamma)&=\frac{1}{r_1^*}\sum_{i=1}^N \int_{\Gamma^{(i)}}q^{(i)}r_1^{(i)}f_i(u)\, {\nu(du)}\nonumber\\
    &= \frac{1}{r_1^*}\sum_{i=1}^N \int_{q^{(i)}r_0^{(i)}\Gamma^{(i)}}\frac{q^{(i)}r_1^{(i)}}{q^{(i)}r_0^{(i)}} f_i\left(\frac{u'}{q^{(i)}r_0^{(i)}}\right)\, {\nu(du')}\nonumber\\
    &=\frac{1}{r_1^*}\sum_{i=1}^N \int_{q^{(i)}r_0^{(i)}\Gamma^{(i)}+\sum_{j=1}^{i-1} q^{(j)}r_0^{(j)}}\nonumber\\
    &\qquad\qquad\qquad\frac{r_1^{(i)}}{r_0^{(i)}} f_i\left(\frac{{t-\sum_{j=1}^{i-1} q^{(j)}r_0^{(j)}}}{q^{(i)}r_0^{(i)}}\right)\, {\nu(dt)}\nonumber\\
    &{=\frac{1}{r_1^*} \int_{\Tc(\Gamma)}\sum_{i=1}^N\frac{r_1^{(i)}}{r_0^{(i)}} f_i\left(\frac{{t-\sum_{j=1}^{i-1} q^{(j)}r_0^{(j)}}}{q^{(i)}r_0^{(i)}}\right)\, {\nu(dt)}}\nonumber\\
    &=\frac{1}{r_1^*} \int_{\Tc(\Gamma)}\Upsilon(t)\, {\nu(dt)},\label{eq:power}
\end{align}
with $\Upsilon(t) :=\sum_{i=1}^N \Upsilon_i(t)$ where
\begin{equation*}
    \Upsilon_i(t) := \frac{r_1^{(i)}}{r_0^{(i)}} f_i\left(\frac{t-\sum_{j=1}^{i-1} q^{(j)}r_0^{(j)}}{q^{(i)}r_0^{(i)}}\right).
\end{equation*}
}
Also note that, 
\begin{equation*}
    \underset{m\rightarrow\infty}{\lim}\mathsf{FDR}(\Gamma)=\begin{cases}\frac{\Vc(\Gamma)}{\Vc(\Gamma)+r_1^* \Pc(\Gamma)}\quad &\text{if }\Vc(\Gamma)>0,\\
    0 &\text{if }\Vc(\Gamma)=0.
    \end{cases}
\end{equation*}
For $\Psi\in r_0^*\Pi$, define
\begin{equation*}
    \Fc(\Psi):=\begin{cases}\frac{\nu(\Psi)}{\nu(\Psi)+\int_{\Psi}\Upsilon\, d\nu}\quad &\text{if }\nu(\Psi)>0,\\
    0 &\text{if }\nu(\Psi)=0.
    \end{cases}
\end{equation*}
Based on~\eqref{eq:fdr} and~\eqref{eq:power}, we now turn to solve the following optimization problem,
\begin{equation}
    \underset{\Psi\,\in\, r_0^*\Pi}{\arg\max}\ \int_{\Psi}\Upsilon\, d\nu, \quad
      \text{s.t.}\   \Fc(\Psi)\leq \alpha. \label{opt:trans}
\end{equation}
First, we introduce a set of intervals $\Psi(\tau)$ based on $\Upsilon(t)$, and show that $\Psi(\tau)$ can be used to represent any solution to~\eqref{opt:trans} up to a null set. Define,
\begin{align}
    \Psi(\tau):&=\left\{t\in(0,r_0^*):\Upsilon(t)>\tau\right\}.\label{eq:upsilon}
\end{align}
We note that under Assumption~\ref{ass:pdf}, $\Psi(\tau)\in r_0^*\Pi$, $\Psi(\tau)$ is decreasing in $\tau$, and $\nu(\Psi(\tau))$ is continuous in $\tau$ with $\nu(\Psi(0))=r_0^*$ and $\underset{\tau\rightarrow\infty}{\lim}\nu(\Psi(\tau))=0$. Therefore, for any fixed {$\Psi'\in r_0^*\Pi$} with $\nu(\Psi')=\nu_0$ there exists {at least} one $\tau'\geq 0$ such that $\nu(\Psi(\tau'))=\nu_0$. But clearly, {$\int_{\Psi(\tau')}\Upsilon\, d\nu >\int_{\Psi'}\Upsilon\, d\nu$ if $\nu\left(\Psi(\tau')\,\Delta\, \Psi'\right)>0$} because,
\begin{align*}
    \int_{\Psi(\tau')}\Upsilon\, d\nu &=\int_{\Psi(\tau')\cap \Psi'}\Upsilon\, d\nu + \int_{\Psi(\tau')\setminus\Psi'}\Upsilon\, d\nu\\
    &\overset{(a)}{>} \int_{\Psi(\tau')\cap \Psi'}\Upsilon\, d\nu + \int_{\Psi'\setminus\Psi(\tau')}\Upsilon\, d\nu
    =\int_{\Psi'}\Upsilon\, d\nu,
\end{align*}
where (a) follows from~\eqref{eq:upsilon} and $\nu(\Psi')=\nu(\Psi(\tau'))=\nu_0$. Therefore, for any solution $\Psi''$ to \eqref{opt:trans}, there exists a $\tau''$ such that $\nu\left(\Psi(\tau'')\,\Delta\, \Psi''\right)=0$. Hence, since $\Psi(\tau)$ is decreasing in $\tau$, $\Psi^*:=\Psi(\tau^*)$ is a maximizer of \eqref{opt:trans}, where
\begin{equation*}
    \tau^* = \inf\left\{\tau\geq 0:\Fc\left(\Psi(\tau)\right)\leq \alpha\right\}.
\end{equation*}
Observe that {according to Assumption~\ref{ass:pdf}}, $\int_{\Psi}\Upsilon\, d\nu < \int_{\Psi^*}\Upsilon\, d\nu$ for $\{\Psi\subset\Psi^*: \nu(\Psi\, \Delta\, \Psi^*)>0\}$ which implies that $\Psi^*$ is the unique maximizer of \eqref{opt:trans} up to null sets. {We note that according to Assumption~\ref{ass:pdf} and~\eqref{eq:Im}, $\Psi(\tau)\in \text{Im}(\Tc)$ for all $\tau\geq 0$.
Hence, $\Psi^*\in \text{Im}(\Tc)$.} Now let $\breve\Gamma=\Tc^{-1}(\breve\Psi),\ \breve{\Psi}\in\text{Im}(\Tc)$. Let $\Ic^{(i)}:=(\sum_{j=1}^{i-1} q^{(j)}r_0^{(j)}, \sum_{j=1}^{i} q^{(j)}r_0^{(j)})$ and notice that the rejection region at node $i$ is determined by
\begin{align*}
    &\breve\Gamma^{(i)}=\frac{\left\{\breve{\Psi}\bigcap \Ic^{(i)} \right\}- \sum_{j=1}^{i-1} q^{(j)}r_0^{(j)}}{q^{(i)}r_0^{(i)}}.
\end{align*}
Therefore, for $\breve{\Psi}_1,\breve{\Psi}_2\in\text{Im}(\Tc)$ we have
\begin{equation*}    \nu(\breve\Psi_1\Delta\breve\Psi_2)=0\implies\sum_{i=1}^N\nu(\breve\Gamma^{(i)}_1\Delta\breve\Gamma^{(i)}_2)=0,
\end{equation*}
where $\breve\Gamma_1=\Tc^{-1}(\breve\Psi_1)$ and $\breve\Gamma_2=\Tc^{-1}(\breve\Psi_2)$.
Hence, the maximizer of \eqref{opt} is unique up to null sets.

In order to compute $\Gamma^*$, again we note that $\Psi(\tau)\in \text{Im}(\Tc)$, all $\tau\geq 0$. Therefore, for $\Gamma(\tau)=\Tc^{-1}(\Psi(\tau))$, ${\Gamma}^{(i)}(\tau)$ equals to
\begin{align*}
    &\ \ \ \frac{\left\{t: \Upsilon_i(t)>\tau\right\}- \sum_{j=1}^{i-1} q^{(j)}r_0^{(j)}}{q^{(i)}r_0^{(i)}}\\
    &\hspace{3em}=\left\{t:\frac{r_1^{(i)}}{r_0^{(i)}} f_i(t)>\tau\right\}\overset{(b)}{=}\left\{x:\frac{g_i(x)}{r_0^{(i)}}>\tau+1\right\},
\end{align*}
where (b) follows from the definitions of $f_i$ and $g_i$. Also, $\tau^*$ can be computed as follows,
\begin{align*}
    \tau^* &= \inf\left\{\tau\geq 0:\frac{\Vc(\Gamma(\tau))}{\Vc(\Gamma(\tau))+r_1^* \Pc(\Gamma(\tau))}\leq \alpha\right\}\\
    &=\inf\left\{\tau\geq 0:\frac{\sum_{i=1}^N q^{(i)}r_0^{(i)}\nu\left(\Gamma^{(i)}(\tau)\right)}{\sum_{i=1}^N q^{(i)}\int_{\Gamma^{(i)}(\tau)}g_i d\nu}\leq\alpha\right\}.
\end{align*}
Thus $\Gamma^*=\Tc^{-1}(\Psi(\tau^*))=\Gamma(\tau^*)$ and the proof is complete.

\section{Distribution functions used in Section \ref{sec:sim} (Simulations)}\label{ap: dists}
The CDF and PDF of a one-sided p-value computed for $\Norm(\mu,1)$ statistic with respect to $\mathsf{H}_0:\Norm(0,1)$ is given as follows~\cite{ray2011false} for $t\in (0,1)$,
\begin{equation*}
    F_{G,\mu}(t) = Q(Q^{-1}(t)-\mu)\ \text{ and }\  f_{G,\mu}(t) = e^{-\mu^2/2}e^{\,\mu Q^{-1}(t)},
\end{equation*}
% \begin{equation*}
%     f_{G,\mu}(t) = e^{-\mu^2/2}e^{\,\mu Q^{-1}(t)},\qquad 0\leq t\leq 1 ,
% \end{equation*}
where $Q(t)=1-\Phi(t)$ and $\Phi$ denotes the CDF of $\Norm(0,1)$.

Let $H_{C,\mu}(t)={1}/{\pi}\tan^{-1}(t-\mu)+1/2$ denote the CDF of a Cauchy statistic with location $\mu$ and scale $\sigma=1$. The CDF and PDF of a one-sided p-value computed for $\text{Cauchy}(\mu,1)$ statistic with respect to $\mathsf{H}_0:\text{Cauchy}(0,1)$ is given as follows.
\begin{align*}
    F_{C,\mu}(t) &= 1-H_{C,\mu}\left((1-H_{C,0})^{-1}(t)\right)\\
    &=1-H_{C,\mu}(\cot(\pi t))
    =\frac{1}{2}-\frac{1}{\pi}\tan^{-1}(\cot(\pi t)-\mu),\\
    f_{C,\mu}(t)&=\frac{d}{dt}F_{C,\mu}(t)=\frac{\cot^2(\pi t)+1}{(\cot(\pi t)-\mu)^2+1}.
\end{align*}

% \vspace{-0.6em}

\balance
\bibliographystyle{IEEEtran}
\bibliography{ref.bib}

% Generated by IEEEtran.bst, version: 1.14 (2015/08/26)
\begin{thebibliography}{10}
\providecommand{\url}[1]{#1}
\csname url@samestyle\endcsname
\providecommand{\newblock}{\relax}
\providecommand{\bibinfo}[2]{#2}
\providecommand{\BIBentrySTDinterwordspacing}{\spaceskip=0pt\relax}
\providecommand{\BIBentryALTinterwordstretchfactor}{4}
\providecommand{\BIBentryALTinterwordspacing}{\spaceskip=\fontdimen2\font plus
\BIBentryALTinterwordstretchfactor\fontdimen3\font minus
  \fontdimen4\font\relax}
\providecommand{\BIBforeignlanguage}[2]{{%
\expandafter\ifx\csname l@#1\endcsname\relax
\typeout{** WARNING: IEEEtran.bst: No hyphenation pattern has been}%
\typeout{** loaded for the language `#1'. Using the pattern for}%
\typeout{** the default language instead.}%
\else
\language=\csname l@#1\endcsname
\fi
#2}}
\providecommand{\BIBdecl}{\relax}
\BIBdecl

\bibitem{benjamini1995}
Y.~Benjamini and Y.~Hochberg, ``Controlling the false discovery rate: a
  practical and powerful approach to multiple testing,'' \emph{Journal of the
  royal statistical society. Series B (Methodological)}, pp. 289--300, 1995.

\bibitem{benjamini2001control}
Y.~Benjamini and D.~Yekutieli, ``The control of the false discovery rate in
  multiple testing under dependency,'' \emph{Annals of statistics}, pp.
  1165--1188, 2001.

\bibitem{storey2002direct}
J.~D. Storey, ``A direct approach to false discovery rates,'' \emph{Journal of
  the Royal Statistical Society: Series B (Statistical Methodology)}, vol.~64,
  no.~3, pp. 479--498, 2002.

\bibitem{sarkar2002some}
S.~K. Sarkar, ``Some results on false discovery rate in stepwise multiple
  testing procedures,'' \emph{The Annals of Statistics}, vol.~30, no.~1, pp.
  239--257, 2002.

\bibitem{genovese2002operating}
C.~Genovese and L.~Wasserman, ``Operating characteristics and extensions of the
  false discovery rate procedure,'' \emph{Journal of the Royal Statistical
  Society: Series B (Statistical Methodology)}, vol.~64, no.~3, pp. 499--517,
  2002.

\bibitem{blanchard2008}
G.~Blanchard and E.~Roquain, ``Two simple sufficient conditions for {FDR}
  control,'' \emph{Electronic journal of Statistics}, vol.~2, pp. 963--992,
  2008.

\bibitem{ramdas2019unified}
A.~K. Ramdas, R.~F. Barber, M.~J. Wainwright, and M.~I. Jordan, ``A unified
  treatment of multiple testing with prior knowledge using the p-filter,''
  2019.

\bibitem{efron2012large}
B.~Efron, \emph{Large-scale inference: empirical Bayes methods for estimation,
  testing, and prediction}.\hskip 1em plus 0.5em minus 0.4em\relax Cambridge
  University Press, 2012, vol.~1.

\bibitem{genovese2002thresholding}
C.~R. Genovese, N.~A. Lazar, and T.~Nichols, ``Thresholding of statistical maps
  in functional neuroimaging using the false discovery rate,''
  \emph{Neuroimage}, vol.~15, no.~4, pp. 870--878, 2002.

\bibitem{abramovich2006adapting}
F.~Abramovich, Y.~Benjamini, D.~L. Donoho, and I.~M. Johnstone, ``Adapting to
  unknown sparsity by controlling the false discovery rate,'' \emph{The Annals
  of Statistics}, vol.~34, no.~2, pp. 584--653, 2006.

\bibitem{golz2022multiple}
M.~G{\"o}lz, A.~M. Zoubir, and V.~Koivunen, ``Multiple hypothesis testing
  framework for spatial signals,'' \emph{IEEE Transactions on Signal and
  Information Processing over Networks}, vol.~8, pp. 771--787, 2022.

\bibitem{tenney1981}
R.~R. Tenney and N.~R. Sandell, ``Detection with distributed sensors,''
  \emph{IEEE Transactions on Aerospace and Electronic systems}, no.~4, pp.
  501--510, 1981.

\bibitem{tsitsiklis1984}
J.~N. Tsitsiklis, ``Problems in decentralized decision making and
  computation.'' DTIC Document, Tech. Rep., 1984.

\bibitem{viswanathan1997}
R.~Viswanathan and P.~K. Varshney, ``Distributed detection with multiple
  sensors {P}art {I}. {F}undamentals,'' \emph{Proceedings of the IEEE},
  vol.~85, no.~1, pp. 54--63, 1997.

\bibitem{blum1997}
R.~S. Blum, S.~A. Kassam, and H.~V. Poor, ``Distributed detection with multiple
  sensors {II}. {A}dvanced topics,'' \emph{Proceedings of the IEEE}, vol.~85,
  no.~1, pp. 64--79, 1997.

\bibitem{ermis2005adaptive}
E.~B. Ermis and V.~Saligrama, ``Adaptive statistical sampling methods for
  decentralized estimation and detection of localized phenomena,'' in
  \emph{Fourth International Symposium on Information Processing in Sensor
  Networks, 2005.}\hskip 1em plus 0.5em minus 0.4em\relax IEEE, 2005, pp.
  143--150.

\bibitem{ray2007novel}
P.~Ray, P.~K. Varshney, and R.~Niu, ``A novel framework for the network-wide
  distributed detection problem,'' in \emph{10th International Conference on
  Information Fusion}.\hskip 1em plus 0.5em minus 0.4em\relax IEEE, 2007, pp.
  1--8.

\bibitem{ermis2009distributed}
E.~B. Ermis and V.~Saligrama, ``Distributed detection in sensor networks with
  limited range multimodal sensors,'' \emph{IEEE Transactions on Signal
  Processing}, vol.~58, no.~2, pp. 843--858, 2009.

\bibitem{ray2011false}
P.~Ray and P.~K. Varshney, ``False discovery rate based sensor decision rules
  for the network-wide distributed detection problem,'' \emph{IEEE Transactions
  on Aerospace and Electronic Systems}, vol.~47, no.~3, pp. 1785--1799, 2011.

\bibitem{Ramdas2017b}
A.~Ramdas, J.~Chen, M.~Wainwright, and M.~Jordan, ``Qu{TE}: Decentralized
  multiple testing on sensor networks with false discovery rate control,'' in
  \emph{IEEE 56th Annual Conference on Decision and Control}, 2017, pp.
  6415--6421.

\bibitem{xiang2019distributed}
Y.~Xiang, ``Distributed false discovery rate control with quantization,'' in
  \emph{2019 IEEE International Symposium on Information Theory}.\hskip 1em
  plus 0.5em minus 0.4em\relax IEEE, 2019, pp. 246--249.

\bibitem{Ramdas2022}
A.~Ramdas, J.~Chen, M.~Wainwright, and M.~Jordan, ``Qu{TE}: Decentralized
  multiple testing on sensor networks with false discovery rate control,''
  \emph{arXiv preprint arXiv:2210.04334}, 2022.

\bibitem{pournaderi2022sample}
M.~Pournaderi and Y.~Xiang, ``Sample-and-forward: Communication-efficient
  control of the false discovery rate in networks,'' \emph{arXiv preprint
  arXiv:2210.02555}, 2022.

\bibitem{chi2007performance}
Z.~Chi, ``On the performance of {FDR} control: constraints and a partial
  solution,'' \emph{The Annals of Statistics}, vol.~35, no.~4, pp. 1409--1431,
  2007.

\bibitem{ARIASCASTRO202142}
E.~Arias-Castro, S.~Chen, and A.~Ying, ``A scan procedure for multiple testing:
  Beyond threshold-type procedures,'' \emph{Journal of Statistical Planning and
  Inference}, vol. 210, pp. 42--52, 2021.

\bibitem{djedouboum2018big}
A.~C. Djedouboum, A.~A. Abba~Ari, A.~M. Gueroui, A.~Mohamadou, and Z.~Aliouat,
  ``Big data collection in large-scale wireless sensor networks,''
  \emph{Sensors}, vol.~18, no.~12, p. 4474, 2018.

\bibitem{ari2016power}
A.~A.~A. Ari, B.~O. Yenke, N.~Labraoui, I.~Damakoa, and A.~Gueroui, ``A power
  efficient cluster-based routing algorithm for wireless sensor networks:
  Honeybees swarm intelligence based approach,'' \emph{Journal of Network and
  Computer Applications}, vol.~69, pp. 77--97, 2016.

\bibitem{bajaber2014efficient}
F.~Bajaber and I.~Awan, ``An efficient cluster-based communication protocol for
  wireless sensor networks,'' \emph{Telecommunication Systems}, vol.~55, pp.
  387--401, 2014.

\bibitem{fascista2022toward}
A.~Fascista, ``Toward integrated large-scale environmental monitoring using
  wsn/uav/crowdsensing: a review of applications, signal processing, and future
  perspectives,'' \emph{Sensors}, vol.~22, no.~5, p. 1824, 2022.

\bibitem{khalifeh2021wireless}
A.~Khalifeh, K.~A. Darabkh, A.~M. Khasawneh, I.~Alqaisieh, M.~Salameh,
  A.~AlAbdala, S.~Alrubaye, A.~Alassaf, S.~Al-HajAli, R.~Al-Wardat
  \emph{et~al.}, ``Wireless sensor networks for smart cities: Network design,
  implementation and performance evaluation,'' \emph{Electronics}, vol.~10,
  no.~2, p. 218, 2021.

\bibitem{misra2008survey}
S.~Misra, M.~Reisslein, and G.~Xue, ``A survey of multimedia streaming in
  wireless sensor networks,'' \emph{IEEE communications surveys \& tutorials},
  vol.~10, no.~4, pp. 18--39, 2008.

\bibitem{rosenblatt2018all}
J.~D. Rosenblatt, L.~Finos, W.~D. Weeda, A.~Solari, and J.~J. Goeman,
  ``All-resolutions inference for brain imaging,'' \emph{Neuroimage}, vol. 181,
  pp. 786--796, 2018.

\bibitem{pournaderi2022communication}
M.~Pournaderi and Y.~Xiang, ``Communication-efficient distributed multiple
  testing for large-scale inference,'' in \emph{IEEE International Symposium on
  Information Theory}, 2022, pp. 1477--1482.

\bibitem{bonferroni1936teoria}
C.~Bonferroni, ``Teoria statistica delle classi e calcolo delle probabilita,''
  \emph{Pubblicazioni del R Istituto Superiore di Scienze Economiche e
  Commericiali di Firenze}, vol.~8, pp. 3--62, 1936.

\bibitem{efron2001empirical}
B.~Efron, R.~Tibshirani, J.~D. Storey, and V.~Tusher, ``Empirical bayes
  analysis of a microarray experiment,'' \emph{Journal of the American
  statistical association}, vol.~96, no. 456, pp. 1151--1160, 2001.

\bibitem{genovese2004stochastic}
C.~Genovese and L.~Wasserman, ``A stochastic process approach to false
  discovery control,'' \emph{The Annals of Statistics}, vol.~32, no.~3, pp.
  1035--1061, 2004.

\bibitem{tarski1955lattice}
A.~Tarski, ``A lattice-theoretical fixpoint theorem and its applications.''
  1955.

\bibitem{hochberg1990more}
Y.~Hochberg and Y.~Benjamini, ``More powerful procedures for multiple
  significance testing,'' \emph{Statistics in Medicine}, vol.~9, no.~7, pp.
  811--818, 1990.

\bibitem{hengartner1995finite}
N.~W. Hengartner and P.~B. Stark, ``Finite-sample confidence envelopes for
  shape-restricted densities,'' \emph{The Annals of Statistics}, pp. 525--550,
  1995.

\bibitem{swanepoel1999limiting}
J.~W. Swanepoel, ``The limiting behavior of a modified maximal symmetric $2 s
  $-spacing with applications,'' \emph{The Annals of Statistics}, vol.~27,
  no.~1, pp. 24--35, 1999.

\bibitem{benjamini2000adaptive}
Y.~Benjamini and Y.~Hochberg, ``On the adaptive control of the false discovery
  rate in multiple testing with independent statistics,'' \emph{Journal of
  Educational and Behavioral Statistics}, vol.~25, no.~1, pp. 60--83, 2000.

\bibitem{bahadur1966note}
R.~R. Bahadur, ``A note on quantiles in large samples,'' \emph{The Annals of
  Mathematical Statistics}, vol.~37, no.~3, pp. 577--580, 1966.

\bibitem{tao2015analysis}
T.~Tao, ``Analysis {II}, texts and readings in mathematics,'' 2015.

\bibitem{cai2009simultaneous}
T.~T. Cai and W.~Sun, ``Simultaneous testing of grouped hypotheses: Finding
  needles in multiple haystacks,'' \emph{Journal of the American Statistical
  Association}, vol. 104, no. 488, pp. 1467--1481, 2009.

\bibitem{Storey03thepositive}
D.~Storey, ``The positive false discovery rate: A bayesian interpretation and
  the q-value,'' \emph{Annals of Statistics}, vol.~31, pp. 2013--2035, 2003.

\bibitem{davies1947statistical}
O.~L. Davies \emph{et~al.}, ``Statistical methods in research and production.''
  \emph{Statistical methods in research and production.}, 1947.

\bibitem{freedman1981histogram}
D.~Freedman and P.~Diaconis, ``On the histogram as a density estimator: L 2
  theory,'' \emph{Zeitschrift f{\"u}r Wahrscheinlichkeitstheorie und verwandte
  Gebiete}, vol.~57, no.~4, pp. 453--476, 1981.

\bibitem{smirnov1944approximate}
N.~V. Smirnov, ``Approximate laws of distribution of random variables from
  empirical data,'' \emph{Uspekhi Matematicheskikh Nauk}, no.~10, pp. 179--206,
  1944.

\end{thebibliography}

% \begin{IEEEbiographynophoto}{Mehrdad Pournaderi}
% (S’21) received his B.S. degree in Electrical Engineering from the University of Tehran, Iran, in 2019 and the degree of Master of Statistics (Mathematics) from the University of Utah, USA, in 2022. He is currently a Ph.D. student in the Department of Electrical and Computer Engineering at the University of Utah.
% \end{IEEEbiographynophoto}

% \begin{IEEEbiographynophoto}{Yu Xiang}
% (S’10–M’15) received his B.E. degree with the highest distinction in Telecommunications Engineering from Xidian University, China, and his Ph.D. degree in Electrical and Computer Engineering from the University of California, San Diego. He was a postdoctoral fellow at Harvard University, School of Engineering and Applied Sciences. He is currently an assistant professor in the Department of Electrical and Computer Engineering at the University of Utah. His research interests include statistical signal processing, information theory, and machine learning.
% \end{IEEEbiographynophoto}

\end{document}